\documentclass[a4paper]{article} 

\usepackage{amsfonts}
\usepackage{amsmath} 
\usepackage{amssymb}
\usepackage{graphicx}
\usepackage{amsthm}
\usepackage{dsfont}    % for making the fat 1 like in characteristic function
\usepackage{microtype} % Apperaently increases the asthetics of the document

\usepackage{caption}
\usepackage{subcaption}
\usepackage{color}
\usepackage{multirow}
\usepackage{url}

\usepackage[utf8]{inputenc}
\usepackage[english]{babel}

% \usepackage[leqno]{amsmath}
% 
% \makeatletter
% \newcommand{\leqnomode}{\tagsleft@true}
% \newcommand{\reqnomode}{\tagsleft@false}
% \makeatother

\usepackage{filecontents}
 
\newtheorem{theorem}{Theorem}[section]

\newtheorem{lemma}[theorem]{Lemma}

\newtheorem{proposition}[theorem]{Proposition}

\theoremstyle{definition}
\newtheorem{definition}{Definition}[section]
 
\theoremstyle{remark}

\newcommand{\onehalf}{\frac{1}{2}}
\newcommand{\N}{\mathbb{N}}
\newcommand{\R}{\mathbb{R}}
\newcommand{\M}{\mathrm{M}}

\newcommand{\G}{\mathcal{G}}
\newcommand{\E}{\mathcal{E}}

\newcommand{\SO}{\mathrm{SO}(3)}

\newcommand{\var}{\mathrm{Var}}

\newcommand{\ra}{\rightarrow}

\newcommand{\fatone}{\mathds{1}}
\newcommand{\dx}{\ \mathrm{d}x}
\newcommand{\dy}{\ \mathrm{d}y}

\newcommand{\dt}{\ \mathrm{d}t}

\newcommand{\dm}{\ \mathrm{d}\mu}

% Standard title, author etc.

% Put text on the left-hand and right-hand side of the header

% Allows fancy stuff in the page header

\begin{document}
	\title{Approximation to uniform distribution in  $\SO$}
	%\title{Properties of determinantal point processes on SO(3)}
	\author{Carlos Beltrán and Damir Ferizovi\'{c}\thanks{The first author was supported by the Spanish ``Ministerio de
			Econom\'ia y Competitividad'' under projects MTM2017-83816-P and
			MTM2017-90682-REDT (Red ALAMA), as well as by the Banco Santander and
			Universidad de Cantabria under project 21.SI01.64658.
			The second named author thankfully acknowledges support
	by the Austrian Science Fund (FWF): F5503 ``Quasi-Monte Carlo Methods'' and by the NAWI Graz Funding.  }}
%doctoral school "Discrete Mathematics".
\maketitle
\begin{abstract}
Using the theory of determinantal point processes we give upper bounds for the Green and Riesz energies for the rotation group $\SO$, with Riesz parameter up to 3.
The Green function is computed explicitly, and a lower bound for the Green energy is established, enabling comparison of uniform point constructions on $\SO$.
 The variance of rotation matrices sampled by the determinantal point process is estimated, and formulas for the $L^2$-norm of Gegenbauer polynomials with index 2 are deduced, which might be of independent interest. Also a simple but effective algorithm to sample points in $\SO$ is given.
  %and while an algorithm seems feasible, we have not pursued this direction, but rather used properties of the sample. 
 
 %We believe that this exposition can be used as a blueprint for other spaces. 
% We compute a simple expression for Green's function on the Lie group SO(3) using Wigner D-functions, 
% the eigenfunctions 
% of the corresponding Laplace-Beltrami Operator, and use results due to Macchi-Soshnikov and Elkies
% to calculate upper and lower bounds of matching order for the Green energy there on. 
\end{abstract}

\section{Introduction and Results}
In this paper we study properties of a finite collection of randomly generated points in $\SO$, the rotation group of 3-dimensional Euclidean space, sampled by \emph{determinantal point processes} (dpp). It turns out that they tend to be well distributed, a property that is important for discretization, integration and approximation. Our goal is not to compute actual collections of evenly distributed rotation matrices, but rather to provide a comparison tool that allows to decide the effectiveness of any given method.

If one is given an algorithm to generate finite (but arbitrarily large) collections of matrices, common methods to measure how well distributed these are, include either calculating some discrete energy of them or looking at the speed of convergence of the counting measure towards uniform measure. Most work in this direction has been done on spheres of various dimensions, see for instance \cite{Beltran}, etc.; the particular question of finding collections of points with very small energy was posed by Shub and Smale in \cite{Shub} and is nowadays known as Smale's 7th problem \cite{Smale}.

In order to extend part of the work done on spheres to the context of rotation matrices, we will obtain bounds on various energies for points generated through the method of dpp, which are technically speaking counting measures where one identifies them with their set of atoms. In few words, such a process is obtained by taking a Hilbert space $\mathcal{H}(X)$ of an underlying measure space $(X,\mu)$ and an $N$-dimensional subspace $V\subset\mathcal{H}(X)$, with projection kernel $\mathcal{K}$ onto $V$ -- then, under mild conditions on $X$, one is guaranteed  almost surely the existence of such a process with $N$ distinct points in $X$ associated to $\mathcal{K}$. 

The theory of those processes has been developed in \cite{GAF}; there one also finds a pseudo-code which  samples points based on the dpp -- which seems hard to implement. A main feature of the underlying points is that they tend to ``repel'' each other, and hence have become the theoretical basis of construction of well-distributed points on various symmetric spaces, see for instance \cite{ASZ14,Etayo,Beltran,Marzo}. 

Since one can sometimes compute the expected value of the energy of points coming from these processes with high precision, they have been used as a tool to understand the asymptotic properties of the discrete energy in that context; and in particular, for even dimensional spheres with exception of the usual $2$-sphere, the best known bounds for some energies have been proved using this approach.

%We will employ the same method for $\SO$, delivering simultaneously a proof of concept for less symmetrical spaces.

%We first consider the (discrete)...
We will employ the same method for $\SO$, considering first the (discrete) {\em  Riesz $s$-energy} for $A=\{\alpha_1,\ldots,\alpha_N\}\subset\SO$: 
$$E_{R}^s(A):=\sum_{j\neq k} \frac{1}{\|\alpha_j-\alpha_k\|_F^s},$$
with $\alpha_j$ being thought of as rotation matrices, $\|\cdot\|_F$ being the Frobenius or $L^2$-norm, and $s\in(0,3]$. In contrast to this, the \emph{continuous Riesz $s$-energy} is given by replacing the double sum by the double integral over $\SO$.
We further set 
$$\mathcal{E}_{R}^s(N)=\inf_{|A|=N}E_{R}^s(A) .$$ 

The investigation of these sums is very popular and results describe the behavior of the two leading terms. This seems particularly interesting in case $s$ equals the dimension, where we have following result.
\begin{theorem}\label{thm_Riesz3}
Let $N=\binom{2L+3}{3}$ for $L\in\N$, then 
the Riesz 3-energy satisfies
% $$ \mathcal{E}_{R}^3(N)\leq \frac{1}{12\sqrt{2}\pi}N^2\log(N)+
%   \Big(\gamma+\log\Big(8\sqrt[3]{\tfrac{3}{4}}\Big)-\frac{7}{4}\Big)\frac{N^2}{4\sqrt{2}\pi}+O(N^{5/3}\log(N)).$$
$$ 12\sqrt{2}\pi\cdot\mathcal{E}_{R}^3(N)\ \leq\ N^2\log(N)+
\big(3\gamma+\log(8^2\cdot6)-\tfrac{21}{4}\big)N^2+O(N^{5/3}\log(N)).$$
\end{theorem}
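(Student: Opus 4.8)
The plan is to bound $\mathcal{E}_R^3(N)$ by the expected $3$-energy of a determinantal point process, and then reduce that expectation to a one-dimensional integral whose asymptotics follow from elementary identities.

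\emph{Step 1: the process, and reduction to an integral.} I would take $V_L\subset L^2(\SO)$ to be the span of all matrix coefficients of irreducible representations of dimension at most $2L+1$. Schur orthogonality gives $\dim V_L=\sum_{\ell=0}^L(2\ell+1)^2=\binom{2L+3}{3}=N$, and the reproducing kernel is the bi-invariant $\mathcal{K}(x,y)=\sum_{\ell=0}^L(2\ell+1)\chi_\ell(xy^{-1})$, with $\chi_\ell$ the character of the $(2\ell+1)$-dimensional representation. By \cite{GAF} the determinantal process with this projection kernel exists and a.s.\ has $N$ distinct points $A=\{\alpha_1,\dots,\alpha_N\}$, so $\mathcal{E}_R^3(N)\le\mathbb{E}[E_R^3(A)]$. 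The two-point correlation formula together with $\mathcal{K}(x,x)\equiv N$ (homogeneity) and invariance of the inner integral yields $\mathbb{E}[E_R^3(A)]=\int_{\SO}\big(N^2-\mathcal{K}(I,g)^2\big)\,\|I-g\|_F^{-3}\dV(g)$, which is finite since $N^2-\mathcal{K}(I,g)^2$ vanishes to second order at $g=I$. Using $\|I-g\|_F^2=6-2\operatorname{tr}(g)=8\sin^2(\theta/2)$ for a rotation by angle $\theta$, the Weyl integration formula $\int_{\SO}f\dV=\tfrac{2}{\pi}\int_0^\pi f(\theta)\sin^2(\theta/2)\dt$ for class functions, and $K(\theta):=\mathcal{K}(I,g)=\sum_{\ell=0}^L(2\ell+1)\tfrac{\sin((2\ell+1)\theta/2)}{\sin(\theta/2)}$, I get $12\sqrt{2}\pi\cdot\mathcal{E}_R^3(N)\le\tfrac32\,I$ with
$$ I:=\int_0^\pi\frac{N^2-K(\theta)^2}{\sin(\theta/2)}\dt, $$
so it remains to show $I=\tfrac23N^2\log N+\tfrac23\big(3\gamma+\log(8^2\cdot6)-\tfrac{21}{4}\big)N^2+O(N^{5/3}\log N)$.

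\emph{Step 2: reduction to combinatorial sums.} Substituting $\varphi=\theta/2$ and using $(2\ell+1)-\tfrac{\sin((2\ell+1)\varphi)}{\sin\varphi}=2\sum_{j=1}^{\ell}(1-\cos2j\varphi)$, I would write $N-K(2\varphi)=2\sum_{j=1}^L b_j(1-\cos2j\varphi)$ with $b_j:=(L+1)^2-j^2$, and expand $I=4N\int_0^{\pi/2}\tfrac{N-K(2\varphi)}{\sin\varphi}\dvp-2\int_0^{\pi/2}\tfrac{(N-K(2\varphi))^2}{\sin\varphi}\dvp$ via $N^2-K^2=2N(N-K)-(N-K)^2$. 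Both integrals are explicit: from $\int_0^{\pi/2}\tfrac{1-\cos2m\varphi}{\sin\varphi}\dvp=2h_m$, $h_m:=\sum_{i=1}^m\tfrac1{2i-1}$ (because $\tfrac{1-\cos2m\varphi}{\sin\varphi}=2\sum_{i=1}^m\sin((2i-1)\varphi)$ and $\cos((2i-1)\tfrac\pi2)=0$), together with $(1-\cos2j\varphi)(1-\cos2j'\varphi)=(1-\cos2j\varphi)+(1-\cos2j'\varphi)-\tfrac12(1-\cos2(j-j')\varphi)-\tfrac12(1-\cos2(j+j')\varphi)$, one obtains, with $B:=\sum_{j=1}^Lb_j=\tfrac{L(L+1)(4L+5)}{6}$,
$$ I=16N\sum_{j=1}^Lb_jh_j-8\Big(4B\sum_{j=1}^Lb_jh_j-\sum_{j\neq j'}b_jb_{j'}h_{|j-j'|}-\sum_{j,j'=1}^Lb_jb_{j'}h_{j+j'}\Big). $$

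\emph{Step 3: asymptotics.} I would insert $h_m=\tfrac12\log m+\log2+\tfrac\gamma2+O(1/m)$ and use $b_j=L^2\big(1-(j/L)^2\big)+O(L)$ uniformly, so each of the three sums is, up to an error $O(N^{5/3}\log N)$, a term proportional to $N^2\log L$ plus a Riemann sum for an explicit integral against the weight $1-x^2$ on $[0,1]$, resp.\ $(1-x^2)(1-y^2)$ on $[0,1]^2$; the relevant constants come from $\int_0^1(1-x^2)\log x\dx=-\tfrac89$ and from $\int_0^1\!\int_0^1(1-x^2)(1-y^2)\log(x\pm y)\dx\dy$ — of which only the sum is needed, $\int_0^1\!\int_0^1(1-x^2)(1-y^2)\log|x^2-y^2|\dx\dy=\tfrac89\log2-\tfrac{14}{9}$ (elementary, e.g.\ via $u=x^2$, $v=y^2$). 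Using $L^3=\tfrac34N+O(N^{2/3})$ and $B=\tfrac12N+O(N^{2/3})$ to turn $\log L$ into $\tfrac13\log N$, the $N^2\log N$ coefficients collapse to $\tfrac23$ and the $N^2$ coefficients to exactly $\tfrac23\big(3\gamma+\log(8^2\cdot6)-\tfrac{21}{4}\big)$ (note $8^2\cdot6=384=2^7\cdot3$), which finishes the proof.

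\emph{The main obstacle} is Step 3: since $b_j\sim L^2$ and there are two summation indices, the Riemann-sum error in the double sums is a priori of size $L^5\log L\asymp N^{5/3}\log N$, exactly the order of the claimed remainder, so the Euler--Maclaurin / Abel-summation estimates must be done carefully (the boundary term is controlled because $b_L=O(L)$), and the near-diagonal range $j\approx j'$ in $\sum_{j\neq j'}b_jb_{j'}h_{|j-j'|}$, where $h_{|j-j'|}=O(1)$ on $O(L)$ pairs, is what forces the logarithm in the error term. By contrast Steps 1--2 and the explicit planar integrals of Step 3 are elementary, although pinning down the exact constant $-\tfrac{21}{4}$ requires a somewhat lengthy computation.
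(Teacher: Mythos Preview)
Your proposal is correct and arrives at the same integral $(\star)$ as the paper (your $I$ is $2(\star)$), but the two evaluations proceed differently. The paper recognizes $K(2\varphi)=\mathcal{C}_{2L}^{(2)}(\cos\varphi)$, expands $[\mathcal{C}_{2L}^{(2)}(\cos t)]^2$ via a combinatorial cosine expansion (their Lemma~\ref{lem_GegenbauerCosineSquared}, resting on Dette's identity and a permutation lemma for the coefficients $c_{j,k}$), reduces the integral to $\sum_r\psi(r+\tfrac12)\sum_u c_{r+u,u}$, and finishes with an Euler--Maclaurin type estimate for $\sum_{k\le n}k^m\psi(k+\tfrac12)$. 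You instead write $N-K(2\varphi)=2\sum_j b_j(1-\cos 2j\varphi)$ with the explicit $b_j=(L+1)^2-j^2$, split $N^2-K^2=2N(N-K)-(N-K)^2$, and resolve the resulting sums by Riemann-sum comparison to the planar integrals $\int_0^1(1-x^2)\log x\,dx$ and $\iint_{[0,1]^2}(1-x^2)(1-y^2)\log|x^2-y^2|\,dx\,dy$.

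What each buys: your route is more elementary---it needs only the character formula and product-to-sum, never the Gegenbauer/Dette machinery---and the cancellation $16N-32B=16(L+1)^2$ immediately explains why the $\sum_j b_jh_j$ term drops to the remainder. The paper's route keeps everything as exact polynomial identities in $L$ until the very end, so the $N^2$ constant falls out of closed-form digamma sums rather than from the value of a double integral; this makes the error control more mechanical than your Step~3, where (as you correctly flag) the $O(1/m)$ tail in $h_m$ and the logarithmic singularity of $\log|x-y|$ on the diagonal must be handled with care to stay inside $O(N^{5/3}\log N)$. Both share the key evaluation $\int_0^{\pi/2}\tfrac{1-\cos 2m\varphi}{\sin\varphi}\,d\varphi=2h_m$, which the paper phrases as $\int_0^1\mathcal{U}_{m-1}(t)^2\,dt$.
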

The right-hand side is the expected value of the Riesz 3-energy with underlying points generated by a dpp. Now, given any particular method of generating finite point sets in $\SO$, one can compute,  numerically, their $3$-energy and compare it to the value above to decide if the points are evenly distributed. This comparison would clearly rise in significance at the presence of lower bounds on the $3$-energy, which do not seem easy to find. For this reason we turn our attention to the Green energy, where we succeeded in this endeavor.

To recap, a Green function $\mathcal{G}_L$ for a linear differential operator $L$ is an integral kernel to produce solutions for inhomogeneous differential equations and is unique modulo kern($L$). In our case, we deal with the Laplace-Beltrami operator $\Delta_g$, and note that kern($\Delta_g$) is the set of harmonic functions -- which are just constants on a compact Riemannian manifold ($\M,g$).
We will  construct $\mathcal{G}=\mathcal{G}_{\Delta_g}$ in such a way, that it integrates to zero and speak of \emph{the} Green function.

The (discrete) {\em Green energy} for $A=\{\alpha_1,\ldots,\alpha_N\}\subset\SO$ will be given by
$$E_{\G}(A):=\sum_{i\neq j}\G(\alpha_i,\alpha_j), $$
% In this note we consider the problem of distributing a number of points in the special orthogonal group $\SO$. We focus on a particular simple measure of the convenience of a point set given by the value of its {\em Green energy} which is defined as follows: let $\G$ be the Green function associated to the Laplace-Beltrami operator in $\SO$. Then, the energy of a set of points $x_1,\ldots,x_N$ is:
% \[
% \E(x_1,\ldots,x_N)=\sum_{i\neq j}\G(x_i,x_j).
% \]
and we let 
$$\mathcal{E}_{\G}(N)=\inf_{|A|=N}E_{\G}(A).$$
It is noteworthy that $\G(\alpha,\beta)\cdot d(\alpha,\beta)\approx1$ for $\alpha$ close to $\beta$ in geodesic distance $d(\cdot,\cdot)$, and a set of points with small Green energy is hence expected to be well-distributed, which is indeed the main result in \cite{Juan}: We know that if  $\{\alpha_1,\ldots,\alpha_N\}$ attains the minimal possible energy, then the associated discrete measure approaches the uniform distribution in $\SO$ as $N\to\infty$.
A set of points with small Green energy is also expected to be well-separated, see \cite{Criado}.

%For $N\geq2$, let $m_N$ be the minimal possible value for the Green energy in $\SO$.
Now, $\G(\cdot,\beta)$ is for any $\beta\in\SO$ a zero mean function by definition, and if  $\alpha_1,\ldots,\alpha_N$ were simply chosen uniformly and independently in $\SO$, then the expected value of the Green energy would equal $0$, so in particular we have $\mathcal{E}_{\G}(N)\leq0$. In this note we prove the following much stronger result.
\begin{theorem}\label{thm_Green} Let $N=\binom{2L+3}{3}$ for $L\in\N$, then
	\begin{equation*}
	-3\sqrt[3]{\pi}N^{4/3}+
O(N)\leq\ \mathcal{E}_{\G}(N)\ \leq -4\left(\tfrac{3}{4}\right)^{4/3}N^{4/3}+
O(N).
	\end{equation*}
\end{theorem}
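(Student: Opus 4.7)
\noindent\textit{Proof proposal.}
The plan is to prove the upper and lower bounds with complementary techniques: the upper bound by computing the expected Green energy of a determinantal point process, and the lower bound by a potential-theoretic smoothing argument.

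For the upper bound, I would sample $A = \{\alpha_1, \ldots, \alpha_N\}$ from the dpp on $\SO$ whose kernel $\mathcal{K}$ projects onto the $SO(3)$-invariant subspace $V$ spanned by the irreducible representations of degree $\ell \leq L$, giving $N = \binom{2L+3}{3}$. Since $\mathcal{E}_{\G}(N)$ is bounded above by the expected energy of this sample, it suffices to evaluate
\begin{equation*}
\mathbb{E}[E_{\G}(A)] \;=\; \int\int \G(x,y)\bigl[\mathcal{K}(x,x)\mathcal{K}(y,y) - |\mathcal{K}(x,y)|^2\bigr]\dm(x)\dm(y) \;=\; -\int\int \G(x,y)|\mathcal{K}(x,y)|^2\dm(x)\dm(y),
\end{equation*}
where the last equality uses the zero-mean property of $\G$ and the fact that $\mathcal{K}(x,x)$ is constant on the homogeneous space. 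Since both $\G$ and $\mathcal{K}$ are biinvariant, this integral reduces to one over the group that can be evaluated by expanding $\G$ and $\mathcal{K}$ in characters of irreducible $SO(3)$-representations and using the Clebsch-Gordan decomposition for tensor squares, much as in Theorem \ref{thm_Riesz3}. The asymptotic works out to $-4(3/4)^{4/3}N^{4/3} + O(N)$.

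For the lower bound, the main obstacle is the singularity of $\G$ at the diagonal. Given any $N$-point configuration $A$, I would replace each $\delta_{\alpha_i}$ by the uniform probability measure $\mu_i$ on the geodesic ball $B(\alpha_i, r)$, where $r$ is chosen so that $\mathrm{vol}(B(\alpha_i, r)) = \mathrm{vol}(\SO)/N$ (hence $r\sim cN^{-1/3}$). Setting $\tilde\mu = \tfrac{1}{N}\sum_i \mu_i$, the positivity of the Green quadratic form (all spectral eigenvalues $\lambda_k > 0$) yields $\int\int \G \, d\tilde\mu\, d\tilde\mu \geq 0$. Separating diagonal from off-diagonal and using $SO(3)$-homogeneity to equate self-energies,
\begin{equation*}
\sum_{i \neq j} \int\int \G \, d\mu_i\, d\mu_j \;\geq\; -N \int\int \G\, d\mu_1\, d\mu_1,
\end{equation*}
and the right side is the classical electrostatic self-energy of a uniform ball with short-distance asymptotic $\G \sim 1/(4\pi d)$, of order $-N/r \sim -N^{4/3}$. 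To pass back to the discrete energy, I would use that $\G(\cdot, y)$ is harmonic off $y$ up to the constant $1/\mathrm{vol}(\SO)$: the Riemannian mean-value property converts $\int\int \G\, d\mu_i\, d\mu_j$ into $\G(\alpha_i, \alpha_j)$ plus an $O(r^2)$ correction for non-overlapping balls, while close pairs with $d(\alpha_i, \alpha_j) \leq 2r$ contribute positively to $\sum \G$ and may be dropped for a lower bound. Substituting the chosen $r$ and tracking constants carefully yields $-3\sqrt[3]{\pi}\,N^{4/3}+O(N)$.

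The hardest step will be matching the leading constant $-3\sqrt[3]{\pi}$. The sum of $O(r^2)$ off-diagonal mean-value errors is itself of order $N^2 r^2 \sim N^{4/3}$, comparable in order to the main term, so both contributions enter the leading constant; carefully tracking the coefficient in the short-distance asymptotic of $\G$ against $\mathrm{vol}(\SO)$ under the paper's choice of bi-invariant metric is what produces the stated constant. A robust alternative, should the ball-smoothing analysis prove delicate, would be spectral truncation at level $K \sim N^{1/3}$: replace $\G$ by $\G_K = \sum_{\lambda_k \leq K^2} \phi_k \otimes \phi_k/\lambda_k$, use the positivity of the associated quadratic form on measures, and bound the tail $\G - \G_K$ using its diagonal behavior together with a nearest-neighbor separation estimate. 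Notice that the ratio of the two constants in the theorem is exactly $(4\pi/3)^{1/3}$, the cube root of the volume of the unit ball in $\R^3$, reflecting the mismatch between ``spherical packing'' used in the lower bound and the geometry selected by the dpp in the upper bound.
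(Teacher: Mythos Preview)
Your upper-bound strategy matches the paper's: compute the expected Green energy of the dpp on $H_L$ and use it as an upper bound for $\mathcal{E}_\G(N)$. The paper, however, does not pass through Clebsch--Gordan; it exploits the explicit closed form $\G(\alpha,\beta)=(\pi-\omega)\cot(\omega/2)-1$ from Lemma~\ref{lem:Green}, reduces via \eqref{eq_integralSO} to a one-dimensional integral against $[\mathcal{C}_{2L}^{(2)}(\cos(t/2))]^2\sin^2(t/2)$, and evaluates it with the Gegenbauer $L^2$-identities of Appendix~\ref{appendix}. Your character-theoretic route could in principle arrive at the same answer, but it is less direct.

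For the lower bound the paper takes a genuinely different path, and your primary approach has a gap at exactly the point you flag. The paper uses heat-kernel regularization
\[
\G_t(\alpha,\beta)=\sum_{\ell\geq 1}e^{-\ell(\ell+1)t}\,\tfrac{2\ell+1}{\ell(\ell+1)}\,\mathcal{U}_{2\ell}\big(\cos\tfrac{\omega}{2}\big),
\]
and the key step is Elkies' pointwise inequality $\G(\alpha,\beta)\geq \G_t(\alpha,\beta)-t$, proved via the strong maximum principle for the heat equation on $\SO$. From there, positivity of the spectral form gives $\sum_{s\neq k}\G_{2t}(\alpha_s,\alpha_k)\geq -N\,\G_{2t}(\alpha,\alpha)$, and since $\G_{2t}(\alpha,\alpha)=2\sqrt{\pi/(2t)}+O(1)$ one simply optimizes in $t$ to obtain the constant $-3\sqrt[3]{\pi}$. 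No information about the configuration is used beyond its cardinality, so the bound holds for all $N$, not only $N=\binom{2L+3}{3}$.

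Your ball-smoothing argument cannot recover this constant as written. Away from the diagonal $\G(\cdot,y)$ satisfies $\Delta_g\G=-1/\mathrm{vol}$, so in the Euclidean sense it is \emph{sub}harmonic there; the mean-value inequality then gives $\G(\alpha_i,\alpha_j)\leq \int\!\!\int \G\,d\mu_i\,d\mu_j$ (up to curvature corrections), which is the wrong direction for a lower bound. Treating the discrepancy as an $O(r^2)$ error instead of an inequality leaves you with an $O(N^2r^2)=O(N^{4/3})$ term of the same order as the main term, so the leading constant is not determined. Your fallback of hard spectral truncation at level $K$ has a related problem: there is no clean pointwise comparison between $\G$ and $\G_K$ without something like the maximum-principle argument, and invoking a nearest-neighbour separation estimate is unavailable for an \emph{arbitrary} configuration (which is what the infimum requires). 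The heat-kernel smoothing plus Elkies' lemma is precisely the device that turns the spectral positivity you want to use into a pointwise inequality with a controllable loss of exactly $t$, and optimizing that single parameter is what produces $-3\sqrt[3]{\pi}$.
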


% Our method of proof follows the same ideas used in \cite{ASZ14,Beltran,Marzo,Etayo}: we will describe a certain random procedure (a Determinantal Point Process) whose expected energy can be computed and equals the right-hand side of \eqref{eq:main}, in particular proving that a sequence with at most that value of the energy exists. 
The right-hand side is the expected value of the Green energy with underlying points generated by a  dpp, and that is where we have the restriction for $N$, as the process is related to subspaces $V$ that we can project onto. The lower bound is valid for all $N$.
%\begin{remark}
% Theorem \ref{thm_Green} is now a useful tool to measure how well distributed a given point set is in $\SO$........
%\end{remark}

As mentioned above, another classical measure of the distribution properties of $\alpha_1,\ldots,\alpha_N$ is the speed of convergence to uniform measure, i.e. choosing some range sets $\{A_j\}_{j\in I}$ measurable w.r.t. Haar measure $\mu$ and investigating the behavior of 
$$\sup_{j\in I}\Big| \#\{k\ :\ \alpha_k\in A_j\} -N\cdot\mu(A_j)\Big|$$
as $N$ grows large. We will tackle this problem probabilistically, where we turn the count of points in $A_j$ into a random variable.

In analogy to spherical caps on spheres, the range sets for $\SO$ will be chosen to be balls $B(\alpha,\varepsilon):=\{\beta\in\SO\ :\ \omega(\alpha^{-1}\beta)<\varepsilon\}$ for $\varepsilon\in\big(0,\frac{\pi}{2}\big)$ and $\omega(\cdot)$ being the rotation angle distance introduced in the following sections. For given random points $\alpha_1,\ldots,\alpha_N$, we define random variables via characteristic functions
$$X_{\alpha,\varepsilon}^k=\chi_{B(\alpha,\varepsilon)}(\alpha_k)\hspace{0.5cm} \mbox{ and }\hspace{0.5cm} \eta_{\alpha,\varepsilon}=\sum_{k=1}^N X_{\alpha,\varepsilon}^k.$$
Now, for a collection of random uniform points, chosen independently in $\SO$ we have
$$\mathbb{E}[\eta_{\alpha,\varepsilon}]=N\mu(B(\alpha,\varepsilon))=N\mu(B(\fatone,\varepsilon)),$$
and the variance can also be computed from the independence of the points:
$$
\var[\eta_{\alpha,\varepsilon}]=\mathbb{E}[\eta_{\alpha,\varepsilon}^2]-\mathbb{E}[\eta_{\alpha,\varepsilon}]^2=N\left(\mu(B(\fatone,\varepsilon))-\mu(B(\fatone,\varepsilon))^2\right).
$$
We are able to bound the variance of this quantity for our dpp, proving that it is much smaller than in the previous case.
\begin{theorem}\label{thm_variance}
  Let $N=\binom{2L+3}{3}$ for $L\in\N$, and $\varepsilon\in\big(0,\frac{\pi}{2}\big)$ be fixed, then the points generated by our determinantal point process satisfy
  $$
  \mathbb{E}[\eta_{\alpha,\varepsilon}]=N\mu(B(\alpha,\varepsilon))=N\mu(B(\fatone,\varepsilon)),
  $$
  and moreover %for some constant $C_{\varepsilon}\approx\frac{1}{\sin(2\varepsilon)}$,
 $$\var(\eta_{\alpha,\varepsilon})= O\Big(\frac{\varepsilon^2}{\cos(\varepsilon)}\Big)\cdot N^{2/3}\log(N).$$
\end{theorem}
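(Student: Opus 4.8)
The starting point is the standard variance formula for a determinantal point process whose kernel $\mathcal{K}$ is the orthogonal projection onto the $N$-dimensional space $V$: for a measurable $B\subseteq\SO$ the count $\eta_B$ of sampled points lying in $B$ satisfies $\mathbb{E}[\eta_B]=\int_B\mathcal{K}(x,x)\dm(x)$ and
$$\var(\eta_B)=\int_B\mathcal{K}(x,x)\dm(x)-\int_B\int_B|\mathcal{K}(x,y)|^2\dm(x)\dm(y),$$
and since $\mathcal{K}$ is a projection, $\int_{\SO}|\mathcal{K}(x,y)|^2\dm(y)=\mathcal{K}(x,x)$, which rewrites the variance as the boundary term $\var(\eta_B)=\int_B\int_{\SO\setminus B}|\mathcal{K}(x,y)|^2\dm(x)\dm(y)$. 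Here $V$ is spanned by the matrix coefficients of the irreducible representations of $\SO$ of dimension at most $2L+1$ (which is exactly why $N=\sum_{\ell=0}^{L}(2\ell+1)^2=\binom{2L+3}{3}$), so by Peter--Weyl the diagonal is constant, $\mathcal{K}(x,x)\equiv N$ -- this already yields $\mathbb{E}[\eta_{\alpha,\varepsilon}]=N\mu(B(\alpha,\varepsilon))$, independent of $\alpha$ by invariance -- and $\mathcal{K}(x,y)=k_L(\omega(x^{-1}y))$, where $k_L(\omega)=\sum_{\ell=0}^L(2\ell+1)\chi_\ell(\omega)$ and $\chi_\ell(\omega)=\sin((2\ell+1)\omega/2)/\sin(\omega/2)$ is the character of the $(2\ell+1)$-dimensional representation evaluated on a rotation of angle $\omega$.

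Next I would collapse the double integral to a single one. Using bi-invariance of $\mu$ and of $\omega$, take $\alpha=\fatone$, substitute $y=xz$, and integrate over $x$ to obtain
$$\var(\eta_{\fatone,\varepsilon})=\int_{\SO}k_L(\omega(z))^2\,g(z)\dm(z),\qquad g(z):=\mu\big(B(\fatone,\varepsilon)\setminus B(\fatone,\varepsilon)z^{-1}\big).$$
The point is that $g$ is a boundary-layer quantity: if $x\in B(\fatone,\varepsilon)$ but $xz\notin B(\fatone,\varepsilon)$, the triangle inequality for $\omega$ forces $\omega(x)\ge\varepsilon-\omega(z)$, so $g(z)$ is at most the Haar measure of the shell $\{\varepsilon-\omega(z)\le\omega(x)<\varepsilon\}$ and, trivially, at most $\mu(B(\fatone,\varepsilon))$. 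Writing Haar measure in axis--angle coordinates as $\dm=\tfrac{2}{\pi}\sin^2(\omega/2)\,d\omega$ times the normalized measure on the axis, and using monotonicity of $\sin^2(\cdot/2)$ on $[0,\pi]$, this shell has measure $O\big(\sin^2(\varepsilon/2)\,\min(\omega(z),\varepsilon)\big)$. The problem therefore reduces to estimating $\sin^2(\varepsilon/2)\int_0^\pi k_L(\omega)^2\min(\omega,\varepsilon)\sin^2(\omega/2)\,d\omega$.

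The analytic heart of the matter is this last integral, which I would split at $\omega\asymp 1/L$. On $\omega\lesssim 1/L$ the crude bound $|k_L(\omega)|\le k_L(0)=N$, combined with $N\asymp L^3$, contributes $O(L^2)$. On $\omega\gtrsim 1/L$ one needs the decay estimate $|k_L(\omega)|=O\big(L/\sin^2(\omega/2)\big)$, valid uniformly up to $\omega=\pi$; this follows by summing $\sum_{\ell=0}^L(2\ell+1)\sin((2\ell+1)\omega/2)$ in closed form as minus the derivative of a finite geometric sum, which exhibits $k_L$ as a single trigonometric quotient whose numerator is $O(L\sin(\omega/2))$ and from which $k_L(0)=N$ is recovered by Taylor expansion. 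Feeding $k_L(\omega)^2=O(L^2/\sin^4(\omega/2))$ and $\sin(\omega/2)\asymp\omega$ into the tail turns it into $O\big(L^2\int_{1/L}^{\pi}\min(\omega,\varepsilon)\,\omega^{-2}\,d\omega\big)=O(L^2\log L)$, the logarithm coming exactly from the scale-invariant window $1/L\ll\omega\ll\varepsilon$. Putting the two ranges together, multiplying back by $\sin^2(\varepsilon/2)=O(\varepsilon^2)$, and using $L\asymp N^{1/3}$ gives a bound of the shape $O(\varepsilon^2 N^{2/3}\log N)$; the precise $\varepsilon$-dependence $\varepsilon^2/\cos\varepsilon$ in the statement is what drops out when the overlap estimate for $g(z)$ -- including the regime $\omega(z)\to\varepsilon$, where one uses $g(z)\le\mu(B(\fatone,\varepsilon))=(\varepsilon-\sin\varepsilon)/\pi$ -- is carried out with the exact constants. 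The main obstacle, I expect, is exactly the off-diagonal decay bound for $k_L$: one must show that the near-diagonal spike of $k_L$ -- which carries essentially the whole $L^2$-mass $N=\int_{\SO}|\mathcal{K}(x,y)|^2\dm(y)$ -- is killed by the vanishing of $g(z)$ as $z\to\fatone$, and obtain the correct power of $L$ and the correct $\sin^{-2}(\omega/2)$ profile uniformly, including the delicate region near $\omega=\pi$.
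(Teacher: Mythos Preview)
Your argument is correct and reaches the same bound, but the technical route is genuinely different from the paper's. Both start from the projection--kernel identity $\var(\eta_A)=\int_A\int_{A^c}|\mathcal{K}|^2$, and both use a triangle inequality to turn the $A$--$A^c$ interaction into a shell estimate. From there the two diverge. The paper keeps the double integral, bounds $A^c\subset S_\alpha$ via the Frobenius triangle inequality, reduces everything to one--dimensional integrals of $[\mathcal{C}_{2L}^{(2)}(t)]^2$ against the weights $\sqrt{1-t^2}$ and $1-t^2$, and then invokes an \emph{exact} identity (their Lemma~\ref{lem_gegenbauerIdentity}, proved via Dette's recursion and digamma asymptotics) for $\int_0^1(1-t^2)[\mathcal{C}_{2L}^{(2)}(t)]^2\,dt$; the $L^2\log L$ appears from $\psi(n+\tfrac12)\sim\log n$. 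You instead collapse to a single integral against the overlap weight $g(z)$, bound $g$ by a shell via the $\omega$--triangle inequality, and get the $L^2\log L$ from the \emph{pointwise} decay $|k_L(\omega)|=O(L/\sin^2(\omega/2))$, obtained by summing the characters in closed form; the logarithm emerges from the scale window $1/L\ll\omega\ll\varepsilon$.

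What each buys: the paper's computation yields exact formulas for Gegenbauer $L^2$--norms that are of independent interest and track the constants precisely (the $1/\cos\varepsilon$ factor comes from the mean value theorem applied to $\arcsin$ in their shell estimate). Your approach is more robust and transparently portable to other compact groups or homogeneous spaces --- it needs only the standard Dirichlet--type decay of the spectral projector, not special--function identities --- and the overlap function $g(z)$ makes the boundary--layer mechanism explicit. One small remark: your shell bound as written gives $O(\varepsilon^2)$ rather than $O(\varepsilon^2/\cos\varepsilon)$; since $\varepsilon$ is fixed in $(0,\pi/2)$ this is harmless for the stated result, but the precise $\varepsilon$--dependence you would extract is not identical to the paper's, and your claim that it ``drops out with the exact constants'' is not quite accurate --- you would get a comparable but differently shaped constant.
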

From Theorem \ref{thm_variance} and for any fixed $\varepsilon$, we then
have by Chebyshev's inequality
$$\sup_{\alpha\in\SO}\mathbb{P}\Big(\big|\eta_{\alpha,\varepsilon}-N\mu(B(\fatone,\varepsilon))\big|\geq T\Big)\leq\var(\eta_{\alpha,\varepsilon})T^{-2}; $$
for example, letting $T=N^{1/3}\log(N)$ and with some little arithmetic we obtain
$$\sup_{\alpha\in\SO}\mathbb{P}\Big(\big|\tfrac{1}{N}\eta_{\alpha,\varepsilon}-\mu(B(\fatone,\varepsilon))\big|\geq \tfrac{\log (N)}{N^{2/3}}\Big)= O\big(\tfrac1{\log(N)}\big). $$
In other words, for large $N$ the counting and Haar measures are very similar with large probability.

\section{Introductory Concepts}
In this section we collect some definitions and previous results that we will use and that intend to make this manuscript reasonably self-contained. Proofs and definitions of Chebyshev polynomials and alike are postponed to subsection \ref{subsec_proofs}.

\subsection{Structure, distances and integration in $\SO$}
The special orthogonal group $\SO$ is the compact Lie group of 3 by 3 orthogonal matrices over $\R$ that represent rotations in $\R^3$, i.e.
with determinant equal to one. Its exponential map is given by Rodrigues' rotation formula, and a closed expression for the Baker-Campbell-Hausdorff formula has been derived in \cite{Engo}. It is a $3$ dimensional manifold and since it is naturally included in $\R^9$ it is customary to let it inherit its Riemannian submanifold structure.

Following \cite{Schmid}, using Euler angles $(\varphi_1,\theta,\varphi_2)\in[0,2\pi)\times[0,\pi]\times[0,2\pi)$,
every element $R\in \SO$ can be decomposed as $R=s_z(\varphi_1)s_x(\theta)s_z(\varphi_2)$ where
\begin{equation*}
 s_z(\varphi_1):=\left( {\begin{array}{ccc}
   \cos(\varphi_1) & -\sin(\varphi_1) & 0 \\
   \sin(\varphi_1) & \cos(\varphi_1) & 0  \\
   0 & 0 & 1  \\
  \end{array} } \right),\
   s_x(\theta):=\left( {\begin{array}{ccc}
      1 & 0 & 0  \\
   0& \cos(\theta) & -\sin(\theta)  \\
  0&\sin(\theta) & \cos(\theta)  \\
  \end{array} } \right)
\end{equation*}
are rotations around the $z$-axis and $x$-axis respectively. The normalized Haar measure (i.e. the unique left and right invariant probability measure in $\SO$) is given 
by $\dm(R)=\frac{1}{8\pi^2}\sin(\theta)\mathrm{d}\varphi_1\mathrm{d}\theta\mathrm{d}\varphi_2$, and it corresponds to the inherited Riemannian submanifold structure of $\SO$ up to the normalizing constant.

The Riemannian distance associated to the structure of $\SO$ is certainly a natural and 
useful concept, but for us it will be more convenient to use the so called 
{\em rotation angle distance} defined as follows: for $\alpha,\beta\in\SO$,
$$\omega(\alpha^{-1}\beta)=
\arccos\left(\frac{\mbox{\textbf{Trace}}(\alpha^{-1}\beta)-1}{2}\right).$$
%which measures the rotation angle only.

%\textbf{Haar Measure.} On a compact group there exists a unique left and right invariant probability measure.
%This Haar measure satisfies the following properties for $f\in L^1(G)$ and $h\in G$
%\begin{equation}\label{eq_properties_Haar_measure}
%\int_G f(gh)\dm(g)=\int_G f(g)\dm(g) \mbox{ and }\int_G f(g^{-1})\dm(g)=\int_G f(gh)\dm(g).
%\end{equation}

%Using an instance of the \textbf{Funk-Hecke }formula, see \cite[page 173]{Schmid}, we can simplify integrals in $\SO$:
	Its convenience stems from following fact, see for example \cite[page 173]{Schmid}: Given a function $f\in L^1(\SO)$ such that we can find $\tilde{f}\in L^1([0,\pi])$
	with $f(x)=\tilde{f}(\omega(x))$, then
	\begin{equation}\label{eq_integralSO}
	\int_{\SO}f(x)\ \mathrm{d}\mu(x)=\frac{2}{\pi}\int_0^{\pi} \tilde{f}(t) \sin^2\big(\tfrac{t}{2}\big)\ \mathrm{d}t.
	\end{equation}

\subsection{Laplace-Beltrami operator and Green function in $\SO$}
The Laplace-Beltrami operator $\Delta_g$ is defined on any  Riemannian manifold $(\M,g)$ in terms of the Levi-Civita connection. Following \cite{DoCarmo}, if $\gamma_1(t),\ldots,\gamma_n(t)$ is a set of geodesics in an $n$-dimensional manifold such that $\gamma_j(0)=p\in\M$ for all $1\leq j\leq n$, and such that $\{\dot\gamma_j(0)\}$ 
form an orthonormal basis of the tangent space $T_p\M$ (geodesic normal coordinates), then the action of $\Delta_g$ on $C^2$-functions $f$ at $p$ is given by %defined in an open neighborhood of $p$ is 
\[
\Delta_g f(p)=-\sum_{j=1}^n \frac{d^2}{dt^2}\Big|_{t=0}f(\gamma_j(t)).
\]
Note the convention given by the minus sign in front of the sum, which sometimes leads to confusion given  the Laplacian in $\R^n$. The convention we use here is widely accepted, see for example \cite{Jost}. A Green function $\G=\G_{\Delta_g}$ is a distributional solution to
%A Green function $\G(x,y)$ associated to $\Delta_g$ is a distributional solution to
\[
\Delta_g\G(\cdot,y)=\delta(\cdot,y)-\frac{1}{\mu_{dV}(\M)}.
\]
%This way defined any constant can be added to $\G$ and we still have a Green function. It is common practice to choose the constant in such a way that for all $y\in\M$ the function $\G(\cdot,y)$ has zero mean, see \cite{Aubin}. We use this convention when we refer simply to {\em the} Green function on a manifold $\M$.
This way defined it is unique modulo kern($\Delta_g$) and it is common practice to add a constant in such a way that for all $y\in\M$ the function $\G(\cdot,y)$ has zero mean, see \cite{Aubin}. We use this convention and simply refer to $\G$ as {\em the} Green function.

It further follows from classical Fredholm theory for a linear differential operator $L$ that
\begin{equation}\label{eq:Greendef}
\mathcal{G}_L(x,y)=\sum_{j=1}^{\infty}\frac{\phi_j(x)\bar\phi_j(y)}{\lambda_j},
\end{equation}
where $0=\lambda_0<\lambda_1\leq\lambda_2\leq\cdots$ is the sequence of eigenvalues for $L$ and $\{\phi_j\}$, $j\geq1$ is a complete orthonormal set of associated eigenfunctions. This is hence true %locally on any manifold, and since the expression we obtain will depend on the rotation angel distance only - it will be independent of any particular chart.
locally on any manifold, and the expression we obtain will  be independent of any particular chart, thus valid globally.
In the case $\M=\SO$, geodesics are dealt with in \cite{Novelia}, the eigenvalues and eigenfunctions of $\Delta_g$ are known from the classical theory of continuous groups and have been intensively studied in the physics literature, see \cite{Schmid,Joshi}, \cite[§15]{Wigner}:

\begin{lemma}\label{lem:eigens}
	The eigenvalues of $\Delta_g$ in $\SO$ are $\lambda_\ell=\ell(\ell+1)$ for $\ell\geq0$. 
	Moreover, if $H_\ell$ is the eigenspace associated to $\lambda_\ell$, then the dimension of
	$H_\ell$ is $(2\ell+1)^2$ and an orthonormal basis of $H_\ell$ is given by $\sqrt{2\ell+1}D_{m,n}^\ell$ where $-\ell\leq m,n\leq \ell$ and $D_{m,n}^\ell$ are Wigner's $D$-functions.
\end{lemma}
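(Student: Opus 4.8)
The plan is to derive everything from the Peter--Weyl theorem together with Schur's lemma, using the rotation-angle material of this section only to fix the normalization of the metric. The irreducible unitary representations of $\SO$ are the integer-spin representations $\pi_\ell$, one for each $\ell\in\Nzero$, with $\dim\pi_\ell=2\ell+1$; in a fixed orthonormal basis $\{v_m\}_{m=-\ell}^{\ell}$ of the representation space their matrix coefficients $D_{m,n}^\ell(R)=\langle v_m,\pi_\ell(R)v_n\rangle$ are precisely Wigner's $D$-functions. By Peter--Weyl the family $\{D_{m,n}^\ell\}_{\ell,m,n}$ is an orthogonal basis of $L^2(\SO,\mu)$, and the Schur orthogonality relations give $\int_{\SO}D_{m,n}^\ell\,\overline{D_{m',n'}^{\ell'}}\dm=(2\ell+1)^{-1}\delta_{\ell\ell'}\delta_{mm'}\delta_{nn'}$. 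Hence $\{\sqrt{2\ell+1}\,D_{m,n}^\ell\}$ is an orthonormal system with exactly $(2\ell+1)^2$ members for each fixed $\ell$, which will settle the normalization and dimension claims once these functions are shown to span the eigenspaces.

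Next I would identify $\Delta_g$ with (minus) a Casimir operator. The metric $\SO$ inherits from $\R^9$ is bi-invariant, since both left and right translations on $\SO$ extend to Frobenius-orthogonal linear maps of the ambient matrix space; I would use the rescaling of this metric (the ``normalizing constant'' referred to above) for which $\omega(\cdot)$ is the geodesic distance, equivalently the one for which the standard generators $L_1,L_2,L_3$ of $T_{\fatone}\SO$, with $\exp(tL_k)$ the rotation by angle $t$ about the $k$-th coordinate axis, form an orthonormal basis --- the $L_k$ are Frobenius-orthogonal with $\langle L_a,L_b\rangle_F=2\delta_{ab}$, so this metric is one-half the Frobenius one. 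For a bi-invariant metric the geodesics through a point $p$ are the translates $t\mapsto p\exp(tX)$ of one-parameter subgroups, so the geodesic-normal-coordinate formula recalled above gives
\[
\Delta_g f(p)=-\sum_{k=1}^3\frac{d^2}{dt^2}\Big|_{t=0}f\big(p\exp(tL_k)\big)=-\sum_{k=1}^3\big(\widetilde{L}_k^2 f\big)(p),
\]
where $\widetilde{L}_k$ is the left-invariant vector field generated by $L_k$; that is, $\Delta_g=-\,d\rho(\Omega)$ with $\Omega=L_1^2+L_2^2+L_3^2$ the Casimir element of the universal enveloping algebra and $\rho$ the right regular representation. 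For each fixed $m$ the span of $\{D_{m,n}^\ell\}_{n=-\ell}^{\ell}$ is $\rho$-invariant and carries a copy of $\pi_\ell$, so the central element $\Omega$ acts on it by a scalar by Schur's lemma, and therefore every $D_{m,n}^\ell$ is an eigenfunction of $\Delta_g$ with a common eigenvalue $\lambda_\ell$.

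Computing $\lambda_\ell$ is then the classical angular-momentum calculation: setting $J_k:=i\,d\pi_\ell(L_k)$, the relations $[L_a,L_b]=\epsilon_{abc}L_c$ become $[J_a,J_b]=i\epsilon_{abc}J_c$, the $J_k$ are self-adjoint, and the ladder-operator argument gives $J_1^2+J_2^2+J_3^2=\ell(\ell+1)\operatorname{Id}$; hence $d\pi_\ell(\Omega)=\sum_k d\pi_\ell(L_k)^2=-\sum_k J_k^2=-\ell(\ell+1)\operatorname{Id}$, so $\lambda_\ell=\ell(\ell+1)$. Since the $\lambda_\ell$ are pairwise distinct and the $D_{m,n}^\ell$ span a dense subspace of $L^2(\SO)$, there can be no further eigenvalues, each eigenspace $H_\ell$ equals $\operatorname{span}\{D_{m,n}^\ell:-\ell\le m,n\le\ell\}$, which has dimension $(2\ell+1)^2$ and orthonormal basis $\{\sqrt{2\ell+1}\,D_{m,n}^\ell\}$, and the ordering $0=\lambda_0<\lambda_1<\cdots$ is immediate from $\ell\mapsto\ell(\ell+1)$ being strictly increasing.

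I expect the only genuinely delicate point to be bookkeeping rather than ideas: one must commit explicitly to the multiple of the Frobenius metric for which $\omega$ is the geodesic distance --- the bare submanifold metric would give $\lambda_\ell=\tfrac12\ell(\ell+1)$ --- and keep the sign conventions consistent, so that the minus sign in the geodesic-normal-coordinate formula and the $-1$ in $d\pi_\ell(L_k)^2=-J_k^2$ combine into a positive eigenvalue. With that pinned down the argument is entirely standard, and one may alternatively just invoke \cite{Schmid,Joshi} or \cite[§15]{Wigner}, or verify the claim directly by writing $\Delta_g$ in Euler angles and recognizing the hypergeometric equation satisfied by the small Wigner functions $d_{m,n}^\ell(\theta)$.
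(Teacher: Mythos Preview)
The paper does not actually prove this lemma: it simply states that the eigenvalues and eigenfunctions ``are known from the classical theory of continuous groups'' and cites \cite{Schmid,Joshi} and \cite[§15]{Wigner}. So there is no proof in the paper to compare against; your proposal supplies what the paper omits.

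Your argument is correct and is the standard one: Peter--Weyl gives the orthogonal decomposition of $L^2(\SO)$ into matrix-coefficient blocks, Schur orthogonality fixes the normalization $\sqrt{2\ell+1}$, bi-invariance of the metric identifies $\Delta_g$ with minus the Casimir acting through the right regular representation, and the ladder-operator computation yields the scalar $\ell(\ell+1)$ on each block. Your remark about the metric normalization is well taken and is, as you say, the only genuine subtlety: the paper declares that $\SO$ inherits the Frobenius submanifold metric from $\R^9$, yet with that metric the $L_k$ have squared norm $2$ and the Casimir eigenvalue would come out as $\tfrac12\ell(\ell+1)$. The eigenvalue $\ell(\ell+1)$ corresponds to the rescaled metric in which $\omega$ is the Riemannian distance, which is the convention used implicitly throughout the rest of the paper (and in the cited physics references). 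You have handled this correctly by committing to that rescaling up front.
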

% It follows from Lemma \ref{lem:eigens} that for $\alpha,\beta\in\SO$, $\alpha\neq\beta$, we have:
% 
% \begin{equation}\label{eq_green_function}
% \mathcal{G}(\alpha,\beta)=\sum_{\ell=1}^{\infty}\frac{2\ell+1}{\ell(\ell+1)}\sum_{m=-\ell}^\ell\sum_{n=-\ell}^\ell \mathcal{D}_{m,n}^\ell(\alpha)
% \overline{\mathcal{D}_{m,n}^\ell(\beta)}.
% \end{equation}
Moreover we have, see \cite[Eq. 4.65]{Joshi} or \cite[pp. 40-41]{Vollrath} for a nice summary:
\begin{equation}\label{eq_chebychevU}
\sum_{m=-\ell}^\ell\sum_{n=-\ell}^\ell \mathcal{D}_{m,n}^\ell(\alpha)
\overline{\mathcal{D}_{m,n}^\ell(\beta)}=\mathcal{U}_{2\ell}
\Big(\cos\big(\tfrac{\omega(\alpha^{-1}\beta)}{2}\big)\Big);
\end{equation}
where $\mathcal{U}_{2\ell}(x)$ is the Chebyshev polynomial of second kind and degree $2\ell$. The following simple form for the Green function is derived, and to the best of our knowledge, this is the first time it has been formulated.
\begin{lemma}\label{lem:Green}
	The Green function for the Laplace-Beltrami operator on $\SO$ can be written in terms of the 
	metric $\omega$, i.e. for $\alpha,\beta\in\SO$ with $\alpha\neq\beta$:
	\begin{equation*}
	\mathcal{G}(\alpha, \beta)=\big(\pi-\omega(\alpha^{-1}\beta)\big)
	\cot\big(\tfrac{\omega(\alpha^{-1}\beta)}{2}\big)-1.
	\end{equation*}
\end{lemma}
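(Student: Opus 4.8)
The plan is to read off $\mathcal{G}$ from the spectral representation \eqref{eq:Greendef}. By Lemma~\ref{lem:eigens} the nonzero eigenvalues are $\lambda_\ell=\ell(\ell+1)$, $\ell\geq1$, and $\{\sqrt{2\ell+1}\,D^\ell_{m,n}:-\ell\leq m,n\leq\ell\}$ is an orthonormal basis of the eigenspace $H_\ell$. Substituting this into \eqref{eq:Greendef} and evaluating the inner double sum over $m,n$ by means of \eqref{eq_chebychevU} gives
\begin{equation*}
\mathcal{G}(\alpha,\beta)=\sum_{\ell=1}^\infty\frac{2\ell+1}{\ell(\ell+1)}\,\mathcal{U}_{2\ell}\!\Big(\cos\big(\tfrac{\omega(\alpha^{-1}\beta)}{2}\big)\Big).
\end{equation*}
Writing $\theta=\tfrac12\omega(\alpha^{-1}\beta)\in(0,\tfrac{\pi}{2}]$ (so $\alpha\neq\beta$), using $\mathcal{U}_{2\ell}(\cos\theta)=\sin((2\ell+1)\theta)/\sin\theta$ and the partial fraction $\frac{2\ell+1}{\ell(\ell+1)}=\frac1\ell+\frac1{\ell+1}$, this turns into
\begin{equation*}
\mathcal{G}(\alpha,\beta)=\frac{1}{\sin\theta}\sum_{\ell=1}^\infty\Big(\frac1\ell+\frac1{\ell+1}\Big)\sin\big((2\ell+1)\theta\big).
\end{equation*}

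Next I would evaluate the two trigonometric series. Each is the imaginary part of a series of the form $\sum_n z^n/n=-\log(1-z)$ at $z=e^{2i\theta}$, after pulling out a factor $e^{\pm i\theta}$ (and, for the $\frac1{\ell+1}$ sum, accounting for the missing $\ell=1$ term under the index shift $j=\ell+1$). Using $1-e^{2i\theta}=2\sin\theta\,e^{i(\theta-\pi/2)}$ for $\theta\in(0,\pi)$, hence $-\log(1-e^{2i\theta})=-\log(2\sin\theta)-i(\theta-\tfrac{\pi}{2})$, a routine computation yields
\begin{align*}
\sum_{\ell=1}^\infty\frac{\sin((2\ell+1)\theta)}{\ell}&=\cos\theta\,\big(\tfrac{\pi}{2}-\theta\big)-\sin\theta\,\log(2\sin\theta),\\
\sum_{\ell=1}^\infty\frac{\sin((2\ell+1)\theta)}{\ell+1}&=\cos\theta\,\big(\tfrac{\pi}{2}-\theta\big)+\sin\theta\,\log(2\sin\theta)-\sin\theta.
\end{align*}
Adding these, the logarithmic terms cancel and the bracketed series equals $2\cos\theta\,(\tfrac{\pi}{2}-\theta)-\sin\theta$; dividing by $\sin\theta$ gives $\mathcal{G}(\alpha,\beta)=(\pi-2\theta)\cot\theta-1$, which is exactly $(\pi-\omega(\alpha^{-1}\beta))\cot\big(\tfrac{\omega(\alpha^{-1}\beta)}{2}\big)-1$.

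The delicate point, which I expect to be the main obstacle, is convergence: the series for $\mathcal{G}$ is not absolutely convergent, and \eqref{eq:Greendef} holds a priori only in an $L^2$ (or distributional) sense, so the termwise manipulations above are not immediately justified. To make the pointwise identity rigorous for $\alpha\neq\beta$ I would work with the Abel-regularised sum $\sum_{\ell\geq1}\frac{2\ell+1}{\ell(\ell+1)}r^{2\ell}\mathcal{U}_{2\ell}(\cos\theta)$ for $0<r<1$: here all the series are absolutely convergent, the logarithms become $-\log(1-r^2e^{2i\theta})$, the same computation goes through, and then one lets $r\to1^-$ invoking Abel's theorem, which applies since the limiting series converges for each fixed $\theta\in(0,\pi)$. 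Finally I would record the two checks that fix the additive normalisation: the closed form depends only on $\omega$, and by \eqref{eq_integralSO} (using $\sin^2(t/2)\cot(t/2)=\tfrac12\sin t$) one verifies $\int_{\SO}\mathcal{G}(\cdot,\beta)\dm=0$, matching the convention of \emph{the} Green function.
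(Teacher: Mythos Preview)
Your proof is correct and follows essentially the same route as the paper: start from the spectral series \eqref{eq_green_function}, insert $\mathcal{U}_{2\ell}(\cos\theta)=\sin((2\ell+1)\theta)/\sin\theta$, split $\tfrac{2\ell+1}{\ell(\ell+1)}=\tfrac1\ell+\tfrac1{\ell+1}$, and sum the two resulting sine series via the boundary values of $-\log(1-z)$. Your treatment is in fact slightly more careful than the paper's, since you justify the pointwise convergence via Abel regularisation and explicitly verify the zero-mean normalisation, whereas the paper simply invokes convergence of the logarithmic series on the boundary circle.
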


% Lemma \ref{lem:Green} is probably known, we have not found an
% appropriate reference. We include a short proof in Section \ref{proof:Green}.

\subsection{Determinantal point processes}
We point the reader to the excellent monograph \cite{GAF} for an introduction to point processes, and we briefly summarize part of this material below. As in \cite{Beltran} and \cite{Etayo}, we will use only a fraction of the theory.

 A \emph{simple point process} on a locally compact Polish space 
 $\Lambda$ with reference measure $\mu$ is a random, integer-valued positive Radon measure $\eta$, that almost surely assigns at most measure 1 to singletons -- we shall think of it as a counting measure
 $$\eta=\sum_{j=1}\delta_{x_j},$$
 with $x_j\neq x_s$ for $j\neq s$. One usually 
 identifies $\eta$ with a discrete subset of $\Lambda$.
 
 The \emph{joint intensities} of $\eta$ w.r.t. $\mu$, if they exist, 
 are functions $\rho_k:\Lambda^k\ra [0,\infty)$ for 
 $k>0$, such that for pairwise disjoint sets $\{D_s\}_{s=1}^k\subset \Lambda$, the expected value of the product of number of points falling into $D_s$ is given by
 $$\mathbb{E}\bigg[\prod_{s=1}^k\eta(D_s)\bigg]=
 \int_{D_1\times\ldots\times D_k}\rho_k(y_1,\ldots,y_k)\dm(y_1)\ldots \dm(y_k),$$
and $\rho_k(y_1,\ldots,y_k)=0$ in case $y_j=y_s$ for some $j\neq s$.

A simple point process is \emph{determinantal}
 with kernel $\mathcal{K}$\footnote{ If $\mathcal{K}$ is a projection kernel, one  ought to say
 	\emph{determinantal projection process}.}, iff for every $k\in\N$ 
 and all $y_j$'s
 $$ \rho_k(y_1,\ldots,y_k)=\mbox{det}\Big(\mathcal{K}(y_j,y_s)\Big)_{1\leq j,s\leq k}.$$

 Let $(\M,g)$ be a compact Riemannian manifold with measure $d\mu(x)$. Let $H\subseteq L^2(\M)$ be any $N$-dimensional subspace in the set of square-integrable functions. It follows from the Macchi-Soshnikov theorem \cite[Thm. 4.5.5]{GAF} that a simple point process with $N$ points exists in $\M$ associated to $H$. Its main property is given by \cite[Form. (1.2.2)]{GAF}: For any measurable function $f:\M\times \M\to[0,\infty]$
% \begin{multline}\label{eq_expected_value_dpp}
% \mathbb{E}\bigg[\sum_{i\neq j} f(x_i,x_j)\bigg]=\\\iint_{\M}f(x,y)\Big(\mathcal{K}_{H}(x,x)\mathcal{K}_{H}(y,y)-|\mathcal{K}_{H}(x,y)|^2 \Big)\dm(x)\dm(y),
% \end{multline}
  \begin{equation}\label{eq_expected_value_dpp}
 \mathbb{E}\bigg[\sum_{i\neq j} f(x_i,x_j)\bigg]=\iint_{\M}f(x,y)\Big(\mathcal{K}_{H}(x,x)\mathcal{K}_{H}(y,y)-|\mathcal{K}_{H}(x,y)|^2 \Big)\dm(x,y);
 \end{equation}
 where
 \begin{description}
 	\item $\mathbb{E}\big[g(x_1,\ldots,x_N)\big]$ means expected value of some function defined from $\M\times\cdots\times\M$ ($N$ copies of $\M$) to $[0,\infty]$, when $x_1,\ldots,x_N$ are chosen from the point process associated to $H$;
 	\item $\mathcal{K}_{H}(x,y)$ is the (orthogonal) projection kernel on $H$, namely for any $f\in L^2(\M)$ the orthogonal projection of $f$ onto $H$ satisfies:
 	\[
 	\Pi_H(f)(x)=\int_{y\in\M}f(y)\mathcal{K}_{H}(x,y)\dm(y)\ \in L^2(H).
 	\]
 	 \end{description}
Note that if $\varphi_1,\ldots,\varphi_N$ is an orthonormal basis of $H$, then we can write
\begin{equation}\label{eq:Kerndef}
 \mathcal{K}_{H}(x,y)=\sum_{j=1}^N\varphi_j(x)\overline{\varphi_j(y)},
\end{equation}
 and clearly
 \begin{equation*}
 \int_{\SO}\mathcal{K}_{H}(x,x)\dm(x)=N.
 \end{equation*}
 Coming back to the case of interest and following ideas in \cite{Beltran}, we choose as subspace $H$  the span of the first eigenspaces of $\Delta_g$.
%Now we come back to the case we are interested in, namely $\M=\SO$. Following ideas in \cite{Beltran}, we choose the subspace $H=H_L$ to be the span of the first $L$ eigenspaces of the Laplace-Beltrami operator $\Delta_g$.
% The projection kernel is then the sum of products of normalized eigenfunctions, i.e.
% \begin{equation*}
%  \mathcal{K}_L(\alpha,\beta)=\sum_{\ell=0}^{L}(2\ell+1)\sum_{m=-\ell}^\ell\sum_{n=-\ell}^\ell \mathcal{D}_{m,n}^\ell(\alpha)
%  \overline{\mathcal{D}_{m,n}^\ell(\beta)}.
% \end{equation*}
%The following will easily follow from the previous claims and \eqref{eq_chebychevU}.
%In notation of Lemma \ref{lem:eigens}:
\begin{lemma}\label{lem:KH}
	Let $L\geq0$ and $H_L\subseteq L^2(\SO)$ be the span of the union of eigenspaces for eigenvalues $\lambda_0,\ldots,\lambda_L$ of $\Delta_g$. Then, we define
	\begin{equation*}%\label{eq:NL}
	N:=\dim(H_L)=\binom{2L+3}{3}=\mathcal{C}_{2L}^{(2)}(1)=\frac{4}{3}L^3+O(L^2).%\frac{(2L+3)(L+1)(2L+1)}{3}=\mathcal{C}_{2L}^{(2)}(1).
	\end{equation*}
	Moreover\footnote{Here, $\mathcal{C}_{2L}^{(2)}$, $L\geq0$, 
		is the sequence of Gegenbauer (ultra-spherical) polynomials.}, the projection kernel is:
	\[
\mathcal{K}_L(\alpha,\beta)=\mathcal{C}_{2L}^{(2)}\Big(
\cos\big(\tfrac{\omega(\alpha^{-1}\beta)}{2}\big)\Big).
\]

\end{lemma}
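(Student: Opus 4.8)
The lemma has two parts, and I would treat them in turn. The claim about $N=\dim(H_L)$ is elementary: by Lemma \ref{lem:eigens} we have $\dim(H_\ell)=(2\ell+1)^2$, and since $H_L=\bigoplus_{\ell=0}^{L}H_\ell$ is an orthogonal direct sum (eigenspaces of the self-adjoint operator $\Delta_g$ for distinct eigenvalues being mutually orthogonal),
\[
N=\sum_{\ell=0}^{L}(2\ell+1)^2=1^2+3^2+\cdots+(2L+1)^2=\frac{(L+1)(2L+1)(2L+3)}{3}
\]
by the classical formula for the sum of the first $L+1$ odd squares. A one-line rearrangement identifies this with $\binom{2L+3}{3}$, which in turn equals $\mathcal{C}_{2L}^{(2)}(1)$ by the standard value $\mathcal{C}_n^{(\lambda)}(1)=\binom{n+2\lambda-1}{n}$ (taking $\lambda=2$, $n=2L$); expanding $\binom{2L+3}{3}=\tfrac{(2L+3)(2L+2)(2L+1)}{6}$ gives $\tfrac43 L^3+O(L^2)$.

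For the kernel, I would concatenate the orthonormal bases of $H_0,\ldots,H_L$ supplied by Lemma \ref{lem:eigens}, obtaining the orthonormal basis $\{\sqrt{2\ell+1}\,\mathcal{D}_{m,n}^{\ell}:0\le\ell\le L,\ -\ell\le m,n\le\ell\}$ of $H_L$. Inserting it into the basis-independent formula \eqref{eq:Kerndef} for the projection kernel yields
\[
\mathcal{K}_L(\alpha,\beta)=\sum_{\ell=0}^{L}(2\ell+1)\sum_{m=-\ell}^{\ell}\sum_{n=-\ell}^{\ell}\mathcal{D}_{m,n}^{\ell}(\alpha)\,\overline{\mathcal{D}_{m,n}^{\ell}(\beta)},
\]
and by \eqref{eq_chebychevU} the inner double sum is exactly $\mathcal{U}_{2\ell}\big(\cos\tfrac{\omega(\alpha^{-1}\beta)}{2}\big)$. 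Setting $x:=\cos\tfrac{\omega(\alpha^{-1}\beta)}{2}$, the lemma is thereby reduced to the one-variable polynomial identity $\sum_{\ell=0}^{L}(2\ell+1)\,\mathcal{U}_{2\ell}(x)=\mathcal{C}_{2L}^{(2)}(x)$.

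Proving this identity is the only real work, and I see two routes. The quick one is to quote the classical connection formula expanding a Gegenbauer polynomial $\mathcal{C}_n^{(\lambda)}$ in the family $\{\mathcal{C}_{n-2k}^{(\mu)}\}_k$: specialising to $\lambda=2$, $\mu=1$ (so $\mathcal{C}_k^{(1)}=\mathcal{U}_k$) and $n=2L$, the connection coefficients collapse to $2L-2k+1$, which after reindexing $\ell=L-k$ is precisely the asserted identity. A more self-contained route is trigonometric: write $\mathcal{U}_{2\ell}(\cos\theta)=\sin((2\ell+1)\theta)/\sin\theta$, use the convolution identity $\mathcal{C}_{2L}^{(2)}(\cos\theta)=\sum_{k=0}^{2L}\mathcal{U}_k(\cos\theta)\,\mathcal{U}_{2L-k}(\cos\theta)$ (obtained by squaring the generating function $\sum_{n\ge0}\mathcal{U}_n(x)t^n=(1-2xt+t^2)^{-1}$), and reduce both sides by product-to-sum and finite geometric sums to the common closed form $\big(\sin((2L+2)\theta)\cos\theta-(2L+2)\cos((2L+2)\theta)\sin\theta\big)\big/\big(2\sin^3\theta\big)$. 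Either way the computation is routine once set up; the only point needing care is the even-index restriction on the $\mathcal{U}_{2\ell}$, which is what makes the weights come out as $2\ell+1$ rather than something messier. As the paper defers facts about Chebyshev and Gegenbauer polynomials to Subsection \ref{subsec_proofs}, I would place the verification of this identity there and simply cite it here.
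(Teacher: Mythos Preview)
Your proof is correct and follows the paper's strategy up to the point where both reduce the kernel formula to the polynomial identity $\sum_{\ell=0}^{L}(2\ell+1)\,\mathcal{U}_{2\ell}(x)=\mathcal{C}_{2L}^{(2)}(x)$. The treatment of $\dim(H_L)$ is essentially identical to the paper's.

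Where you diverge is in the verification of that identity. The paper dispatches it in one line using three standard relations it has already listed in \eqref{eq_chebyshevT_basics}: first $(2\ell+1)\,\mathcal{U}_{2\ell}=\mathcal{T}_{2\ell+1}'$ turns the sum into a derivative; then the telescoping relation $2\mathcal{T}_{2\ell+1}=\mathcal{U}_{2\ell+1}-\mathcal{U}_{2\ell-1}$ collapses $\sum_{\ell=0}^{L}\mathcal{T}_{2\ell+1}$ to $\tfrac{1}{2}\mathcal{U}_{2L+1}$; finally $\tfrac{1}{2}\mathcal{U}_{2L+1}'=\mathcal{C}_{2L}^{(2)}$. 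Your two proposed routes---quoting the Gegenbauer connection formula, or squaring the generating function and reducing trigonometrically---both work, but the first imports an external identity not otherwise used in the paper, and the second is noticeably longer. The paper's derivative-plus-telescope argument is shorter, entirely self-contained given the identities already tabulated, and explains transparently why the weights $2\ell+1$ appear: they are exactly the derivative factors in $\mathcal{T}_{2\ell+1}'$.
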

\subsection{Chebyshev polynomials and proofs of lemmas}\label{subsec_proofs}
The degree $n+1$ Chebyshev polynomials of first and second kind satisfy the recurrence relation 
\begin{equation}\label{eq_ChebyshevRecurence}
	P_{n+1}(x)=2xP_n(x)-P_{n-1}(x),
\end{equation}
with $\mathcal{T}_0\equiv 1$, $\mathcal{T}_1(x)=x $ and $\mathcal{U}_{-1}\equiv 0$, $\mathcal{U}_0(x)\equiv 1 $ in their respective notation. Gegenbauer or ultra-spherical polynomials $\mathcal{C}_{n}^{(\lambda)}(x)$ of degree $n$ and index $\lambda$ appear any time rotation invariance plays a role, and can be defined for integer $\lambda$ as multiples of derivatives of Chebyshev polynomials of the second kind; sufficient for us is the one formula in \eqref{eq_chebyshevT_basics}.
 With this said, using \eqref{eq:Greendef}, \eqref{eq_chebychevU} and \eqref{eq:Kerndef}, we obtain
\begin{equation}\label{eq_simpliefied_K_G}
    \mathcal{K}(\alpha,\beta)=\sum_{\ell=0}^{L}(2\ell+1)\ \mathcal{U}_{2\ell}
 \Big(\cos\big(\tfrac{\omega(\alpha^{-1}\beta)}{2}\big)\Big);
 \end{equation}
 \begin{equation}\label{eq_green_function}
   \mathcal{G}(\alpha, \beta)=  \sum_{\ell=1}^{\infty} \frac{2\ell+1}{\ell(\ell+1)}\mathcal{U}_{2\ell}
 \Big(\cos\big(\tfrac{\omega(\alpha^{-1}\beta)}{2}\big)\Big).
\end{equation}
Further we list some equations for later reference and the reader's 
convenience.
\begin{equation}\label{eq_chebyshevT_basics}
\begin{array}{rll}
  2\mathcal{T}_{2\ell+1}(x)&=\mathcal{U}_{2\ell+1}(x)-\mathcal{U}_{2\ell-1}(x)&   \mbox{ \cite[Eq. 22.5.8]{Stegun}},\\
\mathcal{T}_{n}(1)&=1& \mbox{ \cite[Eq. 8.944.1]{Gradshteyn}},\\
\tfrac{\mathrm{d}}{\mathrm{d}x}\mathcal{T}_{2\ell+1}(x)&=(2\ell+1)\ \mathcal{U}_{2\ell}(x)&\mbox{ \cite[Eq. 8.949.1]{Gradshteyn}},\\
  \tfrac{\mathrm{d}}{\mathrm{d}x}\mathcal{U}_{2L+1}(x)&=2\mathcal{C}_{2L}^{(2)}(x)& \mbox{ \cite[Eq. 8.949.4]{Gradshteyn}},\\
  \mathcal{C}_{n}^{(\lambda)}(1)&=\binom{2\lambda+n-1}{n}& \mbox{ \cite[Eq. 8.937.4]{Gradshteyn}}.\\
\end{array}
\end{equation}
%where $\mathcal{T}_{2\ell+1}(x)$ denotes the Chebyshev polynomial of first kind.
\begin{proof}[Proof of Lemma \ref{lem:KH}]
Let $y:=\cos\big(\tfrac{\omega(\alpha^{-1}\beta)}{2}\big)$, then
by \eqref{eq_simpliefied_K_G} and \eqref{eq_chebyshevT_basics}
\begin{equation*}%\label{eq_projection_kernel}
  \mathcal{K}(\alpha,\beta)=\frac{\mathrm{d}}{\mathrm{d}x}\sum_{\ell=0}^{L}\mathcal{T}_{2\ell+1}(x)
 \Big|_{y}=\frac{\mathrm{d}}{\mathrm{d}x}\frac{1}{2}\mathcal{U}_{2L+1}(x)
 \Big|_{y}=\mathcal{C}_{2L}^{(2)}\big(y\big).
\end{equation*}
The formula for the dimension of $H_L$ can be proved as follows. The eigenspace associated to $\lambda_\ell=\ell(\ell+1)$ has dimension $(2\ell+1)^2$ since this is the number of elements of its basis $D_{m,n}^\ell$. Thus dim($H_L$) is given by 
$\sum_{\ell=0}^L(2\ell+1)^2$.
% \[
% \sum_{\ell=0}^L(2\ell+1)^2=\binom{2L+3}{3}.
% \]
\end{proof}

\begin{proof}[Proof of Lemma \ref{lem:Green}] In \eqref{eq_green_function} we apply the
equality 
$$\mathcal{U}_{2\ell}(\cos(t))=\frac{\sin\big((2\ell+1)t\big)}{\sin(t)}\hspace{0.5cm}\mbox{\cite[Eq. 22.3.16]{Stegun}},$$
 and reason, under the assumption $w:=\omega(\alpha^{-1}\beta)\neq 0$, as follows
\begin{equation*}
\begin{split}
  \mathcal{G}(\alpha, \beta)&=\sum_{\ell=1}^{\infty} \frac{2\ell+1}{\ell(\ell+1)}\frac{\sin\big((2\ell+1)\tfrac{w}{2}\big)}
 {\sin\big(\tfrac{w}{2}\big)}\\
 &=\frac{1}{\sin\big(\tfrac{w}{2}\big)}\sum_{\ell=1}^{\infty}\left( \frac{\sin\big((2\ell+1)\tfrac{w}{2}\big)}
 {\ell+1}+\frac{\sin\big((2\ell+1)\tfrac{w}{2}\big)}{\ell}\right)\\
  &=  \frac{1}{i}\Big(-\log\big(1-e^{iw}\big)+\log\big(1-e^{-iw}\big)\Big)
  \cot\big(\tfrac{w}{2}\big)-1;
\end{split}
\end{equation*}
where we used the well known fact, that the power series for $\log(1-x)$ at 1 converges at the boundary of its disc of convergence (except for $x=1$) and equals the logarithm at these values:
\begin{equation*}
\begin{split}
\sum_{\ell=1}^{\infty}\frac{\sin\big((2\ell+1)\tfrac{w}{2}\big)}{\ell+1}&=
\frac{1}{2i}\sum_{\ell=1}^{\infty}\frac{e^{i\tfrac{w}{2}(2\ell+1)}-e^{-i\tfrac{w}{2}(2\ell+1)}}{\ell+1}\\
\hspace{-0.5cm}=&\frac{e^{-i\tfrac{w}{2}}}{2i}\sum_{\ell=1}^{\infty}\frac{e^{iw(\ell+1)}}{\ell+1}-
\frac{e^{i\tfrac{w}{2}}}{2i}\sum_{\ell=1}^{\infty}\frac{e^{-iw(\ell+1)}}{\ell+1}\\
\hspace{-0.5cm}=&\frac{-e^{-i\tfrac{w}{2}}}{2i}\big(\log\big(1-e^{iw}\big)+e^{iw}\big)+
\frac{e^{i\tfrac{w}{2}}}{2i}\big(\log\big(1-e^{-iw}\big)+e^{-iw}\big)\\
\hspace{-0.5cm}=&\frac{-e^{-i\tfrac{w}{2}}}{2i}\log\big(1-e^{iw}\big)+
\frac{e^{i\tfrac{w}{2}}}{2i}\log\big(1-e^{-iw}\big)-\sin(\tfrac{w}{2}),\\
\end{split}
\end{equation*}
and similarly
\begin{equation*}
\sum_{\ell=1}^{\infty}\frac{\sin\big((2\ell+1)\tfrac{w}{2}\big)}{\ell}=
\frac{-e^{i\tfrac{w}{2}}}{2i}\log\big(1-e^{iw}\big)+
\frac{e^{-i\tfrac{w}{2}}}{2i}\log\big(1-e^{-iw}\big).
\end{equation*}
Further, by $ 1-e^{-iw}=2ie^{-i\tfrac{w}{2}}\sin(\tfrac{w}{2})$, we conclude
\begin{equation*}
\begin{split}
\log\big(1-e^{-iw}\big)-\log\big(1-e^{iw}\big)&=
\log\big(2ie^{-i\tfrac{w}{2}}\sin(\tfrac{w}{2})\big)-\log\big(-2ie^{i\tfrac{w}{2}}\sin(\tfrac{w}{2})\big)\\
&=
\log\big(2e^{i\tfrac{-w+\pi}{2}}\sin(\tfrac{w}{2})\big)-\log\big(2e^{i\tfrac{w-\pi}{2}}\sin(\tfrac{w}{2})\big)\\
&=(-w+\pi)\tfrac{i}{2}-(w-\pi)\tfrac{i}{2}=i(\pi-w),
\end{split}
\end{equation*}
where we used a property of the
complex logarithm: $\log(re^{i\varphi})=\log(r)+i\varphi$.
\end{proof}

 \section{Riesz $s$-Energy: Proof of Theorem \ref{thm_Riesz3}}\label{appendix2}
%  We will proceed as in the case of the Green energy, where we replace $\mathcal{G}(\alpha,\beta)$ by the Riesz $s$-Kernel: $\|\alpha-\beta\|_F^{-s}$, where $\|\cdot\|_F$ is the Frobenius norm ($L^2$-norm).
 Recall if $A$ is a real matrix, we have $\|A\|_F^2:=\mbox{\textbf{Trace}}(A^tA)$. We set throughout $N=N(L)=\mathcal{C}^{(2)}_{2L}(1)$ for $L\in\N$, and note next a well known fact before we proceed, see for instance \cite[Eq. (33)]{Huynh}.
 \begin{lemma}\label{lem_omegaFrobenius}
  For $\alpha,\beta\in\SO$, we have
  $\|\alpha-\beta\|_F=\sqrt{8}\sin\big(\tfrac{\omega(\alpha^{-1}\beta)}{2}\big). $
 \end{lemma}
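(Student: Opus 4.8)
The plan is to compute $\|\alpha-\beta\|_F^2$ directly from the definition $\|A\|_F^2=\mathbf{Trace}(A^tA)$ and then reduce everything to the quantity $\mathbf{Trace}(\alpha^{-1}\beta)$, which is what $\omega(\alpha^{-1}\beta)$ encodes.

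First I would expand
$$
\|\alpha-\beta\|_F^2=\mathbf{Trace}\big((\alpha-\beta)^t(\alpha-\beta)\big)=\mathbf{Trace}(\alpha^t\alpha)+\mathbf{Trace}(\beta^t\beta)-\mathbf{Trace}(\alpha^t\beta)-\mathbf{Trace}(\beta^t\alpha).
$$
Since $\alpha,\beta\in\SO$ are orthogonal, $\alpha^t\alpha=\beta^t\beta=I$, so the first two traces each equal $3$. Moreover $\mathbf{Trace}(\beta^t\alpha)=\mathbf{Trace}\big((\alpha^t\beta)^t\big)=\mathbf{Trace}(\alpha^t\beta)$, and $\alpha^t=\alpha^{-1}$, so the above collapses to
$$
\|\alpha-\beta\|_F^2=6-2\,\mathbf{Trace}(\alpha^{-1}\beta).
$$

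Next I would invoke the definition of the rotation angle distance, which gives $\cos\big(\omega(\alpha^{-1}\beta)\big)=\tfrac12\big(\mathbf{Trace}(\alpha^{-1}\beta)-1\big)$, i.e. $\mathbf{Trace}(\alpha^{-1}\beta)=1+2\cos\big(\omega(\alpha^{-1}\beta)\big)$. Substituting yields
$$
\|\alpha-\beta\|_F^2=6-2\big(1+2\cos(\omega(\alpha^{-1}\beta))\big)=4\big(1-\cos(\omega(\alpha^{-1}\beta))\big)=8\sin^2\big(\tfrac{\omega(\alpha^{-1}\beta)}{2}\big),
$$
using the half-angle identity $1-\cos t=2\sin^2(t/2)$. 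Finally, since $\omega(\alpha^{-1}\beta)\in[0,\pi]$, we have $\sin\big(\tfrac{\omega(\alpha^{-1}\beta)}{2}\big)\geq 0$, so taking the nonnegative square root gives the claimed formula.

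There is no real obstacle here; the only points requiring a moment's care are the symmetry $\mathbf{Trace}(\beta^t\alpha)=\mathbf{Trace}(\alpha^t\beta)$ and the sign when extracting the square root, both of which are immediate.
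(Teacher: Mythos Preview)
Your proof is correct and follows essentially the same route as the paper: expand $\|\alpha-\beta\|_F^2$ via the trace, use orthogonality to reduce to $6-2\,\mathbf{Trace}(\alpha^{-1}\beta)$, and then apply the definition of $\omega$ together with the half-angle formula. The paper's version is terser but otherwise identical.
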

\begin{proof}
We abbreviate $\omega=\omega(\alpha^{-1}\beta)$, and use the half-angle formula for sine:
\begin{equation*}
\begin{split}
 \|\alpha-\beta\|_F^2&=\mbox{\textbf{Trace}}\big[(\alpha-\beta)^t(\alpha-\beta)\big]=6-2\mbox{\textbf{Trace}}(\alpha^{-1}\beta)\\
% \|\alpha-\beta\|_F^2&=\|\fatone-\alpha^{-1}\beta\|_F^2=\|\fatone\|_F^2+\|\alpha^{-1}\beta\|_F^2-2\mbox{\textbf{Trace}}(\alpha^{-1}\beta)\\
 &=8\frac{2-\big(\mbox{\textbf{Trace}}(\alpha^{-1}\beta)-1\big)}{4}=
 8\frac{1-\cos(\omega)}{2}
 =
 8\big[\sin\big(\tfrac{\omega}{2}\big)\big]^2.\qedhere
 \end{split}
\end{equation*}
\end{proof}
Reminding us first of the Beta function $\mathcal{B}(a,b):=\int_0^1 t^{a-1}(1-t)^{b-1}\dt$ for $a,b>0$, we are ready to state our first proposition.
\begin{proposition}
% The Riesz $s$-energy for $s\in(0,3)$, satisfies for the dpp $A:=\alpha_1,\ldots,\alpha_N\in\SO$ with 
% $$ E^s_R(A)\ \leq\  \frac{2}{8^{\frac{s}{2}}\pi}\mathcal{B}\big(\tfrac{3-s}{2},\tfrac{1}{2}\big)N^2 +O(N^{1+s/3}).$$
%  The second leading term $O(N^{1+s/3})$ in case $s=1,2$ is respectively
%  $$-\frac{\sqrt{2}}{\pi}\Big(\frac{3}{4}\Big)^{4/3}N^{4/3}+O(N)\mbox{ \hspace{0.2cm} and \hspace{0.2cm}}-\frac{4}{15}\Big(\frac{3}{4}\Big)^{5/3}N^{5/3}+O(N^{4/3}). $$
   For $s\in(0,3)$ and $N=N(L)$, we have
  $$ \E^s_R(N)\ \leq\  \tfrac{2}{8^{s/2}\pi}\mathcal{B}\big(\tfrac{3-s}{2},\tfrac{1}{2}\big)N^2 +O(N^{1+s/3}).$$
  If $s\in\{1,2\}$, we have more information on the term $O(N^{1+s/3})$: It is respectively
  $$-\tfrac{\sqrt{2}}{\pi}\big(\tfrac{3}{4}\big)^{4/3}N^{4/3}+O(N)\mbox{ \hspace{0.1cm} and \hspace{0.2cm}}-\tfrac{4}{15}\big(\tfrac{3}{4}\big)^{5/3}N^{5/3}+O(N^{4/3}). $$
\end{proposition}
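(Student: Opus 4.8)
The plan is to compute the expected Riesz $s$-energy of the $N$ points generated by the determinantal point process on $H_L$, using the master formula \eqref{eq_expected_value_dpp} together with the explicit projection kernel $\mathcal{K}_L$ from Lemma \ref{lem:KH}, and then bound the result from above (since the infimum is at most the expectation). First I would apply \eqref{eq_expected_value_dpp} with $f(x,y) = \|x-y\|_F^{-s}$. Since $\mathcal{K}_L(x,x) = \mathcal{C}_{2L}^{(2)}(1) = N$ is constant, the diagonal term contributes $N^2$ times the continuous $s$-energy, and the correction is the integral of $\|x-y\|_F^{-s}|\mathcal{K}_L(x,y)|^2$. Using Lemma \ref{lem_omegaFrobenius} to rewrite $\|x-y\|_F = \sqrt{8}\sin(\omega/2)$, and the integration formula \eqref{eq_integralSO} to collapse the double integral over $\SO$ into a one-dimensional integral in the angle $t = \omega(\alpha^{-1}\beta)$ against the weight $\tfrac{2}{\pi}\sin^2(t/2)$, both terms become explicit one-variable integrals. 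Concretely,
\begin{equation*}
\mathbb{E}[E_R^s] = \frac{2}{8^{s/2}\pi}\int_0^{\pi}\frac{\sin^2(t/2)}{\sin^s(t/2)}\,\mathrm{d}t \cdot N^2 \;-\; \frac{2}{8^{s/2}\pi}\int_0^{\pi}\frac{\mathcal{C}_{2L}^{(2)}\big(\cos(t/2)\big)^2}{\sin^s(t/2)}\sin^2(t/2)\,\mathrm{d}t.
\end{equation*}

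For the leading term I would substitute $u = \sin^2(t/2)$ to recognize the integral $\int_0^\pi \sin^{2-s}(t/2)\,\mathrm{d}t$ as a Beta integral, yielding $\mathcal{B}\big(\tfrac{3-s}{2},\tfrac12\big)$ after the standard change of variables; this reproduces the stated coefficient $\tfrac{2}{8^{s/2}\pi}\mathcal{B}\big(\tfrac{3-s}{2},\tfrac12\big)N^2$. The main work is the second term, which must be shown to be $O(N^{1+s/3})$ in general and evaluated more precisely for $s\in\{1,2\}$. Here I would change variables to $x = \cos(t/2)$, turning it into $\int_{0}^{1} \mathcal{C}_{2L}^{(2)}(x)^2 (1-x^2)^{(1-s)/2}\,\mathrm{d}x$ up to constants, i.e. a weighted $L^2$-norm of the Gegenbauer polynomial $\mathcal{C}_{2L}^{(2)}$. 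For $s=1$ the weight is $1$ (flat), so this is exactly $\|\mathcal{C}_{2L}^{(2)}\|_{L^2[0,1]}^2$; for $s=2$ the weight is $(1-x^2)^{-1/2}$, the Chebyshev weight, so the integral is computable by orthogonality after expanding $\mathcal{C}_{2L}^{(2)}$ or by using the derivative relation $\tfrac{d}{dx}\mathcal{U}_{2L+1} = 2\mathcal{C}_{2L}^{(2)}$ from \eqref{eq_chebyshevT_basics}. In both cases one extracts the leading power of $L$ (recalling $N \sim \tfrac43 L^3$, so $L \sim (\tfrac34 N)^{1/3}$) to get the $N^{4/3}$ and $N^{5/3}$ terms with the precise constants $-\tfrac{\sqrt2}{\pi}(\tfrac34)^{4/3}$ and $-\tfrac{4}{15}(\tfrac34)^{5/3}$, plus lower-order error.

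The main obstacle I expect is controlling the weighted Gegenbauer $L^2$-integral for general $s \in (0,3)$, where the weight $(1-x^2)^{(1-s)/2}$ is neither the natural Gegenbauer weight nor flat, and can be singular at the endpoint $x=1$ when $s>1$. The key estimate needed is that this integral grows like $L^{2+s}$ (equivalently $N^{(2+s)/3}$), which should follow from known asymptotics for $\mathcal{C}_{2L}^{(2)}(x)$ — it behaves like $L^2$ near $x=1$ and oscillates with an envelope decaying like $L\,(1-x^2)^{-3/4}$ away from the endpoint — so that the bulk contribution to $\int \mathcal{C}_{2L}^{(2)}(x)^2 (1-x^2)^{(1-s)/2}\,\mathrm{d}x$ concentrates near $x=1$ and scales correctly. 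Making this rigorous requires either a uniform bound à la Szegő for Gegenbauer polynomials or a careful splitting of the integration range into a boundary layer of width $\sim L^{-2}$ and the oscillatory bulk; the boundary-layer piece gives the dominant $L^{2+s}$ contribution and the bulk is lower order. For the two special cases $s=1,2$ this difficulty evaporates because the weight is exactly integrable against the polynomial basis, which is presumably why the proposition singles them out and promises the sharper constants there. Finally, I would combine the leading Beta term with the (negative) correction term and take the infimum over point configurations to conclude.
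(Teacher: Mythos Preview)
Your plan is exactly the paper's: apply \eqref{eq_expected_value_dpp} with the kernel from Lemma~\ref{lem:KH}, use Lemma~\ref{lem_omegaFrobenius} and \eqref{eq_integralSO} to collapse to a one-variable integral, recognize the Beta integral in the leading term, and then control $\int_0^1[\mathcal{C}_{2L}^{(2)}(x)]^2(1-x^2)^{(1-s)/2}\,dx$ by a Szeg\H{o}-type splitting for general $s$ and by exact computation for $s\in\{1,2\}$. The paper does precisely this, invoking \cite[Eq.~7.33.6]{Szego} for the general bound and Lemmas~\ref{lem_gegenbauerAsymptotics} and~\ref{lem_GegenbauerCosineSquared} for the two special cases.

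Your heuristic asymptotics are off by one power of $L$, however, and this is worth fixing since it controls the final exponent. The value at the endpoint is $\mathcal{C}_{2L}^{(2)}(1)=\binom{2L+3}{3}\sim\tfrac{4}{3}L^3$, not $L^2$, and the Szeg\H{o} envelope is $|\mathcal{C}_{2L}^{(2)}(\cos\theta)|\le CL\,\theta^{-2}$, i.e.\ $\sim L(1-x^2)^{-1}$ rather than $L(1-x^2)^{-3/4}$. With these corrections the boundary layer $\theta\in[0,1/L]$ contributes $\sim N^2\cdot L^{-(3-s)}=L^{3+s}$ and the bulk the same order, giving $O(L^{3+s})=O(N^{1+s/3})$; your claimed $L^{2+s}$ would give $O(N^{(2+s)/3})$, which is inconsistent with the $N^{4/3}$ and $N^{5/3}$ terms you (correctly) predict for $s=1,2$.
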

\begin{proof}
We use \eqref{eq_expected_value_dpp}, Lemma \ref{lem:KH}, Lemma  \ref{lem_omegaFrobenius}, invariance of Haar measure, and \eqref{eq_integralSO}:
\begin{equation*}
\begin{split}
  \iint_{\SO}&\frac{\mathcal{K}(\alpha,\alpha)^2-
 \mathcal{K}(\alpha,\beta)^2}{\|\alpha-\beta\|^s_F}\dm(\alpha,\beta)\\
 &=\iint_{\SO}\frac{\big[\mathcal{C}_{2L}^{(2)}(1)\big]^2-
 \big[\mathcal{C}_{2L}^{(2)}\big(\cos\big(\frac{\omega(\alpha^{-1}\beta)}{2}\big)\big)\big]^2}
 {8^{\frac{s}{2}}\big[\sin\big(\tfrac{\omega(\alpha^{-1}\beta)}{2}\big)\big]^s}\dm(\alpha,\beta)\\
&=\frac{2}{8^{\frac{s}{2}}\pi}\int_0^{\pi}\left(N^2-
 \big[\mathcal{C}_{2L}^{(2)}\big(\cos\big(\tfrac{t}{2}\big)\big)\big]^2\right)\sin\big(\tfrac{t}{2}\big)^{2-s}\dt\\
&=\frac{4}{8^{\frac{s}{2}}\pi}N^2\int_0^{\pi/2}\sin(t)^{2-s}\dt-
 \frac{4}{8^{\frac{s}{2}}\pi}\int_0^{1}
  \big[\mathcal{C}_{2L}^{(2)}(t)\big]^2(1-t^2)^{\frac{1-s}{2}}\dt.\\
\end{split}
\end{equation*} 
The next line is, apart of the factor $\tfrac{4}{8^{s/2}\pi} N^2$, the continuous Riesz $s$-energy:
\begin{equation*}
 \begin{split}
  \int_0^{\pi/2}\sin(t)^{2-s}\dt=\int_0^{1}\frac{t^{1-s}\cdot t}{\sqrt{1-t^2}}\dt=\frac{1}{2}\int_0^{1}t^{\frac{1-s}{2}}(1-t)^{-1/2}\dt=\onehalf\mathcal{B}\big(\tfrac{3-s}{2},\tfrac{1}{2}\big).
 \end{split}
\end{equation*}
For $0<s<3$, 
we hence obtain
  \begin{equation*}
   \begin{split}
    \int_0^{1}
  \big[\mathcal{C}_{2L}^{(2)}(t)\big]^2&(1-t^2)^{\frac{1-s}{2}}\dt=\int_0^{\pi/2}\big[C_{2L}^{(2)}(\cos(t))\big]^2\sin(t)^{2-s}\dt\\
  &\leq \int_0^{1/L}\big[C_{2L}^{(2)}(\cos(t))\big]^2t^{2-s}\dt+\int_{1/L}^{\pi/2}\big[C_{2L}^{(2)}(\cos(t))\big]^2t^{2-s}\dt\\
   &\leq \big[C_{2L}^{(2)}(1)\big]^2\frac{t^{3-s}}{3-s}\Big|_0^{1/L}-\frac{CL^2}{1+s}\frac{1}{t^{1+s}}\Big|_{1/L}^{\pi/2}=O(L^{3+s});
   \end{split}
  \end{equation*}
  where we inferred \cite[ Eq. 7.33.6]{Szego}, i.e. for every $c>0$ there is $C\geq0$ such that 
  \begin{equation*}
  |C_{2L}^{(2)}(\cos( \theta))|\leq \frac{CL}{\theta^2},\quad \frac{c}{L}\leq\theta\leq\frac{\pi}{2}.
  \end{equation*}
The case  $s=1$ is Lemma \ref{lem_gegenbauerAsymptotics};  the case $s=2$ follows from Lemma \ref{lem_GegenbauerCosineSquared}:
  \begin{equation*}
   \begin{split}
     \int_0^{1}\frac{
  \big[\mathcal{C}_{2L}^{(2)}(t)\big]^2}{\sqrt{1-t^2}}\dt&= \int_0^{\pi/2}
  \big[\mathcal{C}_{2L}^{(2)}(\cos(t))\big]^2\dt= \sum_{u=0}^{2L}c_{u,u}\frac{\pi}{2}=\frac{8\pi }{15}L^5 +O(L^4),
%   &=  \sum_{j,k=0}^{2L}c_{j,k}\int_0^{\frac{\pi}{2}}\cos\big((L-j)2t\big)\cos\big((L-k)2t\big)\dt\\
%   &=  \sum_{j=k\geq0}^{2L}c_{j,j}\int_0^{\frac{\pi}{2}}\cos\big((L-j)2t\big)^2\dt\\
%   &=\sum_{j=1}^{2L+1}j^2(2L-j+2)^2\frac{\pi}{4}+(L+1)^4\frac{\pi}{4}=\frac{4\pi }{15}L^5 +l.o.t.;
   \end{split}
  \end{equation*}
  where $c_{u,u}=c^2_{u,u}(2L)$ with notation as in Lemma \ref{lem_GegenbauerCosineSquared}.
  \end{proof}
 To use \eqref{eq_integralSO} in the next proof, which is valid for $L^1$ functions  -- we argue as follows: Use Lebesgue's monotone convergence theorem with \eqref{eq_integralSO} on $f_n=\min\{n,f\}\ra f$. We will further use the digamma function $\psi$, see Appendix \ref{appendix}.
\begin{proof}[Proof of Theorem \ref{thm_Riesz3}] We proceed as in the previous proof and use Lemma \ref{lem_GegenbauerCosineSquared}:
  \begin{equation*}
   \begin{split}
\int_0^{\pi/2}\frac{\big[\mathcal{C}_{2L}^{(2)}(1)\big]^2-
 \big[\mathcal{C}_{2L}^{(2)}\big(\cos(t)\big)\big]^2}{\sin(t)}&\dt=
 2\sum_{r=1}^{2L}\int_0^{\pi/2}\frac{1-\cos(2rt)}{\sin(t)}\dt\sum_{u=0}^{2L-r}c_{r+u,u}\\
&\hspace{-1cm}=4\sum_{r=1}^{2L}\int_0^{\pi/2}\mathcal{U}_{r-1}\big(\cos(t)\big)^2\sin(t)\dt\sum_{u=0}^{2L-r}c_{r+u,u}\\
&\hspace{-1cm}=4\sum_{r=1}^{2L}\int_0^{1}\mathcal{U}_{r-1}(t)^2\dt\sum_{u=0}^{2L-r}c_{r+u,u}=(\star).
    \end{split}
  \end{equation*}
  We use \eqref{eq_digamma_sum}:
  $  \int_0^{1}\mathcal{U}_{n}(t)^2\dt=\onehalf\Big(\psi(n+\tfrac{3}{2})+\gamma+\log(4)\Big),
  $
  and obtain
  \begin{equation*}
  \begin{split}
  (\star)&=2(\gamma+\log(4))\sum_{r=1}^{2L}\sum_{u=0}^{2L-r}c_{r+u,u}+
  2\sum_{r=1}^{2L}\psi\big(r+\tfrac{1}{2}\big)\sum_{u=0}^{2L-r}c_{r+u,u}=:S_1+S_2.
  \end{split}
  \end{equation*}
By $c_{r+u,u}=c^2_{r+u,u}(2L)=(r+u+1)(2L-r-u+1)(u+1)(2L-u+1)$, we have
 $$
  \sum_{u=0}^{2L-r}c_{r+u,u}=\frac{16}{15}L^5+\frac{2}{3}L^2r^3-\frac{4}{3}L^3r^2-\frac{r^5}{30}+O_{a+b<5}(L^ar^b),%+\frac{16}{3}L^4-4L^2r^2+\frac{4}{3}Lr^3+l.o.t.
$$
%and since 
% $  \int_0^{1}\mathcal{U}_{n}(t)^2\dt=\onehalf\Big(\psi(n+\tfrac{3}{2})+\gamma+\log(4)\Big),
%$
%see \eqref{eq_digamma_sum}, we obtain
%  \begin{equation*}
% \begin{split}
%  (\star)&=2(\gamma+\log(4))\sum_{r=1}^{2L}\sum_{u=0}^{2L-r}c_{r+u,u}+
%  2\sum_{r=1}^{2L}\psi\big(r+\tfrac{1}{2}\big)\sum_{u=0}^{2L-r}c_{r+u,u}=:S_1+S_2.
% \end{split}
%\end{equation*}
 and hence by well known summation formulas due to Faulhaber:
\begin{equation*}
 \begin{split}
  S_1&=2\big(\gamma+\log(4)\big)\left(\frac{16}{15}L^5 2L+\frac{2}{3}L^24L^4-\frac{4}{3}L^3 \frac{8}{3}L^3-\frac{1}{30}\frac{32}{3}L^6\right)+O(L^5)\\
  &=\frac{16}{9}\big(\gamma+\log(4)\big) L^6+O(L^5).
 \end{split}
\end{equation*}
Invoking Lemma \ref{lem_DigammaAsymptotic} yields
%\begin{equation*}
% \begin{split}
%  \onehalf S_2&=
%  \frac{16}{15}L^5\cdot \big(2L\ \psi\big(2L+\tfrac{3}{2}\big)  -2L\big)  
%  +\frac{2}{3}L^2\cdot\Big(\frac{(2L)^4}{4}\psi\big(2L+\tfrac{3}{2}\big)  -\frac{(2L)^{4}}{4^2}\Big)\\
%  &\hspace{0.5cm}  -  \frac{4}{3}L^3\cdot \Big(\frac{(2L)^{3}}{3}\psi\big(2L+\tfrac{3}{2}\big)  -\frac{(2L)^{3}}{3^2}\Big)
%  -\frac{1}{30}\Big(\frac{(2L)^{6}}{6}\psi\big(2L+\tfrac{3}{2}\big)  -\frac{(2L)^{6}}{6^2}\Big)\\
%  &\hspace{0.5cm}+O(L^5\log(L))\\
%  &=\frac{8}{9}L^6\cdot\psi\big(2L+\tfrac{3}{2}\big)-\frac{14}{9}L^6+O(L^5\log(L)).
% \end{split}
%\end{equation*}
\begin{equation*}
\begin{split}
\onehalf S_2&=
\frac{16}{15}L^5\cdot \big(2L\ \psi(2L)  -2L\big)  
+\frac{2}{3}L^2\cdot\Big(\frac{(2L)^4}{4}\psi(2L)  -\frac{(2L)^{4}}{4^2}\Big)\\
&\hspace{0.5cm}  -  \frac{4}{3}L^3\cdot \Big(\frac{(2L)^{3}}{3}\psi(2L)  -\frac{(2L)^{3}}{3^2}\Big)
-\frac{1}{30}\Big(\frac{(2L)^{6}}{6}\psi(2L)  -\frac{(2L)^{6}}{6^2}\Big)\\
&\hspace{0.5cm}+O(L^5\log(L))\\
&=\frac{8}{9}L^6\cdot\psi(2L)-\frac{14}{9}L^6+O(L^5\log(L)).
\end{split}
\end{equation*}
Since $N^2=\mathcal{C}_{2L}^{(2)}(1)^2=\frac{16}{9}L^6\big(1+O(L^{-1})\big)$, and $\big(\frac{3}{4}N\big)^{1/3}=L\big(1+O(L^{-1})\big)^{1/6}$ we see
$$ \frac{1}{3}\log\big(\tfrac{3}{4}N\big)=\log(L)+O(L^{-1});$$
and with \eqref{eq_digamma_sum}, using harmonic numbers $H_n:=\sum_{k=1}^n \frac{1}{k}= \log(n)+\gamma+O(n^{-1})$:
\begin{equation*}
 \begin{split}
  (\star)&= \frac{16}{9}L^6\cdot\Big(\psi(2L)+\gamma+\log(4)\Big)-\frac{7}{4}\frac{16}{9}L^6+O(L^5\log(L))\\
  &= 2N^2\cdot\Big(H_{4L+1}-\onehalf H_{2L}\Big)-\frac{7}{4}N^2+O(N^{5/3}\log(N))\\
  &= 2N^2\cdot\log\Big(\frac{4L+1}{\sqrt{2L}}\Big)+\frac{4\gamma-7}{4}N^2+O(N^{5/3}\log(N))\\
   &= N^2\cdot\log(L)+\frac{4\gamma+4\log(8)-7}{4}N^2+O(N^{5/3}\log(N))\\
   &= \frac{1}{3}N^2\cdot\log(N)+\frac{1}{3}
   \Big(3\gamma+\log\big(8^3\tfrac{3}{4}\big)-\frac{21}{4}\Big)N^2+O(N^{5/3}\log(N));\\
 \end{split}
\end{equation*}
proving the claim when multiplied by $\tfrac{4}{8^{3/2}\pi}$.
\end{proof}

\begin{lemma}\label{lem_DigammaAsymptotic}
 Let $\psi(t)$ be the digamma function and $m\geq 0$, then%\big(n+\tfrac{3}{2}\big) 
 $$\sum_{k=1}^n k^m \psi\big(k+\tfrac{1}{2}\big)=\frac{n^{m+1}}{m+1}\psi(n) -\frac{n^{m+1}}{(m+1)^2}+O(n^m\log(n)).$$
\end{lemma}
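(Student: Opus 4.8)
The plan is to reduce the sum to the elementary sum $\sum_{k=1}^n k^m\log k$ and then compare the latter with an integral. First I would invoke the standard asymptotic expansion of the digamma function, $\psi(t)=\log t-\tfrac{1}{2t}+O(t^{-2})$, which in particular gives $\psi\big(k+\tfrac12\big)=\log k+O(k^{-1})$. Multiplying by $k^m$ and summing, the error contributes $\sum_{k=1}^n O(k^{m-1})$, which is $O(n^m)$ when $m\ge 1$ and $O(\log n)$ when $m=0$; in either case this is $O(n^m\log n)$, so that
\begin{equation*}
\sum_{k=1}^n k^m\,\psi\big(k+\tfrac12\big)=\sum_{k=1}^n k^m\log k+O(n^m\log n).
\end{equation*}

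Next I would estimate $\sum_{k=1}^n k^m\log k$ by the integral $\int_1^n x^m\log x\dx$. Since $x\mapsto x^m\log x$ is nondecreasing on $[1,\infty)$ (its derivative is $x^{m-1}(m\log x+1)\ge 0$ there), the termwise bounds $\int_{k-1}^{k}x^m\log x\dx\le k^m\log k\le\int_{k}^{k+1}x^m\log x\dx$ telescope to
\begin{equation*}
\sum_{k=1}^n k^m\log k=\int_1^n x^m\log x\dx+O\big(n^m\log n\big),
\end{equation*}
and integration by parts gives $\int_1^n x^m\log x\dx=\tfrac{n^{m+1}}{m+1}\log n-\tfrac{n^{m+1}}{(m+1)^2}+\tfrac{1}{(m+1)^2}$. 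Finally I would translate $\log n$ back into $\psi(n)$: again by $\psi(n)=\log n+O(n^{-1})$ we have $\tfrac{n^{m+1}}{m+1}\log n=\tfrac{n^{m+1}}{m+1}\psi(n)+O(n^m)$. Collecting the three displays and absorbing the constant $\tfrac{1}{(m+1)^2}$ together with all $O(n^m)$ terms into $O(n^m\log n)$ yields the claimed identity.

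There is no serious obstacle here; the only points needing a little care are the bookkeeping of error terms — in particular verifying that the $O(k^{-1})$ error from the digamma asymptotic, once weighted by $k^m$ and summed, remains within $O(n^m\log n)$ — and the monotonicity of $x^m\log x$ on $[1,\infty)$ that justifies the integral comparison. One could instead apply Euler--Maclaurin directly to $x^m\psi\big(x+\tfrac12\big)$, but reducing to $x^m\log x$ first avoids having to deal with a non-elementary antiderivative and keeps all estimates transparent.
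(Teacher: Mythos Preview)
Your proof is correct and follows essentially the same approach as the paper: both use $\psi(t)=\log t+O(t^{-1})$ to reduce to $\sum k^m\log k$, compare with $\int_1^n x^m\log x\,dx$ via monotonicity, and evaluate the antiderivative $\frac{x^{m+1}}{m+1}\log x-\frac{x^{m+1}}{(m+1)^2}$. Your version is simply more explicit about the bookkeeping (the $m=0$ case, the monotonicity check, and the final conversion of $\log n$ back to $\psi(n)$), but the route is the same.
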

\begin{proof}
Since $\psi(t)=\log(t)+O(\tfrac{1}{t})$ for $t>2$, we have
$$\sum_{k=1}^n k^m \psi\big(k+\tfrac{1}{2}\big)
=\int_1^n t^m\log(t)\dt+O(n^{m}\log(n));$$
as the sum can be bounded from above and below by the same integral, apart from integration boundaries, where we obtain the error term. We finish by applying the anti-derivative:
$\frac{t^{m+1}}{m+1}\log(t)- \frac{t^{m+1}}{(m+1)^2}$.
\end{proof}

\section{Green Energy: Proof of Theorem \ref{thm_Green}}
We prove the lower and upper bound separately in the following two sections.

\subsection{Estimate of the Green Energy: Lower Bound}
We follow an exposition due to N. Elkies, found in \cite[pp. 149-154]{Lang}. This has been pointed out to the authors by E. Saff, and his help is thankfully acknowledged.

%First we remind equations \eqref{eq_green_function} and \eqref{eq_chebychevU} and
% First a reminder; for $\alpha,\beta\in\SO$:
% \begin{equation*}
% \mathcal{G}(\alpha,\beta)=\sum_{\ell=1}^{\infty}\frac{2\ell+1}{\ell(\ell+1)}\mathcal{U}_{2\ell}
% \Big(\cos\big(\tfrac{\omega(\alpha^{-1}\beta)}{2}\big)\Big),
% \end{equation*}
% also
% \begin{equation*}
% \sum_{m=-\ell}^\ell\sum_{n=-\ell}^\ell \mathcal{D}_{m,n}^\ell(\alpha)
% \overline{\mathcal{D}_{m,n}^\ell(\beta)}=\mathcal{U}_{2\ell}
% \Big(\cos\big(\tfrac{\omega(\alpha^{-1}\beta)}{2}\big)\Big).
% \end{equation*}
%note that we obtain for a collection of points $\alpha_1,\ldots,\alpha_N\in\SO$:
%\begin{equation*}
%\sum_{s, k=1}^{N}\sum_{m=-\ell}^\ell\sum_{n=-\ell}^\ell \mathcal{D}_{m,n}^\ell(\alpha_s)
%\overline{\mathcal{D}_{m,n}^\ell(\alpha_k)}=\sum_{m=-\ell}^\ell\sum_{n=-\ell}^\ell
%\left|\sum_{ k=1}^{N} \mathcal{D}_{m,n}^\ell(\alpha_k)\right|^2.
%\end{equation*}
%which by $\mathcal{U}_{2\ell}(1)=2\ell+1$, can be written  as
%\begin{equation*}\label{eq_sumOfChebychevU}
%\sum_{s, k=1}^{N}\mathcal{U}_{2\ell}
%\Big(\cos\big(\tfrac{\omega(\alpha_s^{-1}\alpha_k)}{2}\big)\Big)=
%\sum_{s\neq k}^{N}\mathcal{U}_{2\ell}
%\Big(\cos\big(\tfrac{\omega(\alpha_s^{-1}\alpha_k)}{2}\big)\Big)+N(2\ell+1)\geq 0.
%\end{equation*}
The idea is to find a function with nice properties smaller than $\G$, and to bound its energy from below. For $\alpha,\beta\in\SO$ and $t>0$, the following will do:
\begin{equation*}
\mathcal{G}_t(\alpha,\beta)=\sum_{\ell=1}^{\infty}e^{-\ell(\ell+1)\cdot t}\frac{2\ell+1}{\ell(\ell+1)}\mathcal{U}_{2\ell}
\Big(\cos\big(\tfrac{\omega(\alpha^{-1}\beta)}{2}\big)\Big).
\end{equation*}
To show that it really is smaller, we infer an adaptation of \cite[Lem. 5.2]{Lang}.
\begin{lemma}[N. Elkies]\label{lem_Elkies}
	For all $t>0$ and $\alpha\neq \beta$ we have
	$$\mathcal{G}(\alpha,\beta)\geq\mathcal{G}_t(\alpha,\beta)-t.$$
\end{lemma}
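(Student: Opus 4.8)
The plan is to compare $\mathcal{G}$ and $\mathcal{G}_t$ termwise through the spectral expansion \eqref{eq_green_function}, and to control the difference
$$
\mathcal{G}(\alpha,\beta)-\mathcal{G}_t(\alpha,\beta)=\sum_{\ell=1}^\infty\bigl(1-e^{-\ell(\ell+1)t}\bigr)\frac{2\ell+1}{\ell(\ell+1)}\,\mathcal{U}_{2\ell}\Bigl(\cos\bigl(\tfrac{\omega(\alpha^{-1}\beta)}{2}\bigr)\Bigr).
$$
The key analytic input is that each coefficient $1-e^{-\ell(\ell+1)t}$ is nonnegative, so one wants to say the whole sum is bounded below by $-t$ uniformly in $\alpha,\beta$. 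Since $|\mathcal{U}_{2\ell}(\cos\theta)|$ can be as large as $2\ell+1$, a crude absolute-value bound does not converge; the trick (following Elkies/Lang) is to exploit cancellation. First I would write $w:=\omega(\alpha^{-1}\beta)$ and recall $\mathcal{U}_{2\ell}(\cos(w/2))=\sin((2\ell+1)\tfrac w2)/\sin(\tfrac w2)$, so the tail is a Fourier-type series; the partial-summation / Abel-summation argument in \cite{Lang} then shows the series, evaluated at any $w\in(0,\pi]$, stays above $-t$ because $1-e^{-\ell(\ell+1)t}\le \ell(\ell+1)t$ makes the "large $\ell$" part comparable to $t$ times a conditionally convergent series whose partial sums are bounded.

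More concretely, the steps I would carry out are: (i) split $1-e^{-\ell(\ell+1)t}$ and use $0\le 1-e^{-x}\le\min\{1,x\}$ so that $\bigl(1-e^{-\ell(\ell+1)t}\bigr)\frac{2\ell+1}{\ell(\ell+1)}\le (2\ell+1)\,t$ for all $\ell$, while also this quantity is summable against the bounded partial sums of $\sum \mathcal{U}_{2\ell}(\cos(w/2))$; (ii) apply Abel summation, writing $S_M:=\sum_{\ell=1}^M \mathcal{U}_{2\ell}(\cos(w/2))$, which telescopes via \eqref{eq_chebyshevT_basics} to $\tfrac12\bigl(\mathcal{U}_{2M+1}(\cos(w/2))-\mathcal{U}_1(\cos(w/2))\bigr)/\!$-type closed forms and is therefore bounded by a constant depending only on $w$ (indeed by $1/\sin(w/2)$-type quantities), but crucially the coefficient sequence $a_\ell:=\bigl(1-e^{-\ell(\ell+1)t}\bigr)\frac{2\ell+1}{\ell(\ell+1)}$ is eventually decreasing and $a_\ell\to 2\cdot\!\big(\text{something}\big)$ — actually $a_\ell\to 0$? no, $a_\ell\to \frac{2\ell+1}{\ell(\ell+1)}\to 0$, so Abel summation gives convergence and a bound $|\sum_{\ell>M}a_\ell\mathcal{U}_{2\ell}|\le C(w)\,a_{M+1}$; (iii) choose the splitting point $M\sim t^{-1/2}$ so that for $\ell\le M$ one uses $1-e^{-\ell(\ell+1)t}\le \ell(\ell+1)t$ giving a contribution $\ge -t\sum_{\ell\le M}(2\ell+1)\ge -t\cdot O(M^2)=-O(1)$ — hmm, this only gives $-O(1)$, not $-t$.

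The honest situation is that the clean bound "$\ge -t$" must be coming from a sharper identity, so the step I expect to be the real obstacle is producing the exact constant $t$ rather than a generic $O(1)$ or $C(w)$. I would therefore look for a closed form: since $\sum_{\ell\ge1}\frac{2\ell+1}{\ell(\ell+1)}e^{-\ell(\ell+1)t}\mathcal{U}_{2\ell}(\cos(w/2))$ is, up to the $\frac{1}{\sin(w/2)}$ factor, essentially a theta-like heat series, and at $w\to 0$ one has $\mathcal{G}(\alpha,\beta)\to+\infty$ while $\mathcal{G}_t(\alpha,\alpha)-t$ is finite, the inequality is easy near the diagonal; away from the diagonal one integrates the heat kernel bound. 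Concretely I would use that $\mathcal{G}-\mathcal{G}_t=\int_0^t (\text{heat semigroup action on }\mathcal{G})\,ds$ and that the heat kernel $p_s(\alpha,\beta)\ge 0$ with $\int p_s\,d\mu=1$, together with $\Delta_g\mathcal{G}(\cdot,\beta)=\delta_\beta-1$, to get $\mathcal{G}(\alpha,\beta)-\mathcal{G}_t(\alpha,\beta)=\int_0^t\bigl(p_s(\alpha,\beta)-1\bigr)ds\ge -\int_0^t 1\,ds=-t$, using $p_s\ge0$. That is the slick route, and I would present it this way: identify $\mathcal{G}_t(\alpha,\beta)=\int_0^\infty\!\!\!\big(p_{s+t}(\alpha,\beta)-1\big)\,ds$ versus $\mathcal{G}(\alpha,\beta)=\int_0^\infty\!\big(p_s(\alpha,\beta)-1\big)ds$, subtract, and bound the finite integral $\int_0^t(p_s-1)\,ds$ from below by $-t$ via positivity of the heat kernel — the main thing to check carefully is the interchange of the spectral sum with the time integral and the convergence of these integrals at $s=0$, which is where $\alpha\neq\beta$ is used.
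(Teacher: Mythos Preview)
Your final ``slick route'' is correct and is in fact the same idea as the paper's proof, only packaged differently. Both arguments rest on the identity
\[
\mathcal{G}(\alpha,\beta)-\mathcal{G}_t(\alpha,\beta)=\int_0^t\bigl(p_s(\alpha,\beta)-1\bigr)\,ds,
\]
and then on the positivity $p_s\ge 0$ of the heat kernel to conclude the integral is $\ge -t$. The difference is where that positivity comes from: you invoke it as a known fact about compact Riemannian manifolds, while the paper derives it in situ by pairing $h_t=-\partial_t\mathcal{G}_t=p_t-1$ against nonnegative test functions $\phi$, checking that $u(\alpha,t)=\int h_t(\alpha,\beta)\phi(\beta)\,d\mu(\beta)$ satisfies the heat equation, and applying the strong maximum principle (Theorem~\ref{thm_manifoldstrongmaximum}) to get $u(\alpha,t)\ge -\phi_0$, i.e.\ $p_t\ge 0$ weakly. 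Your presentation is shorter and more transparent; the paper's is more self-contained. The justification you flag at the end---interchanging the spectral sum with $\int_0^\infty ds$ and the integrability of $p_s(\alpha,\beta)$ near $s=0$ for $\alpha\neq\beta$---is exactly what needs to be checked, and it goes through by standard heat-kernel estimates on compact manifolds. Your earlier Abel-summation attempts were rightly abandoned: they cannot produce the sharp constant $t$, which really does come from $\int_0^t 1\,ds$.
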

\begin{proof}
%	First let $u\in\Cinfty(\SO\times \Rnonneg)$ satisfy the heat equation (note the $\R^n$ Laplacian)
%	$$\Delta u=\partial_tu,$$
%	 we either have that $u$ is constant, or that for every $t>0$
%	\begin{equation*}
%	\sup_{\alpha\in \SO}u(\alpha,t)<\sup_{\alpha\in \SO}u(\alpha,0).
%	\end{equation*}
%	This can be shown using a frame of orthogonal geodesics in $M$ (existence?) as in the first encounter and referring to the theory in $\R^n$. 
		Using uniform convergence, we differentiate term by term and define
	$$ h_t(\alpha,\beta):=-\partial_t\G_t(\alpha,\beta)=\sum_{\ell=1}^{\infty}e^{-\ell(\ell+1)\cdot t}(2\ell+1)\sum_{m=-\ell}^\ell\sum_{n=-\ell}^\ell \mathcal{D}_{m,n}^\ell(\alpha)
	\overline{\mathcal{D}_{m,n}^\ell(\beta)}.$$
	Given a smooth test function $\phi$, with uniformly converging representation as $\sum^{\infty}_{\ell=0}\phi_\ell$, where $\phi_\ell=\sum_{m,n}\varphi_{m,n}^\ell\mathcal{D}_{m,n}^\ell\sqrt{2\ell+1}$, we  set
	$$ u(\alpha,t):=\int_{\SO}h_t(\alpha,\beta)\phi(\beta)\dm(\beta)=\sum_{\ell=1}^{\infty}e^{-\ell(\ell+1)\cdot t}\phi_\ell(\alpha),$$
	where we interchanged integration and summation by uniform convergence and used that $\{\mathcal{D}_{m,n}^\ell\sqrt{2\ell+1}\}$ is an orthonormal basis. Now we have uniformly
	$$\lim_{t\ra 0}u(\alpha,t)=\phi(\alpha)-\int_{\SO}\phi(\beta)\dm(\beta)=\phi(\alpha)-\phi_0.$$
	For $t>0$ fixed, we can interchange differentiation and integration yielding
	$$\Delta_gu(\alpha,t)+\partial_tu(\alpha,t)=0.$$
	By the strong maximum principle Theorem \ref{thm_manifoldstrongmaximum}, we have for every $t>0$:
		\begin{equation*}
		\min_{\alpha\in\SO}u(\alpha,t)\geq\min_{\alpha\in\SO}u(\alpha,0).
		\end{equation*}
    The same PDE and estimates hold for
	$$ v(\alpha,t)=u(\alpha,t)+\phi_0.$$
	If $\phi\geq0$, then so is $v(\alpha,t)$ for all $t>0$ by the maximum principle as $v(\alpha,0)=\phi(\alpha)$. Hence 
	$$u(\alpha,t)=v(\alpha,t)-\phi_0\geq-\phi_0 \hspace{0.3cm}\mbox{ for }\phi\geq0.$$
	We further set
	$$\mathds{I}(\alpha,t):=\int_{\SO}\G_t(\alpha,\beta)\phi(\beta)\dm(\beta)=\sum_{\ell=1}^{\infty}e^{-\ell(\ell+1)\cdot t}\frac{\phi_\ell(\alpha)}{\ell(\ell+1)},$$
	where we interchanged sum and integral again. The limit $t\ra 0$ exists and equals the integral of $\G(\alpha,\beta)\cdot\phi(\beta)$. Differentiating term-wise for $t>0$ yields
	$$ \partial_t\mathds{I}(\alpha,t)=-\sum_{\ell=1}^{\infty}e^{-\ell(\ell+1)\cdot t}\phi_\ell(\alpha)=-u(\alpha,t)\leq\phi_0\hspace{0.3cm}\mbox{ for }\phi\geq0.$$
	Finally, for fixed $\alpha$ let $t>\epsilon>0$, then by the fundamental theorem of calculus:
	\begin{equation*}
		\lim_{\epsilon\ra 0}\mathds{I}(\alpha,t)-\mathds{I}(\alpha,\epsilon)=\lim_{\epsilon\ra 0}\int_{\epsilon}^{t}-u(\alpha,t)\dt\leq \phi_0\cdot t
	\end{equation*}
	and thus, for all non-negative test functions $\phi$
	\begin{equation*}
	\int_{\SO}\Big(\G_t(\alpha,\beta)-\G(\alpha,\beta)-t\cdot 1\Big)\phi(\beta)\dm(\beta)\leq 0.
	\end{equation*}
	Since $\G(\alpha,\beta)$ is continuous and locally integrable in $\beta$ away of $\alpha$, this proves the lemma.
\end{proof}
Now by Lemma \ref{lem_Elkies}, we have for some $t>0$ which will be determined later, and some collection of distinct points $\{\alpha_1,\ldots,\alpha_N\}\subset\SO$:
\begin{equation*}
\begin{split}
& \sum_{s\neq k}^{N}\mathcal{G}(\alpha_s,\alpha_k)+N(N-1)2t\geq
\sum_{s\neq k}^{N}\mathcal{G}_{2t}(\alpha_s,\alpha_k)\\
&=\sum_{\ell=1}^{\infty}\frac{2\ell+1}{\ell(\ell+1)}
\sum_{m=-\ell}^\ell\sum_{n=-\ell}^\ell\sum_{s\neq k}^{N}e^{-\ell(\ell+1)\cdot 2t}
\mathcal{D}_{m,n}^\ell(\alpha_s)
\overline{\mathcal{D}_{m,n}^\ell(\alpha_k)}\ \ =\\
&\sum_{\ell=1}^{\infty}\frac{2\ell+1}{\ell^2+\ell}\sum_{m=-\ell}^\ell\sum_{n=-\ell}^\ell 
\left(\bigg|\sum_{k=1}^{N}e^{-\ell(\ell+1)\cdot t}
\mathcal{D}_{m,n}^\ell(\alpha_k)\bigg|^2-\sum_{k=1}^{N}e^{-\ell(\ell+1)\cdot 2t}
\Big|\mathcal{D}_{m,n}^\ell(\alpha_k)\Big|^2\right)\\
&\geq -\sum_{\ell=1}^{\infty}\frac{2\ell+1}{\ell(\ell+1)}\sum_{m=-\ell}^\ell\sum_{n=-\ell}^\ell 
\sum_{k=1}^{N}e^{-\ell(\ell+1)\cdot 2t}
\Big|\mathcal{D}_{m,n}^\ell(\alpha_k)\Big|^2=-N\mathcal{G}_{2t}(\alpha,\alpha).
\end{split}
\end{equation*}
Thus our remaining task is to find an asymptotic for $\G_t(\alpha,\alpha)$ in $t$.
First we note that
$$\frac{e^{-\ell(\ell+1)\cdot t}}{\ell(\ell+1)}=4\frac{e^{-\ell(\ell+1)\cdot t}}{(2\ell+1)^2}\Big(1+\frac{1}{4\ell(\ell+1)}\Big)
=4\frac{e^{-\ell(\ell+1)\cdot t}}{(2\ell+1)^2}+\frac{C_\ell}{\ell^4},$$
where $C_\ell<1/4$ is  some constant. For $0<t\ll1$ we then obtain
\begin{equation}\label{eq_GreenTea}
\begin{split}
\mathcal{G}_t(\alpha,\alpha)&=\sum_{\ell=1}^{\infty}e^{-\ell(\ell+1)\cdot t}\frac{(2\ell+1)^2}{\ell(\ell+1)}=\sum_{\ell=1}^{\infty}\left(e^{-\ell(\ell+1)\cdot t}4+\frac{e^{-\ell(\ell+1)\cdot t}}{\ell(\ell+1)}\right)\\
%  &=4e^{t/4}\int_{c_t}^{\infty}e^{-(2x+1)^2 t/4}+\frac{e^{-(2x+1)^2 t/4}}{(2x+1)^2}\dx+O(1)\\
&=4e^{t/4}\int_{0}^{\infty}e^{-(2x+1)^2 t/4}+\frac{e^{-(2x+1)^2 t/4}}{(2x+1)^2}\dx+O(1)\\
%\end{split}
%\end{equation*}
%\begin{equation}
%\begin{split}
 %\hphantom{\mathcal{G}_t(\alpha,\alpha)} 
 &=2e^{t/4}\int_{1}^{\infty}e^{-x^2 t/4}+\frac{t}{4}\frac{e^{-x^2 t/4}}{x^2t/4}\dx+O(1)\\
 &=\frac{4e^{t/4}}{\sqrt{t}}\int_{\sqrt{t}/2}^{\infty}e^{-x^2}+\frac{t}{4}\frac{e^{-x^2}}{x^2}\dx+O(1)\\
 &= 2e^{t/4}\sqrt{\frac{\pi}{t}}-
 \sqrt{t}e^{t/4}\Big[\sqrt{\pi}\cdot\mbox{erf}(x)+\frac{e^{-x^2}}{x}\Big]_{\sqrt{t}/2}^{\infty}+O(1)\\
%  &=2\sqrt{\frac{\pi}{t}}-O(\sqrt{t})-\sqrt{\pi t}+t+2+O(1)\\
 &=2\sqrt{\frac{\pi}{t}}+O(1);
\end{split}
\end{equation}
with 
$$\mathrm{erf}(x):=\frac{2}{\sqrt{\pi}}\int_{0}^{x}e^{-y^2}\dy. $$

If we choose $t=\tfrac{\sqrt[3]{\pi}}{2N^{2/3}}$, then by \eqref{eq_GreenTea}
\begin{equation*}
 \mathcal{G}_{2t}(\alpha,\alpha)=2\sqrt[3]{\pi} N^{\frac{1}{3}}+O(1),
\end{equation*}
and hence
$$\sum_{s\neq k}^{N}\mathcal{G}(\alpha_s,\alpha_k)\geq-3\sqrt[3]{\pi}N^{\frac{4}{3}}+O(N),$$
proving the lower bound in Theorem \ref{thm_Green}.
\subsection{Estimate of the Green Energy: Upper Bound}
% \subsection{Estimate of the Green Energy: proof of the Main Theorem.}
According to \eqref{eq_expected_value_dpp}, we have to estimate the integral
\begin{equation*}
 I=\iint_{\SO}\mathcal{G}(\alpha,\beta)\left(\mathcal{K}(\alpha,\alpha)^2-
 \mathcal{K}(\alpha,\beta)^2\right)\dm(\alpha,\beta),
\end{equation*}
which by Lemmas \ref{lem:Green} and \ref{lem:KH} and by invariance of  Haar measure equals 
\begin{equation*}
 \int_{\SO}\Big(\big(\pi-\omega(\alpha)\big)
  \cot\big(\tfrac{\omega(\alpha)}{2}\big)-1\Big)\left(\mathcal{C}_{2L}^{(2)}(1)^2-
 \Big[\mathcal{C}_{2L}^{(2)}\big(\cos\big(\tfrac{\omega(\alpha)}{2}\big)\big)\Big]^2\right)\dm(\alpha).
\end{equation*}
The integrand is in $L^1(\SO)$ since the singularity of the cotangent is removed by the zero of the difference of Gegenbauer polynomials, thus being a continuous function on a compact set. We hence can apply \eqref{eq_integralSO} getting:
  \begin{equation*}
I= \frac{2}{\pi}\int_0^{\pi}\Big(\big(\pi-t\big)
  \cot\big(\tfrac{t}{2}\big)-1\Big)\left(\mathcal{C}_{2L}^{(2)}(1)^2-
 \Big[\mathcal{C}_{2L}^{(2)}\big(\cos\big(\tfrac{t}{2}\big)\big)\Big]^2\right)\sin^2\big(\tfrac{t}{2}\big)\dt.
\end{equation*}
Since 
$$\int_0^{\pi}\Big(\big(\pi-t\big)
  \cot\big(\tfrac{t}{2}\big)-1\Big)\sin^2\big(\tfrac{t}{2}\big)\dt=0,$$
  we indeed have
    \begin{equation}\label{eq_energy_integral}
    -I=\frac{2}{\pi}\int_0^{\pi}\Big(\big(\pi-t\big)
  \cot\big(\tfrac{t}{2}\big)-1\Big)\Big[\mathcal{C}_{2L}^{(2)}\big(\cos\big(\tfrac{t}{2}\big)\big)\Big]^2\sin^2\big(\tfrac{t}{2}\big)\dt.
\end{equation}   
We simplify by noticing that
\begin{equation*}
\begin{split}
  \int_0^{\pi}\Big[\mathcal{C}_{2L}^{(2)}\big(\cos\big(\tfrac{t}{2}\big)\big)\Big]^2
  \sin^2\big(\tfrac{t}{2}\big)\dt&=2\int_{0}^{1}
  \big[\mathcal{C}_{2L}^{(2)}(t)\big]^2 \sqrt{1-t^2}\dt\\
  &=\int_{-1}^{1}
  \big[\mathcal{C}_{2L}^{(2)}(t)\big]^2 \sqrt{1-t^2}\dt\\
  &= \int_{-1}^{1}  \big[\mathcal{C}_{2L}^{(2)}(t)\big]^2  \sqrt{1-t^2}(1+t)\dt,
\end{split}
\end{equation*}
where we used that odd functions integrate to zero over symmetric
intervals.
But
\begin{equation}\label{eq_GegenbauerSpecialFormula}
 %\int_{-1}^{1} \big[\mathcal{C}_{2L}^{(2)}(t)\big]^2  \sqrt{1-t^2}(1+t)\dt= 
 \int_{-1}^{1}
   \big[\mathcal{C}_{2L}^{(2)}(t)\big]^2  \sqrt{1-t}(1+t)^{3/2}\dt=\frac{\pi}{2}\binom{2L+3}{2L},
\end{equation}
by the following equality, valid for $\nu>\tfrac{1}{2}$ and
 found in \cite[Eq. 7.314, p.789]{Gradshteyn}:
 \begin{equation}\label{eq_equation_Gegenbauer}
  \int_{-1}^1 (1-x)^{\nu-\tfrac{3}{2}}(1+x)^{\nu-\tfrac{1}{2}}
  \big|\mathcal{C}_{n}^{(\nu)}(x)\big|^2\dx=\frac{\pi^{1/2}
  \Gamma(\nu-\tfrac{1}{2})\Gamma(2\nu+n)}{n!\Gamma(\nu)\Gamma(2\nu)}.
 \end{equation}
 We have then proved that
 \begin{equation*}
  \begin{split}
   -I&=\frac{2}{\pi}\int_0^{\pi}\big(\pi-t\big)
  \cot\big(\tfrac{t}{2}\big)\Big[\mathcal{C}_{2L}^{(2)}\big(\cos\big(\tfrac{t}{2}\big)\big)\Big]^2
  \sin^2\big(\tfrac{t}{2}\big)\dt+O(L^3)\\
  &=\frac{4}{\pi}\int_{0}^{1}\big(\pi-2\cos^{-1}(t)\big)
  \cdot t\cdot\big[\mathcal{C}_{2L}^{(2)}(t)\big]^2\dt+O(L^3)\\
%   &=c\frac{4}{\pi}\int_{0}^{1}t^2\cdot\big[\mathcal{C}_{2L}^{(2)}(t)\big]^2\dt+O(L^3)
  &=4\int_{0}^{1}t\cdot\big[\mathcal{C}_{2L}^{(2)}(t)\big]^2\dt
  -\frac{4}{\pi}\int_{0}^{1}2\cos^{-1}(t)
  \cdot t\cdot\big[\mathcal{C}_{2L}^{(2)}(t)\big]^2\dt+O(L^3).
  \end{split}
 \end{equation*}
 Next we use Lemma \ref{lem_gegenbauerIdentity} and Lemma \ref{lem_gegenbauerAsymptotics} in
 $$ \int_{0}^{1}t^2\cdot\big[\mathcal{C}_{2L}^{(2)}(t)\big]^2\dt<
 \int_{0}^{1}t\cdot\big[\mathcal{C}_{2L}^{(2)}(t)\big]^2\dt<\int_{0}^{1}\big[\mathcal{C}_{2L}^{(2)}(t)\big]^2\dt,$$
 and obtain
 $$  \int_{0}^{1}t\cdot\big[\mathcal{C}_{2L}^{(2)}(t)\big]^2\dt=L^4+O(L^3).$$
Finally we use
 \begin{equation*}
 0\leq 2\cos^{-1}(t)\leq \pi\sqrt{1-t}, \hspace{0.5cm} \mbox{ for }t\in[0,1]
\end{equation*}
so that, by \eqref{eq_GegenbauerSpecialFormula}
\begin{equation*}
 \begin{split}
\int_{0}^{1}2\cos^{-1}(t)
  \cdot t\cdot\big[\mathcal{C}_{2L}^{(2)}(t)\big]^2\dt&<
  \int_{0}^{1}\pi\sqrt{1-t}
  \cdot t\cdot\big[\mathcal{C}_{2L}^{(2)}(t)\big]^2\dt\\
  &<\pi\int_{-1}^{1}
   \big[\mathcal{C}_{2L}^{(2)}(t)\big]^2  \sqrt{1-t}(1+t)^{3/2}\dt=O(L^3).
 \end{split}
\end{equation*}
Hence
$$ I=-4L^4+O(L^3),$$
and the upper bound in Theorem \ref{thm_Green}  follows from $N=\frac{4}{3}L^3+O(L^2)$.

% The fact $4L^3\approx 3N$ completes the argument.

\section{Variance: Proof of Theorem \ref{thm_variance}}
Let $A=B(\fatone,2\varepsilon)\subseteq\SO$ be  as in the introduction, namely
\[
A=\{\beta\in\SO:\omega(\beta)<2\varepsilon\}=\big\{\beta\in\SO\ : \|\beta-\fatone\|_F<\sqrt{8}\sin(\varepsilon) \big\},
\]
where equality follows from Lemma \ref{lem_omegaFrobenius}. Note that by rotation invariance it suffices to study the variance of the random variable
\[
\eta_{A}=\sum_{k=1}^N\chi_{A}(\alpha_k),
\]
where $\alpha_1,\ldots,\alpha_N$ are generated by our dpp. The expected value of $\eta_\epsilon$ satisfies $\mathbb{E}[\eta_\epsilon]=\mu(A)N$, and the variance of $\eta_{A}$ is by definition (using $\chi_{A}(\alpha_k)^2=\chi_{A}(\alpha_k)$):
\[
\var(\eta_{A})=\mathbb{E}[\eta_{A}^2]-\mathbb{E}[\eta_{A}]^2=\mathbb{E}\bigg[\sum_{i\neq j}\chi_{A}(\alpha_i)\chi_{A}(\alpha_j)\bigg]+\mu(A)N-\mu(A)^2N^2.
\]
The expected value of the right-hand side equals by \eqref{eq_expected_value_dpp}, with $f(x,y)=\chi_A(x)\chi_A(y)$
\begin{multline*}
\iint_{\alpha,\beta\in A}\big[\mathcal{C}_{2L}^{(2)}(1)\big]^2-
\big[\mathcal{C}_{2L}^{(2)}\big(\cos\big(\tfrac{\omega(\alpha^{-1}\beta)}{2}\big)\big)\big]^2\dm(\beta,\alpha)=\\\mu(A)^2N^2-\iint_{\alpha,\beta\in A}
\big[\mathcal{C}_{2L}^{(2)}\big(\cos\big(\tfrac{\omega(\alpha^{-1}\beta)}{2}\big)\big)\big]^2\dm(\beta,\alpha).
\end{multline*}
In other words, we have
\[
\var(\eta_{A})=\mu(A)N-\iint_{\alpha,\beta\in A}
\big[\mathcal{C}_{2L}^{(2)}\big(\cos\big(\tfrac{\omega(\alpha^{-1}\beta)}{2}\big)\big)\big]^2\dm(\beta,\alpha),
\]
and therefore, using invariance of Haar measure, \eqref{eq_integralSO} and \eqref{eq_GegenbauerSpecialFormula}
\begin{equation*}
	\begin{split}
	\var(\eta_{A})-&\int_A\int_{A^c} \big[\mathcal{C}_{2L}^{(2)}\big(\cos\big(\tfrac{\omega(\alpha^{-1}\beta)}{2}\big)\big)\big]^2\dm(\beta)\dm(\alpha)\\
	&=\mu(A)N-\int_{\SO}\chi_A(\alpha)\int_{\SO} \big[\mathcal{C}_{2L}^{(2)}\big(\cos\big(\tfrac{\omega(\beta)}{2}\big)\big)\big]^2\dm(\beta)\dm(\alpha)\\
	&=
	\mu(A)N-\int_{\SO}\chi_A(\alpha)\cdot N\dm(\alpha)=0.
	\end{split}
\end{equation*}
%\begin{multline*}
%\var(\eta_{A})-\int_A\int_{A^c} \big[\mathcal{C}_{2L}^{(2)}\big(\cos\big(\frac{\omega(\alpha^{-1}\beta)}{2}\big)\big)\big]^2\dm(\beta)\dm(\alpha)=\\
%\mu(A)N-\int_{\alpha\in A}\int_{\beta\in \SO} \big[\mathcal{C}_{2L}^{(2)}\big(\cos\big(\frac{\omega(\alpha^{-1}\beta)}{2}\big)\big)\big]^2\dm(\beta)\dm(\alpha)=\\
%\mu(A)N-\int_{\alpha\in A} \big[\mathcal{C}_{2L}^{(2)}\big(\cos\big(\frac{\omega(\alpha)}{2}\big)\big)\big]^2\dm(\alpha).
%\end{multline*}
%On the other hand, again from \eqref{eq_expected_value_dpp} applied to $f(x,y)=\chi_A(x)$ we have
%\begin{multline*}
% \mu(A)(N^2-N) =\mathbb{E}\left(\sum_{i\neq j} \chi_A(x_i)\right)=\\\iint_{\alpha\in A,\beta\in \SO}\big[\mathcal{C}_{2L}^{(2)}(1)\big]^2-
%\big[\mathcal{C}_{2L}^{(2)}\big(\cos\big(\frac{\omega(\alpha^{-1}\beta)}{2}\big)\big)\big]^2\dm(\beta)\dm(\alpha)=\\\int_{\alpha\in A}\big[\mathcal{C}_{2L}^{(2)}(1)\big]^2-
%\big[\mathcal{C}_{2L}^{(2)}\big(\cos\big(\frac{\omega(\alpha)}{2}\big)\big)\big]^2\dm(\alpha)=\\
%\mu(A)N^2 -\int_{\alpha\in A} \big[\mathcal{C}_{2L}^{(2)}\big(\cos\big(\frac{\omega(\alpha)}{2}\big)\big)\big]^2\dm(\alpha),
%\end{multline*}
%that is
%\[
%\int_{\alpha\in A} \big[\mathcal{C}_{2L}^{(2)}\big(\cos\big(\frac{\omega(\alpha)}{2}\big)\big)\big]^2\dm(\alpha)=\mu(A)N.
%\]
All in one we have proved the variance version of \cite[Eq. 28]{Rider}:
\[
\var(\eta_{A})=\int_A\int_{A^c} \big[\mathcal{C}_{2L}^{(2)}\big(\cos\big(\tfrac{\omega(\alpha^{-1}\beta)}{2}\big)\big)\big]^2\dm(\beta)\dm(\alpha).
\]
%We quote a formula for the covariance, found in \cite[Eq. (28)]{Rider}
%$$ Cov(f,g)=\onehalf\iint(f(z)-f(w))(g(z)-g(w))| \mathcal{K}_H(z,w)|^2\dm(z)\dm(w),$$
%and valid for $f,g\in L^1\cap H^1$, bounded and with compact support.
%% We note the property of joint intensitiy functions
%% $$\mathbb{E}\big[\chi(A\times A^c)\big]=\int_A\int_{A^c}det(K(x,y))\dm(x)\dm(y), $$
%which we use for the variance of $\eta_{\alpha,\varepsilon}$??
%\begin{equation*}
% \var(\chi_A)=\mathbb{E}\big[(\chi_A-\mathbb{E}[\chi_A])^2\big]
%\end{equation*}
%satisfies??
%We use a formula and trick found in \cite{Beltran}.
%\begin{equation}
% \var(\eta)=\int_A\int_{A^c} K(\alpha,\beta)^2\dm(\beta)\dm(\alpha),
%\end{equation}
%where $A=B_{\omega}(\fatone,2\varepsilon):=\{\alpha\in\SO\ : \omega(\alpha)<2\varepsilon\}$ for $2\varepsilon\leq\pi$. 
Now, note that
$$A^c=\big\{\beta\in\SO\ : \|\beta-\fatone\|_F\geq\sqrt{8}\sin(\varepsilon) \big\},$$
and by the triangle inequality: 
$ \|\beta-\fatone\|_F \leq \|\beta-\alpha\|_F+\|\fatone-\alpha\|_F$ for $\alpha\in A$, we see
\begin{equation*}
  A^c\subset S_{\alpha}:=\big\{ \beta\in\SO\ : \omega(\alpha^{-1}\beta)\geq f\big(\omega(\alpha)\big)\big\},
\end{equation*}
where $f\big(\omega(\alpha)\big):=2\arcsin\big(\sin(\varepsilon)-\sin\big(\tfrac{\omega(\alpha)}{2}\big)\big)$.
 Thus, for the characteristic function $\chi_\alpha$ of $S_{\alpha}$, we  integrate over $\SO$  and use \eqref{eq_integralSO}:
 \begin{equation*}
 \begin{split}
  \int\chi_\alpha(\beta)\Big[\mathcal{C}_{2L}^{(2)}\big(\cos\big(\tfrac{\omega(\alpha^{-1}\beta)}{2}\big)\big)\Big]^2\dm(\beta) &=\int\chi_\alpha(\alpha\beta)\Big[\mathcal{C}_{2L}^{(2)}\big(\cos\big(\tfrac{\omega(\beta)}{2}\big)\big)\Big]^2\dm(\beta)\\  
  &=\frac{4}{\pi}\int_{\frac{f(\omega(\alpha))}{2}}^{\pi/2}\Big[\mathcal{C}_{2L}^{(2)}\big(\cos(t)\big)\Big]^2\sin^2(t)\dt\\
   &=\frac{4}{\pi}\int_0^{\cos(\frac{f(\omega(\alpha))}{2})}\Big[\mathcal{C}_{2L}^{(2)}(t)\Big]^2\sqrt{1-t^2}\dt.
 \end{split}
\end{equation*}
Applying \eqref{eq_integralSO} one more time yields
  \begin{equation*}
 \begin{split}
  \var(\eta_A)&\leq\int_{\SO}\chi_A(\alpha)\int_{\SO}\chi_\alpha(\beta)\mathcal{C}_{2L}^{(2)}\big(\cos\big(\tfrac{\omega(\alpha^{-1}\beta)}{2}\big)\big)^2\dm(\beta)\dm(\alpha)\\
  & =\frac{4}{\pi}\int_{\SO}\chi_A(\alpha)\int_0^{\cos(\frac{f(\omega(\alpha))}{2})}\Big[\mathcal{C}_{2L}^{(2)}(t)\Big]^2\sqrt{1-t^2}\dt\dm(\alpha)\\
  & =\frac{16}{\pi^2}\int_0^{\varepsilon}\sin(x)^2\int_0^{\sqrt{1-(\sin(\varepsilon)-\sin(x))^2}}\Big[\mathcal{C}_{2L}^{(2)}(t)\Big]^2\sqrt{1-t^2}\dt\dx\\
    &=\frac{16}{\pi^2}\int_0^{\varepsilon}\sin(x)^2\int_{\cos(\varepsilon)}^{\sqrt{1-(\sin(\varepsilon)-\sin(x))^2}}\Big[\mathcal{C}_{2L}^{(2)}(t)\Big]^2\sqrt{1-t^2}\dt\dx\\
    &\hspace{1cm}+\frac{16}{\pi^2}\int_0^{\varepsilon}\sin(x)^2\int_0^{\cos(\varepsilon)}\Big[\mathcal{C}_{2L}^{(2)}(t)\Big]^2\sqrt{1-t^2}\dt\dx=:I_1+I_2.\\
  \end{split}
\end{equation*}
Next we change the order of integration, thus for  $t\in[\cos(\varepsilon),1]$, we integrate over $\{t\}\times [z(t),\varepsilon]$, where $z(t):=\arcsin\big(\sin(\varepsilon)-\sqrt{1-t^2}\big)$. We do this since $x\in[z(t),\varepsilon]$ implies $\sqrt{1-(\sin(\varepsilon)-\sin(x))^2}\in[t,1]$. Thus
$$
  I_1=\frac{16}{\pi^2}\int_{\cos(\varepsilon)}^{1}\Big[\mathcal{C}_{2L}^{(2)}(t)\Big]^2\sqrt{1-t^2}\int_{z(t)}^{\varepsilon}\sin(x)^2\dx\dt.
 $$
Further, by a standard estimate and the mean value theorem, we get
\begin{equation*}
 \begin{split}
  \int_{z(t)}^{\varepsilon}\sin(x)^2\dx &\leq \sin(\varepsilon)^2\Big(\arcsin\big(\sin(\varepsilon)\big)-\arcsin\big(\sin(\varepsilon)-\sqrt{1-t^2}\big)\Big)\\
  &\leq\sin(\varepsilon)^2 \frac{\sqrt{1-t^2}}{\cos( \varepsilon)},
  %\tan(\varepsilon)\sin(\varepsilon)^\sqrt{1-t^2}=
 \end{split}
\end{equation*}
and hence by Lemma \ref{lem_gegenbauerIdentity}
\begin{equation*}
 I_1\leq \frac{16\sin(\varepsilon)^2}{\pi^2\cos(\varepsilon)}\int_{0}^{1}\Big[\mathcal{C}_{2L}^{(2)}(t)\Big]^2(1-t^2)\dt= \frac{\sin(\varepsilon)^2}{\cos(\varepsilon)}O(L^2\log(L)).
\end{equation*}
 Using: $\sin(\varepsilon)=\sqrt{1-\cos(\varepsilon)^2}\leq\sqrt{1-t^2}$, Lemma \ref{lem_gegenbauerIdentity}, and $\frac{\sin(x)}{\sin(\varepsilon)}\leq1$ yields
\begin{equation*}
I_2\leq\frac{16}{\pi^2}\int_0^{\varepsilon}\sin(x)^2\int_0^{\cos(\varepsilon)}\Big[\mathcal{C}_{2L}^{(2)}(t)\Big]^2\sqrt{1-t^2}\frac{\sqrt{1-t^2}}{\sin(\varepsilon)}\dt\dx=\varepsilon^2O(L^2\log(L)).
\end{equation*}
%and
%\[
%C_\epsilon\approx \frac{1}{\cos(\varepsilon)}+\frac{1}{\sin(\varepsilon)}\approx \frac{1}{\sin(2\varepsilon)},
%\]
%where in using the symbol $\approx$ we are disregarding constant terms.
Theorem \ref{thm_variance} is now proved.
  %-----------------------------------------------------------------------------------------------------
 \appendix
 	\section{The Strong Maximum Principle on Manifolds}
 We state the classical strong maximum principle Theorem \ref{thm_classicstrongmaximum} for open, bounded, and connected subsets $U\subset\R^n$, and regard second order parabolic partial differential operators $\mathrm{L}+\frac{\partial}{\partial t}$ acting on functions $C^2_1(U\times(0,T])$, i.e. twice differentiable with respect to spatial variables and once w.r.t. time. $T>0$. A special case of this is extended in Theorem \ref{thm_manifoldstrongmaximum}. We set for smooth coefficients:
 \begin{equation}\label{eq_parabolicdiffop}
 \mathrm{L} u(x,t)=-\sideset{}{^n_{i,j}}\sum a_{ij}(x,t)\tfrac{\partial}{\partial x_i}\tfrac{\partial}{\partial x_j}u(x,t)
 +\sideset{}{^n_{j}}\sum b_j(x,t)\tfrac{\partial}{\partial x_j}u(x,t),
 \end{equation}
 and without loss of generality, $a_{ij}(x,t)=a_{ji}(x,t)$.
% ; since the first sum of \eqref{eq_parabolicdiffop} is
% $$ \big(\vec{\partial}^t A\ \vec{\partial}\big) u=\big(\vec{\partial}^t \tfrac{1}{2}(A+A^t)\ \vec{\partial}\big) u+\big(\vec{\partial}^t \tfrac{1}{2}(A-A^t)\ \vec{\partial}\big) u=\big(\vec{\partial}^t \tfrac{1}{2}(A+A^t)\ \vec{\partial}\big) u,$$
% where we used matrix notation: $A(x,t)=(a_{ij}(x,t))_{i,j=1}^n$ and $\vec{\partial}^t=(\frac{\partial}{\partial x_1},\ldots,\frac{\partial}{\partial x_n})$. Equality follows by skew symmetry, and  $\vec{\partial}$ is not meant to affect the $a_{ij}(x,t)$.
 \begin{definition}
 	 $\mathrm{L}+\frac{\partial}{\partial t}$ is said to be uniformly parabolic if there is a $C>0$, s.t.
 	\begin{equation}\label{eq_uniformelliptic}
 	\sum_{i,j} a_{ij}(x,t)\xi_i\xi_j\geq C\|\xi\|_2^2, \hspace{0.3cm}\mbox{where }\xi\in\R^n,\ (x,t)\in U\times(0,T].
 	\end{equation}
 \end{definition}

 %-\frac{\partial}{\partial t}u(x,t)
 \begin{theorem}[Thm. 11, page 396 of \cite{Evans}]\label{thm_classicstrongmaximum}
 	Let $u\in C^2_1(U\times(0,T])\cap C(\bar U\times[0,T])$ be such that 
 	$$\mathrm{L}u+\frac{\partial}{\partial t}u=0,$$
 	for  $U\subset\R^n$ as above,	$\mathrm{L}+\frac{\partial}{\partial t}$  uniformly parabolic, and $\mathrm{L}$ as in \eqref{eq_parabolicdiffop}. If the maximum or minimum of $u$ is attained at a point $(x_0,t_0)\in U\times(0,T]$, then $u$ equals this value everywhere in $U\times[0,t_0]$.
 	% or the maximum and minimum of $u$ are attained in $\big(\bar U\times[0,T]\big)\setminus\big( U\times(0,T]\big)=\big(\partial U\times[0,T]\big)\cup\big( U\times\{0\}\big)$.
 \end{theorem}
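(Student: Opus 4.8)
Since this is a classical result, cited above as \cite[Thm.~11, p.~396]{Evans}, I would reproduce the standard argument, which proceeds through a weak maximum principle and then upgrades it. First, it suffices to treat a minimum: the operator $\mathrm{L}+\partial_t$ is linear with no zeroth-order term, so $-u$ solves the same equation $\mathrm{L}(-u)+\partial_t(-u)=0$, and a maximum of $u$ is a minimum of $-u$. Write $m=u(x_0,t_0)$ and $v:=u-m\geq 0$ on $\bar U\times[0,T]$; the goal is $v\equiv 0$ on $U\times[0,t_0]$. The first ingredient is the \emph{weak} minimum principle: every $w\in C^2_1(U\times(0,T])\cap C(\bar U\times[0,T])$ with $\mathrm{L}w+\partial_t w\geq 0$ attains its minimum over $\bar U\times[0,T]$ on the parabolic boundary $\Gamma_T:=(\bar U\times\{0\})\cup(\partial U\times[0,T])$. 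One proves this by the usual perturbation trick: at an interior minimum $(y,s)\in U\times(0,T]$ one has $\nabla_x w=0$, $D^2_x w\succeq 0$ and $\partial_t w\leq 0$, so, using positive definiteness of $(a_{ij})$ from \eqref{eq_uniformelliptic}, $\mathrm{L}w+\partial_t w=-\operatorname{tr}\big(a\,D^2_x w\big)+\partial_t w\leq 0$; applying this to $u_\varepsilon:=u+\varepsilon t$, for which $\mathrm{L}u_\varepsilon+\partial_t u_\varepsilon=\varepsilon>0$, rules out an interior minimum of $u_\varepsilon$, and letting $\varepsilon\to 0$ gives $\min_{\bar U\times[0,T]}u=\min_{\Gamma_T}u$.

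Next I would invoke the parabolic Harnack inequality (this is the route in \cite{Evans}, where the strong maximum principle is deduced from it): for every compact $K\subset U$ and every $0<t_1<t_3\leq T$ there is a constant $C$, depending only on $K$, $t_1$, $t_3$ and the coefficients, such that $\sup_{K\times\{t_1\}}v\leq C\inf_{K\times\{t_3\}}v$ for every nonnegative solution $v$ of $\mathrm{L}v+\partial_t v=0$. Granting this, the conclusion is immediate: since $v\geq 0$ and $v(x_0,t_0)=0$, applying the inequality with $t_3=t_0$ on balls $K$ exhausting $U$ forces $v\equiv 0$ on $U\times\{t_1\}$ for every $t_1<t_0$, and continuity of $v$ then yields $v\equiv 0$ on all of $U\times[0,t_0]$, including the slice $t=t_0$. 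A self-contained alternative, avoiding Harnack, is the Hopf-lemma/connectedness argument: one first proves a parabolic boundary-point lemma — if $v>0$ in a space-time region and vanishes at a boundary point admitting an interior tangent paraboloid, then the inward conormal derivative of $v$ there is strictly positive, by comparison with an explicit barrier of the form $h(x,t)=e^{-\alpha(|x-\bar x|^2+\beta(\bar t-t))}-e^{-\alpha\rho^2}$, where $\alpha$ is taken large so that \eqref{eq_uniformelliptic} makes the second-order term dominate the first-order terms $b_j$ and the time term, giving $\mathrm{L}h+\partial_t h<0$ — and then chains it: show $v(\cdot,t_0)\equiv 0$ by a largest-ball argument, and show that $\{t\in[0,t_0]:v\equiv 0\text{ on }U\times[t,t_0]\}$ is nonempty, closed, and open, hence all of $[0,t_0]$ by connectedness.

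The main obstacle, in both approaches, is the genuinely parabolic feature that information propagates forward in time and instantaneously across all of $U$ but never backward, so a zero of $v$ at $(x_0,t_0)$ pins down $v$ at all earlier times and over the whole domain yet says nothing for $t>t_0$. Concretely this forces the Harnack inequality to be stated and proved with the levels ordered $t_1<t_3$ — there is no two-sided estimate — and, in the barrier approach, it forces the comparison regions to be paraboloids (``heat balls'') rather than round balls, and the open/closed dichotomy on $U\times[0,t_0]$ to be set up respecting this asymmetry. Once the Harnack inequality (or equivalently the boundary-point lemma) is established, everything else is the routine bookkeeping sketched above.
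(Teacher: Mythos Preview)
The paper does not give its own proof of this theorem: it is stated purely as a citation of \cite[Thm.~11, p.~396]{Evans} and then invoked as a black box in the proof of Theorem~\ref{thm_manifoldstrongmaximum}. Your sketch is a faithful outline of the standard argument---weak maximum principle via the $\varepsilon t$ perturbation, then the upgrade to the strong version via the parabolic Harnack inequality, which is exactly the route Evans takes---so you have supplied precisely what the paper chose to omit. There is nothing in the paper to compare against.
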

 Given a manifold $M$ with or without boundary, we set $M^{\circ}=M\setminus \partial M$, and for $x\in M$, define $M_x$ as the connected component of $M$ containing $x$. Now, the next theorem should be known, but we haven't found a reference.  
 
 \begin{theorem}\label{thm_manifoldstrongmaximum}
 	Let $(M,g)$ be an $n$-dimensional (smooth) compact Riemannian manifold with or without boundary, not necessarily connected. Suppose $u\in C^2_1(M^{\circ}\times(0,T])\cap C(M\times[0,T])$ satisfies for $(x,t)\in M^{\circ}\times(0,T] $:
 	$$ \Delta_gu(x,t)+\frac{\partial}{\partial t}u(x,t)=0.$$
 	If the maximum or minimum of $u$ is attained at a point $(x_0,t_0)\in M^{\circ}\times(0,T]$,
 	then $u$ equals this value everywhere in $ M_{x_0}\times[0,t_0]$. In particular,  the maximum and minimum of $u$ are attained in $\big(\partial M\times[0,T]\big)\cup\big( M^{\circ}\times\{0\}\big)$.
 \end{theorem}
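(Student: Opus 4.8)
The plan is to reduce the manifold statement to the classical Euclidean strong maximum principle, Theorem~\ref{thm_classicstrongmaximum}, by working in local charts. First I would dispose of the trivial reductions: since $M$ is compact it has only finitely many connected components, and the equation and the conclusion are both local in the spatial variable, so it suffices to treat the case where $M=M_{x_0}$ is connected. Next, the claim ``the max/min is attained on $(\partial M\times[0,T])\cup(M^\circ\times\{0\})$'' follows immediately from the first claim: by compactness of $M\times[0,T]$ the continuous function $u$ attains its extremum somewhere; if that point lies in $M^\circ\times(0,T]$, then $u$ is constant on $M_{x_0}\times[0,t_0]$, and by continuity that constant value is also attained at points of $M^\circ\times\{0\}$ (or on $\partial M\times[0,T]$ if $\partial M\neq\emptyset$ and $M_{x_0}$ touches it) — in any case the extremum is attained on the parabolic boundary. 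So the heart of the matter is the propagation statement.

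For the propagation statement, suppose the minimum (the maximum case is identical, or apply the argument to $-u$) is attained at $(x_0,t_0)\in M^\circ\times(0,T]$ with value $c$. I would define the set
\[
S=\{x\in M^\circ : u(x,t_0)=c\},
\]
and show it is nonempty, relatively open, and relatively closed in $M^\circ$; connectedness of $M^\circ$ (which follows from connectedness of $M$ together with $\partial M$ having codimension one, or one restricts attention to $M^\circ$ directly) then forces $S=M^\circ$, and a separate short argument propagates the value backward in time to all of $M_{x_0}\times[0,t_0]$. Closedness of $S$ is clear from continuity. Openness is where the local charts enter: around any $x_1\in S$ pick a chart $\varphi:V\to U\subset\R^n$ with $V$ a connected coordinate neighborhood compactly contained in $M^\circ$; in these coordinates $\Delta_g$ takes the form $-\sum_{i,j}g^{ij}\partial_i\partial_j+\sum_j b_j\partial_j$ with smooth coefficients, and uniform parabolicity on the relatively compact set $\overline{U'}$ (a slightly smaller chart domain) holds because $(g^{ij})$ is a smooth positive-definite matrix field, hence has eigenvalues bounded below by a positive constant on the compact closure. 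The pushed-forward function $\tilde u=u\circ\varphi^{-1}$ then satisfies $\mathrm{L}\tilde u+\partial_t\tilde u=0$ on $U'\times(0,T]$ with $\mathrm{L}$ of the required form, and $\tilde u(\cdot,t_0)$ attains its minimum over $\overline{U'}\times[0,t_0]$ at an interior spatial point; Theorem~\ref{thm_classicstrongmaximum} gives $\tilde u\equiv c$ on $U'\times[0,t_0]$, so in particular $V'\subset S$, proving $S$ open.

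The main obstacle is bookkeeping rather than deep content: one must check that the hypotheses of Theorem~\ref{thm_classicstrongmaximum} really transfer through the chart — specifically, that the minimum over the chart cylinder $\overline{U'}\times[0,t_0]$ is genuinely $c$ and is attained at an interior point $(\varphi(x_1),t_0)$ with $t_0>0$, which is exactly the situation we are in, and that $\mathrm{L}+\partial_t$ is uniformly parabolic on $U'$, which is the eigenvalue bound above. A minor subtlety is the backward-in-time propagation to $[0,t_0]$: once $u(\cdot,t_0)\equiv c$ on $M^\circ$, the local chart argument already delivers $\tilde u\equiv c$ on each $U'\times[0,t_0]$ since Theorem~\ref{thm_classicstrongmaximum} concludes on $U'\times[0,t_0]$, not merely on the time-slice; patching these local conclusions over a cover of $M^\circ$ gives $u\equiv c$ on $M^\circ\times[0,t_0]$, and continuity extends this to $M_{x_0}\times[0,t_0]$ including any boundary points. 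No new analytic input beyond the cited Euclidean theorem and smoothness of the metric is needed.
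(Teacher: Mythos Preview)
Your proof is correct and follows essentially the same approach as the paper: reduce to local coordinate charts, verify that the Laplace--Beltrami operator gives a uniformly parabolic operator in coordinates (you use compactness and positive-definiteness of $(g^{ij})$, the paper uses normal coordinates and continuity), apply Theorem~\ref{thm_classicstrongmaximum} locally to obtain constancy on a chart cylinder, and then propagate via connectedness. The only cosmetic difference is that you package the connectedness step as an open--closed argument on the set $S=\{x:u(x,t_0)=c\}$, whereas the paper phrases it as covering $M_\alpha$ by finitely many intersecting charts and iterating the local conclusion.
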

 \begin{proof}
 	For every $\alpha\in M^{\circ}$, there is an open neighborhood  $U_\alpha\subset M$  and a chart $\boldsymbol{x}_{\alpha}:U_\alpha\ra B_\alpha\subset\R^n$, such that $\boldsymbol{x}_{\alpha}(U_\alpha)$ is an open ball $B_\alpha$, and the local representation of $\Delta_g$ in $U_\alpha$ is of type \eqref{eq_parabolicdiffop}, and satisfies \eqref{eq_uniformelliptic} for $C=1/2$. This follows from the fact that the Laplace-Beltrami operator at a point $\beta$ in the interior can be written as the usual Laplacian at $\beta$, and by continuity of the coefficients, there is an open set of $\beta$ where the inequality \eqref{eq_uniformelliptic} is true for $C=1/2$. 
 	
 	Assume there were a $t_0>0$ such that the maximum/minimum of $u$ would be attained at $(\alpha,t_0)$. Writing $\Delta_g$ w.r.t. the chart $\boldsymbol{x}_{\alpha}$ as $\Delta_\alpha$, and regarding the equation 
 	$$ \Delta_\alpha u(\boldsymbol{x}_{\alpha}^{-1}(x),t)+\frac{\partial}{\partial t}u(\boldsymbol{x}_{\alpha}^{-1}(x),t)=0,$$
 	in $B_\alpha\times(0,T]$, a neighborhood of $(\boldsymbol{x}_{\alpha}(\alpha),t_0)$, we deduce by Theorem \ref{thm_classicstrongmaximum} that  $u(x,t)\equiv u(\alpha,t_0)$ for all $(x,t)\in B_\alpha\times[0,t_0]$.
 	
 	The maximum/minimum is in particular attained at the boundary as claimed. Further, $ M_\alpha$ is covered by finitely many intersecting charts as above, and Theorem \ref{thm_classicstrongmaximum} would yield that $u$ is constant and equals $u(\alpha,t_0)$ in all of $M_\alpha\times[0,t_0]$.
  	 \end{proof}
  
   \section{The $L^2$--Norm of Gegenbauer Polynomials}\label{appendix}
 First we recall the  digamma function $\psi(x):=\tfrac{d}{dx}\log\left(\Gamma(x)\right)$ and its property:
 \begin{equation}\label{eq_digammaFormula}
 \psi(n+\tfrac{1}{2})=\sum_{k=1}^n\frac{2}{2k-1}-\gamma-\log(4),\mbox{ for }n\in\N,
 \end{equation}
 see \cite[Eq. 6.3.4]{Stegun}, where $\gamma\approx0.577$ is the Euler-Mascheroni constant.
 %; now we are ready to state our important lemmas.
\begin{lemma}\label{lem_gegenbauerIdentity} The Gegenbauer polynomials $\mathcal{C}_{n-2}^{(2)}(x)$ 
satisfy 
\begin{equation*}
    \int_0^1(x^2-1) \big[\mathcal{C}_{n-2}^{(2)}(x)\big]^2\dx
  =-\frac{2n^2-1}{16}\Big(\psi(n+\tfrac{1}{2})+\gamma+\log(4)\Big) +\frac{n^2}{8}.
\end{equation*}
\end{lemma}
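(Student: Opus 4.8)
The plan is to reduce the integral to the $L^2$-norm of a Chebyshev polynomial of the second kind by two integrations by parts, exploiting the Gegenbauer differential equation. First I would use the standard Gegenbauer differentiation formula $\tfrac{d}{dx}\mathcal{C}_n^{(\lambda)}=2\lambda\,\mathcal{C}_{n-1}^{(\lambda+1)}$ at $\lambda=1$, i.e. the general-index version $\tfrac{d}{dx}\mathcal{U}_{n-1}(x)=2\,\mathcal{C}_{n-2}^{(2)}(x)$ of the fourth line of \eqref{eq_chebyshevT_basics}, to rewrite
\[
\int_0^1(x^2-1)\big[\mathcal{C}_{n-2}^{(2)}(x)\big]^2\dx=\frac14\int_0^1(x^2-1)\big[\mathcal{U}_{n-1}'(x)\big]^2\dx .
\]

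Next I would integrate by parts with $u=(x^2-1)\mathcal{U}_{n-1}'$ and $dv=\mathcal{U}_{n-1}'\dx$, so that $v=\mathcal{U}_{n-1}$. The crucial simplification comes from the Chebyshev (Gegenbauer index $1$, degree $n-1$) equation $(1-x^2)\mathcal{U}_{n-1}''-3x\mathcal{U}_{n-1}'+(n^2-1)\mathcal{U}_{n-1}=0$, which gives
\[
\frac{d}{dx}\big[(x^2-1)\mathcal{U}_{n-1}'\big]=2x\mathcal{U}_{n-1}'+(x^2-1)\mathcal{U}_{n-1}''=-x\mathcal{U}_{n-1}'+(n^2-1)\mathcal{U}_{n-1},
\]
so that
\[
\int_0^1(x^2-1)\big[\mathcal{U}_{n-1}'\big]^2\dx=\Big[(x^2-1)\mathcal{U}_{n-1}'\mathcal{U}_{n-1}\Big]_0^1+\int_0^1 x\,\mathcal{U}_{n-1}\mathcal{U}_{n-1}'\dx-(n^2-1)\int_0^1\mathcal{U}_{n-1}^2\dx .
\]
A second integration by parts, using $\int_0^1 x\,\mathcal{U}_{n-1}\mathcal{U}_{n-1}'\dx=\onehalf\big[x\,\mathcal{U}_{n-1}^2\big]_0^1-\onehalf\int_0^1\mathcal{U}_{n-1}^2\dx$ together with $\mathcal{U}_{n-1}(1)=n$, turns the middle term into $\tfrac{n^2}{2}-\onehalf\int_0^1\mathcal{U}_{n-1}^2\dx$.

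Finally I would dispose of the boundary term: the factor $x^2-1$ vanishes at $x=1$, while at $x=0$ the product $\mathcal{U}_{n-1}'(0)\mathcal{U}_{n-1}(0)$ vanishes because $\mathcal{U}_{n-1}$ has a definite parity (even if $n$ is odd, odd if $n$ is even), so one of its two factors is $0$. Collecting everything yields
\[
\int_0^1(x^2-1)\big[\mathcal{U}_{n-1}'(x)\big]^2\dx=\frac{n^2}{2}-\frac{2n^2-1}{2}\int_0^1\mathcal{U}_{n-1}(x)^2\dx ,
\]
into which I substitute $\int_0^1\mathcal{U}_{n-1}(x)^2\dx=\onehalf\big(\psi(n+\tfrac12)+\gamma+\log(4)\big)$ from \eqref{eq_digamma_sum} and then divide by $4$; a one-line simplification produces the claimed identity. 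The only genuinely delicate point is keeping the bookkeeping of the two integrations by parts and the ODE substitution straight, together with the parity observation that kills the boundary contribution at $x=0$; the rest is elementary.
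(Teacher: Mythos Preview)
Your argument is correct and is genuinely different from the paper's. The paper invokes Dette's recursion \eqref{eq_Dette} to expand $(x^2-1)[\mathcal{C}_{n-2}^{(2)}]^2$ as a weighted sum of $[\mathcal{U}_k]^2$, integrates each term via the product formula \eqref{eq_product}, and then has to collapse a double sum of digamma values using a separate induction on $\sum_k k\,\psi(k+\tfrac12)$. Your route is considerably more direct: writing $\mathcal{C}_{n-2}^{(2)}=\tfrac12\mathcal{U}_{n-1}'$ and doing two integrations by parts, with the Chebyshev ODE supplying exactly the simplification needed, reduces everything to the single quantity $\int_0^1\mathcal{U}_{n-1}^2\dx$. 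The parity argument that kills the boundary term at $x=0$ is correct. The only caveat is organizational: the formula \eqref{eq_digamma_sum} you quote for $\int_0^1\mathcal{U}_{n-1}^2\dx$ is, in the paper, derived \emph{inside} the proof of this very lemma (via \eqref{eq_product} and \eqref{eq_digammaFormula}); you would need to state and prove it beforehand, which is a one-line computation. What the paper's approach buys is that Dette's identity and the product expansion are reused later (e.g.\ in Lemma~\ref{lem_gegenbauerAsymptotics} and Lemma~\ref{lem_GegenbauerCosineSquared}); what your approach buys is a self-contained and shorter proof of this particular lemma that avoids the somewhat heavy sum manipulations.
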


\begin{lemma}\label{lem_gegenbauerAsymptotics} The Gegenbauer polynomials $\mathcal{C}_{n-2}^{(2)}(x)$
satisfy 
\begin{equation*}
    \int_0^1\big[\mathcal{C}_{n-2}^{(2)}(x)\big]^2\dx=\frac{n^4}{16}+\frac{4n^2-1}{64}\Big(\psi(n+\tfrac{1}{2})+\gamma+\log(4)\Big) -\frac{5}{32}n^2.
\end{equation*}
\end{lemma}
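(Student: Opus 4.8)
The plan is to push everything into the Chebyshev basis, where the integrals that arise are completely explicit, and then to run an induction in $n$ in steps of $2$. \textbf{Setup.} From the derivative relations between Chebyshev and Gegenbauer polynomials in \eqref{eq_chebyshevT_basics} (in particular $\tfrac{\mathrm d}{\mathrm dx}\mathcal{U}_{m+1}=2\mathcal{C}_m^{(2)}$, together with $\mathcal{U}_m=2\mathcal{T}_m+2\mathcal{T}_{m-2}+\cdots$ and $\mathcal{T}_j'=j\mathcal{U}_{j-1}$) I would first derive the telescoping identity $\mathcal{C}_m^{(2)}(x)-\mathcal{C}_{m-2}^{(2)}(x)=(m+1)\,\mathcal{U}_m(x)$, and then, iterating it down to $\mathcal{C}_0^{(2)}\equiv1=\mathcal{U}_0$ and $\mathcal{C}_1^{(2)}(x)=4x=2\mathcal{U}_1(x)$, obtain
$$\mathcal{C}_{n-2}^{(2)}(x)=\sum_{\substack{0\le k\le n-2\\ k\equiv n\ (2)}}(k+1)\,\mathcal{U}_k(x).$$
Next, writing $x=\cos\theta$ and using $\mathcal{U}_{2m}(\cos\theta)=\sin((2m+1)\theta)/\sin\theta$ gives $\int_0^1\mathcal{U}_{2m}(x)\dx=\int_0^{\pi/2}\sin((2m+1)\theta)\,\mathrm d\theta=\tfrac1{2m+1}$, and combined with the product formula $\mathcal{U}_j\mathcal{U}_k=\sum_{i=0}^{\min(j,k)}\mathcal{U}_{j+k-2i}$ this yields, for $j,k$ of equal parity, $\int_0^1\mathcal{U}_j(x)\mathcal{U}_k(x)\dx=\sum_{m=|j-k|/2}^{(j+k)/2}\tfrac1{2m+1}$. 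In particular $O_m:=\int_0^1\mathcal{U}_m(x)^2\dx=\sum_{i=0}^m\tfrac1{2i+1}$, which by the digamma identity \eqref{eq_digammaFormula} equals $\tfrac12\big(\psi(m+\tfrac32)+\gamma+\log4\big)$.

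\textbf{The recursion.} Set $a_n:=\int_0^1[\mathcal{C}_{n-2}^{(2)}(x)]^2\dx$. Squaring $\mathcal{C}_n^{(2)}=\mathcal{C}_{n-2}^{(2)}+(n+1)\mathcal{U}_n$ and integrating gives $a_{n+2}=a_n+2(n+1)D_n+(n+1)^2O_n$, with $D_n:=\int_0^1\mathcal{C}_{n-2}^{(2)}(x)\mathcal{U}_n(x)\dx$. Expanding $\mathcal{C}_{n-2}^{(2)}$ through the displayed formula and using the integral of $\mathcal{U}_j\mathcal{U}_k$ above (each $k\le n-2$ has the parity of $n$), and then interchanging the two summations, one gets
$$D_n=\sum_{\substack{0\le k\le n-2\\ k\equiv n\ (2)}}(k+1)\!\!\sum_{m=(n-k)/2}^{(n+k)/2}\!\!\frac1{2m+1}=\sum_{m=1}^{n-1}\frac1{2m+1}\!\!\sum_{\substack{|n-2m|\le k\le n-2\\ k\equiv n\ (2)}}\!\!(k+1).$$
The inner sum is an arithmetic progression, equal to $\tfrac14\big(n^2-(n-2m)^2\big)=m(n-m)$, so $D_n=\sum_{m=1}^{n-1}\tfrac{m(n-m)}{2m+1}$. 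Using $\tfrac{m(n-m)}{2m+1}=-\tfrac m2+\tfrac{2n+1}4-\tfrac{2n+1}{4(2m+1)}$ and $\sum_{m=1}^{n-1}\tfrac1{2m+1}=O_{n-1}-1$, Faulhaber's formula delivers the closed form $D_n=\tfrac{n(n+2)}4-\tfrac{2n+1}4\,O_{n-1}$.

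\textbf{Conclusion.} Let $f(n):=\tfrac{n^4}{16}+\tfrac{4n^2-1}{32}O_{n-1}-\tfrac{5n^2}{32}$. Using $O_{n+1}=O_{n-1}+\tfrac1{2n+1}+\tfrac1{2n+3}$ and the value of $D_n$ just found, a direct computation confirms $f(n+2)-f(n)=2(n+1)D_n+(n+1)^2O_n$: the $O_{n-1}$-coefficients on both sides equal $\tfrac{n+1}2$, and the remaining rational identity reduces (the factor $2n+1$ cancels) to $0=0$. Together with the base cases $f(2)=\int_0^11\dx=1$ and $f(3)=\int_0^116x^2\dx=\tfrac{16}3$, an induction in steps of $2$ (separately for even and odd $n$) gives $a_n=f(n)$ for all $n\ge2$; since $2O_{n-1}=\psi(n+\tfrac12)+\gamma+\log4$ by \eqref{eq_digammaFormula}, we have $\tfrac{4n^2-1}{32}O_{n-1}=\tfrac{4n^2-1}{64}\big(\psi(n+\tfrac12)+\gamma+\log4\big)$, which is exactly the claimed identity.

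The step I expect to be the main obstacle is the bookkeeping in the last two paragraphs: isolating the clean value $m(n-m)$ of the inner sum in $D_n$, and checking on the nose that $f$ satisfies the recursion with no leftover polynomial or harmonic-number term. The rest is routine manipulation of Chebyshev and Gegenbauer polynomials; incidentally, the same scheme with the elementary (rational) integrals $\int_0^1(x^2-1)\mathcal{U}_j(x)\mathcal{U}_k(x)\dx$ in place of those above proves Lemma \ref{lem_gegenbauerIdentity} as well.
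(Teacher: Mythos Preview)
Your proof is correct and complete: the telescoping identity $\mathcal{C}_m^{(2)}-\mathcal{C}_{m-2}^{(2)}=(m+1)\mathcal{U}_m$, the evaluation $D_n=\tfrac{n(n+2)}{4}-\tfrac{2n+1}{4}O_{n-1}$, the recursion $f(n+2)-f(n)=2(n+1)D_n+(n+1)^2O_n$, and the two base cases all check out (the ``rational identity reduces to $0=0$'' step does indeed collapse, as $-28n^2-48n-17=-(14n+17)(2n+1)$).

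The route, however, is genuinely different from the paper's. The paper does not induct: it first proves a general trigonometric expansion (Lemma~\ref{lem_GegenbauerCosineSquared}),
\[
\big[\mathcal{C}_{n}^{(\lambda)}(\cos t)\big]^2=2\sideset{}{'}\sum_{r=0}^{n}\cos(2rt)\sum_{u=0}^{n-r}c_{r+u,u}^{\lambda},
\]
via a combinatorial symmetry of the coefficients $c_{j,k}^\lambda$ (Lemma~\ref{lem_permutation}), and then integrates term by term against $\sin t$, using $\int_0^{\pi/2}\cos(2rt)\sin t\,dt=-\tfrac{1}{4r^2-1}$ and an explicit polynomial evaluation of $\sum_{u}c_{r+u,u}^{2}$. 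Your argument instead stays entirely in the Chebyshev basis, exploits the simple decomposition $\mathcal{C}_{n-2}^{(2)}=\sum_{k\equiv n}(k+1)\mathcal{U}_k$, and closes with a two-term recursion verified by hand. This is more elementary---it avoids the auxiliary Lemmas~\ref{lem_permutation} and~\ref{lem_GegenbauerCosineSquared} altogether---and, as you note, the same scheme with the integrals $\int_0^1(x^2-1)\mathcal{U}_j\mathcal{U}_k\,dx$ would reprove Lemma~\ref{lem_gegenbauerIdentity}. What the paper's approach buys is a reusable structural formula for $[\mathcal{C}_n^{(\lambda)}(\cos t)]^2$ valid for all $\lambda$, which it exploits again (e.g.\ in the proof of Theorem~\ref{thm_Riesz3}); your method is tailored to $\lambda=2$ and to the specific $L^2$-integral, but gets there with less machinery.
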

For the proofs, we need 
a result from \cite{Dette},
%to which we are pointed by \cite{Van Assche},
showing the following 
recursive formula for squares of Gegenbauer polynomials:
\begin{equation*}
\left(\frac{n}{2 \lambda}\right)^2\left[\mathcal{C}_{n}^{(\lambda)}(x)\right]^2=
\sum_{k=0}^{n-1}\frac{\lambda+k}{\lambda}\left[\mathcal{C}_{k}^{(\lambda)}(x)\right]^2-
(1-x^2)\left[\mathcal{C}_{n-1}^{(\lambda+1)}(x)\right]^2,
\end{equation*}
which, for $\lambda=1$, i.e. Chebyshev polynomials of 2nd 
kind \cite[Corollary 6.2]{Dette}, is
\begin{equation}\label{eq_Dette}
\frac{(n+1)^2}{4}\left[\mathcal{U}_{n+1}(x)\right]^2-
\sum_{k=0}^{n}(k+1)\left[\mathcal{U}_{k}(x)\right]^2=(x^2-1)
\big[\mathcal{C}_{n}^{(2)}(x)\big]^2.
\end{equation}

\begin{proof}[Proof of Lemma \ref{lem_gegenbauerIdentity}] We will use a well known identity for $m\leq n$:
\begin{equation}\label{eq_product}
  \mathcal{U}_{m}(x)\mathcal{U}_{n}(x)=\sum_{k=0}^{m}\mathcal{U}_{n-m+2k}(x),
\end{equation}
which follows by induction on $m$, starting and re-applying the recurrence \eqref{eq_ChebyshevRecurence}.
%\begin{equation*}
%  2x\mathcal{U}_{n}(x)=\mathcal{U}_{n+1}(x)+\mathcal{U}_{n-1}(x)
%\end{equation*}
%with  $\mathcal{U}_{0}=1$ and $\mathcal{U}_{1}=2x$.
Using \eqref{eq_product} with $m=n$ in \eqref{eq_Dette} and integrating yields
  \begin{equation*}
  \begin{split}
  \int_0^1&(x^2-1) \big[\mathcal{C}_{n}^{(2)}(x)\big]^2\dx\\
  &=\frac{(n+1)^2}{4}\sum_{k=0}^{n+1}\int_0^1\mathcal{U}_{2k}(x)\dx- \sum_{k=0}^{n}(k+1)\sum_{s=0}^{k}\int_0^1\mathcal{U}_{2s}(x)\dx\\
  &= \frac{(n+1)^2}{4}\sum_{k=0}^{n+1}\frac{\mathcal{T}_{2k+1}(1)-\mathcal{T}_{2k+1}(0)}{2k+1}
- \sum_{k=0}^{n}(k+1)\sum_{s=0}^{k}\frac{\mathcal{T}_{2s+1}(1)-\mathcal{T}_{2s+1}(0)}{2s+1}\\
  &= \frac{(n+1)^2}{4}\sum_{k=0}^{n+1}\frac{1}{2k+1}- \sum_{k=0}^{n}\sum_{s=0}^{k}\frac{k+1}{2s+1},\\
  \end{split}
\end{equation*}
where we used \eqref{eq_chebyshevT_basics} and that $\mathcal{T}_{2n+1}(x)$ is odd. 
%Recalling the digamma function $\psi(x):=\tfrac{d}{dx}\log\left(\Gamma(x)\right)$ and its property, see \cite[Eq. 6.3.4]{Stegun} page 258:
%\begin{equation*}
%\psi(n+\tfrac{1}{2})=\sum_{k=1}^n\frac{2}{2k-1}-\gamma-\log(4),\mbox{ for }n\in\N,
%\end{equation*}
%where $\gamma\approx0.577$ is the Euler-Mascheroni constant; 
By \eqref{eq_digammaFormula}, we state for later use: 
\begin{equation}\label{eq_digamma_sum}
\int_0^1[\mathcal{U}_{n}(x)]^2\dx=\sum_{k=0}^n\frac{1}{2k+1}=\onehalf\Big(\psi(n+\tfrac{3}{2})+\gamma+\log(4)\Big),\mbox{ for }n\in\N_0.
\end{equation}
We continue
  \begin{equation*}
  \begin{split}
  \int_0^1(x^2-1) \big[\mathcal{C}_{n}^{(2)}(x)\big]^2\dx&= 
\frac{(n+1)^2}{8}\Big(\psi(n+\tfrac{5}{2})+\gamma+\log(4)\Big)\\
&\hspace{0.3cm}- \sum_{k=0}^{n}\frac{k+1}{2}\psi(k+\tfrac{3}{2})-(\gamma+\log(4))\frac{(n+2)(n+1)}{4}\\
&\hspace{-2.3cm}= \frac{(n+1)^2}{8}\psi(n+\tfrac{5}{2}) - \sum_{k=1}^{n+1}\frac{k}{2}\psi(k+\tfrac{1}{2})-\frac{(n+3)(n+1)}{8}(\gamma+\log(4)).\\
  \end{split}
\end{equation*}
Also, we find by induction:
\begin{equation*}
\sum_{k=1}^{n}\frac{k}{2}\psi(k+\tfrac{1}{2})=\frac{1}{16}\left[(2n+1)^2\psi(n+\tfrac{3}{2})-2(n+1)^2+\gamma+\log(4)\right],
\end{equation*}
where we used the recurrence 
$\psi(z+1)=\psi(z)+\tfrac{1}{z}$, see \cite[Eq. 6.3.5]{Stegun}. Thus
  \begin{equation*}
  \begin{split}
  \int_0^1(x^2-1)\big[\mathcal{C}_{n-2}^{(2)}(x)\big]^2\dx
&= \frac{2(n-1)^2-(2n-1)^2}{16}\psi(n+\tfrac{1}{2}) +\frac{n^2}{8}\\
&\hspace{2cm} - \frac{2(n+1)(n-1)+1}{16}(\gamma+\log(4))\\
  &=-\frac{2n^2-1}{16}\Big(\psi(n+\tfrac{1}{2})+\gamma+\log(4)\Big) +
  \frac{n^2}{8},
% &\hspace{-1cm} =-\frac{2(n+2)^2-1}{8}\int_0^1[\mathcal{U}_{n+1}(x)]^2\dx +\frac{(n+2)^2}{8}\\
  \end{split}
\end{equation*}
finishing the proof.
%The fact that $\psi(x)=\log(x)+\mathcal{O}(\frac{1}{x})$ for $x$ large, see \cite[6.3.18]{Stegun},
%and a last application of \eqref{eq_digamma_sum} finish the proof.
\end{proof}
The proof of Lemma \ref{lem_gegenbauerAsymptotics} first needs some preparation.
\begin{lemma}\label{lem_permutation}
 Given numbers $c_{j,k}$ for $j,k\in\{0,\ldots,n\}$ such that following holds
 \begin{enumerate}
  \item $c_{j,k}=c_{j+r,k+r}$ for $j+k=n-r$ with $r\in\{1,\ldots,n\}$,
  \item $c_{j,k}=c_{n-j,k}$ for $j\geq k$,
  \item $c_{j,k}=c_{k,j}$;
 \end{enumerate}
then for any function $f:\N_0\ra\R$, we have\footnote{The apostrophe on the sum-symbol sigma means taking half the first term.}
\begin{equation}\label{eq_permutationLemma}
 \sum_{j,k=0}^{n}c_{j,k}\cdot f(|j-k|)=\sum_{j,k=0}^{n}c_{j,k}\cdot f(|n-j-k|)= 2\sideset{}{'}\sum_{r=0}^{n}f(r)\sum_{u=0}^{n-r}c_{r+u,u}.
\end{equation}
\end{lemma}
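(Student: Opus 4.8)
\emph{Plan.} The plan is to show that both sums in \eqref{eq_permutationLemma} are equal to their common claimed value $V$ (the doubly‑starred sum on the right), by regrouping each double sum according to the quantity that appears inside $f$, and invoking exactly one of the three hypotheses for each regrouping.

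\emph{Preliminary bijections.} First I would introduce the anti‑diagonal sums $S_s:=\sum_{j+k=s}c_{j,k}$ for $0\le s\le 2n$ and record two facts. (i) For $1\le r\le n$ the map $(j,k)\mapsto(j+r,k+r)$ is a bijection from $\{(j,k):j+k=n-r\}$ onto $\{(j,k):j+k=n+r\}$ inside the square $\{0,\ldots,n\}^2$ (one checks $0\le j+r,k+r\le n$ in one direction and $j'\ge r,\ k'\ge r$ in the other), and it preserves the value of $c_{j,k}$ by hypothesis (1); hence $S_{n-r}=S_{n+r}$, trivially also for $r=0$. (ii) For $0\le r\le n$ and $k\in\{0,\ldots,n-r\}$ we have $k+r\ge k$, so hypothesis (2) gives $c_{r+k,k}=c_{n-r-k,k}$; as $k$ runs over $\{0,\ldots,n-r\}$ the pairs $(n-r-k,k)$ run exactly once over the anti‑diagonal $\{j+k=n-r\}$, whence $\sum_{u=0}^{n-r}c_{r+u,u}=S_{n-r}$ (in particular $\sum_{u=0}^{n}c_{u,u}=S_n$).

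\emph{The middle sum.} Sorting the double sum by the value $s=j+k$ gives $\sum_{j,k}c_{j,k}f(|n-j-k|)=f(0)S_n+\sum_{r=1}^{n}f(r)\bigl(S_{n-r}+S_{n+r}\bigr)$. Applying (i) collapses $S_{n-r}+S_{n+r}$ to $2S_{n-r}$, and then (ii) rewrites every $S_{n-r}$ (including the $r=0$ term $S_n$) as $\sum_{u=0}^{n-r}c_{r+u,u}$; matching the factor $\tfrac12$ on the first term, this is precisely $V$.

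\emph{The first sum.} Sorting instead by $r=|j-k|$: the $r=0$ part is $f(0)\sum_{j}c_{j,j}$, and for $r\ge1$ the terms with $j-k=r$ give $f(r)\sum_{k=0}^{n-r}c_{k+r,k}$ while those with $k-j=r$ give $f(r)\sum_{k=0}^{n-r}c_{k,k+r}$; hypothesis (3) identifies the two inner sums, so the $r$‑block equals $2f(r)\sum_{u=0}^{n-r}c_{r+u,u}$, and again the total is $V$. The argument involves no deep step; the only delicate point is the bookkeeping — keeping the index ranges straight and correctly accounting for the halved $r=0$ term — together with verifying that the two bijections in (i) and (ii) are genuinely onto the stated index sets. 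The one genuinely substantive observation is (ii): hypothesis (2) matches the ``staircase'' $\{(r+u,u):0\le u\le n-r\}$ with the anti‑diagonal $\{j+k=n-r\}$ term by term.
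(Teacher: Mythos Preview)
Your proposal is correct and follows essentially the same approach as the paper: both proofs rest on the same three bijections (condition~1 folding the anti-diagonals $j+k=n-r$ and $j+k=n+r$ together, condition~2 matching the staircase $\{(r+u,u)\}$ to the anti-diagonal $\{j+k=n-r\}$, and condition~3 handling the $j\leftrightarrow k$ symmetry in the first sum). Your organization via the auxiliary quantities $S_s$ and the separate verification that each sum equals $V$ is somewhat cleaner than the paper's chained equalities with explicit index tables, but the substance is the same.
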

\begin{proof}
We first fix some $r\in\{1,\ldots,n\}$ and regard the second sum.
Observe that all tuples %$(j_i,k_i), (\hat j_i,\hat k_i)$
that satisfy $j_i+k_i=n-r$ and $\hat j_i+\hat k_i=n+r$ yield $|n-j-k|=r$, and are listed:
\begin{equation*}
\begin{array}{c|c|c|c|c}
 i & 1 & 2 & \ldots   & n-r+1\\
 \hline
 j_i & 0 & 1 & \ & n-r\\
 k_i & n-r & n-r-1 & \ldots  & 0\\
 \hat j_i & r & r+1 & \ & n\\
 \hat k_i & n & n-1 & \  & r\\
\end{array}
\end{equation*}
So for all $r$, $(j,k)\mapsto (j+r,k+r)=:(\hat j_i,\hat k_i)$ is a bijection with $c_{j_i,k_i}= c_{\hat j_i,\hat k_i}$ and
$$ \sum_{j,k=0}^{n}c_{j,k}\cdot f(|n-j-k|)=2 \sum_{\substack{j,k=0 \\j+k<n}}^{n}c_{j,k}\cdot f(n-j-k)+f(0)\cdot\sum_{u=0}^n c_{n-u,u}.$$
The first sum of \eqref{eq_permutationLemma} can be restricted to $j>k$ when doubled, apart of the sum  $f(0)\cdot\sum_{u=0}^nc_{u,u}$. Again, we list all tuples with $j_i-k_i=r=n-\hat j_i-\hat k_i$:
\begin{equation*}
\begin{array}{c|c|c|c|c}
 i & 1 & 2 & \ldots   & n-r+1\\
 \hline
 j_i & r & r+1 & \ & n\\
 k_i & 0 & 1 & \ldots  & n-r\\
 \hat j_i & n-r & n-r-1 & \ & 0\\
 \hat k_i & 0 & 1 & \  & n-r\\
\end{array}
\end{equation*}
Similarly, $(j,k)\mapsto (n-j,k)=:(\hat j_i,\hat k_i)$ is a bijection with $c_{j_i,k_i}= c_{\hat j_i,\hat k_i}$, and
$$\sum_{j>k=0}^{n}c_{j,k}\cdot f(j-k)=\ \sum_{\substack{j,k=0 \\j+k<n}}^{n}c_{j,k}\cdot f(n-j-k).$$ 
Rewriting the first sum above via $j=r+u$ and $k=u$ for some  $u\in\{0,\ldots,n-r\}$ and using that $c_{n-u,u}=c_{u,u}$ finishes the argument.
\end{proof}
 Requirement $2.$ in Lemma \ref{lem_permutation} is valid  for all $j,k$. To see this, let $j<k$, then
 $$c_{j,k}\stackrel{2.+3.}{=}c_{j,n-k}\stackrel{1.}{=}c_{j+(k-j),n-k+(k-j)}=c_{k,n-j}\stackrel{3.}{=}c_{n-j,k}. $$

 \begin{lemma}\label{lem_GegenbauerCosineSquared} 
 Let $n,\lambda\in\N$ be fixed, $j,k\in\{0,\ldots,n\}$ and define
 $$c_{j,k}^\lambda=c_{j,k}^\lambda(n)=\frac{1}{[\Gamma(\lambda)]^4}\frac{\Gamma(\lambda+j)\Gamma(\lambda+n-j)}
 {j!(n-j)!}\frac{\Gamma(\lambda+k)\Gamma(\lambda+n-k)}
 {k!(n-k)!},$$
 then\footnote{The apostrophe on the sum-symbol sigma means taking half the first term.}
  $$\big[\mathcal{C}_{n}^{(\lambda)}\big(\cos(t)\big)\big]^2=2\sideset{}{'}\sum_{r=0}^n\cos(2 r t)\sum_{u=0}^{n-r}c_{r+u,u}^\lambda.$$
  %where $\sideset{}{'}\sum$ means to take half the first term.
 \end{lemma}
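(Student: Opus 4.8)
The plan is to reduce the identity to the combinatorial Lemma \ref{lem_permutation} by expanding $\mathcal{C}_n^{(\lambda)}(\cos t)$ into a cosine polynomial and squaring it. I would start from the classical trigonometric expansion
$$\mathcal{C}_n^{(\lambda)}(\cos t)=\sum_{k=0}^n a_k\cos\big((n-2k)t\big),\qquad a_k:=\frac{\Gamma(\lambda+k)\,\Gamma(\lambda+n-k)}{\Gamma(\lambda)^2\,k!\,(n-k)!},$$
which can be quoted from \cite{Szego} (or derived by expanding the two factors of the generating function $(1-xe^{it})^{-\lambda}(1-xe^{-it})^{-\lambda}$ in powers of $x$ and collecting the coefficient of $x^n$). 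Note the symmetry $a_k=a_{n-k}$.

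Squaring and using $\cos A\cos B=\onehalf\big(\cos(A-B)+\cos(A+B)\big)$ with $A=(n-2j)t$ and $B=(n-2k)t$, so that $A-B=2(j-k)t$ and $A+B=2(n-j-k)t$, gives
$$\big[\mathcal{C}_n^{(\lambda)}(\cos t)\big]^2=\onehalf\sum_{j,k=0}^n a_ja_k\cos\big(2(j-k)t\big)+\onehalf\sum_{j,k=0}^n a_ja_k\cos\big(2(n-j-k)t\big).$$
A direct comparison of the defining products shows $a_ja_k=c_{j,k}^\lambda$, so the coefficient array in both double sums is precisely $c_{j,k}^\lambda$.

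It then remains to check that $c_{j,k}^\lambda$ satisfies the three hypotheses of Lemma \ref{lem_permutation}. Hypothesis 3, $c_{j,k}^\lambda=c_{k,j}^\lambda$, and hypothesis 2, $c_{j,k}^\lambda=c_{n-j,k}^\lambda$, are immediate from the product structure, since the factor $\Gamma(\lambda+j)\Gamma(\lambda+n-j)/(j!\,(n-j)!)$ is invariant under $j\mapsto n-j$. For hypothesis 1 one substitutes the relations $n-j-r=k$ and $n-k-r=j$, which hold exactly when $j+k=n-r$, into the four $\Gamma$-factors of $c_{j,k}^\lambda$ and matches them termwise against those of $c_{j+r,k+r}^\lambda$. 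Applying Lemma \ref{lem_permutation} with the even function $f(m)=\cos(2mt)$ (so $f(|m|)=f(m)$) turns each of the two double sums into $2\sideset{}{'}\sum_{r=0}^n\cos(2rt)\sum_{u=0}^{n-r}c_{r+u,u}^\lambda$, and adding the two halves yields exactly the claimed formula.

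The only delicate point is hypothesis 1 of Lemma \ref{lem_permutation}: the shift invariance $c_{j,k}^\lambda=c_{j+r,k+r}^\lambda$ is asserted only along the anti-diagonal band $j+k=n-r$, and it is precisely this constraint that makes the $\Gamma$-factors rearrange into one another; away from that band the identity is false. Everything else is routine bookkeeping. One could avoid quoting the cosine expansion of $\mathcal{C}_n^{(\lambda)}$ by instead proving it by induction on $n$ from the Gegenbauer three-term recurrence \eqref{eq_ChebyshevRecurence}, but citing \cite{Szego} is shorter.
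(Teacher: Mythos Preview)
Your proof is correct and follows essentially the same route as the paper: expand $\mathcal{C}_n^{(\lambda)}(\cos t)$ via \eqref{eq_GegenbauerCosine}, square using product-to-sum identities, verify the three symmetries of $c_{j,k}^\lambda$, and apply Lemma~\ref{lem_permutation}. The only cosmetic difference is that the paper passes through the difference $[\mathcal{C}_n^{(\lambda)}(1)]^2-[\mathcal{C}_n^{(\lambda)}(\cos t)]^2$ and applies the lemma with $f(m)=\sin^2(mt)$ before converting back via $1-2\sin^2(rt)=\cos(2rt)$, whereas you apply it directly with $f(m)=\cos(2mt)$; your version is marginally more direct.
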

 \begin{proof}
  We will use Lemma \ref{lem_permutation} with \cite[Eq. 8.934]{Gradshteyn}:
\begin{equation}\label{eq_GegenbauerCosine}
\mathcal{C}_{n}^{(\lambda)}(\cos(\varphi))=\sum_{\substack{k,\ell=0 \\ k+\ell=n }}^{n}\frac{\Gamma(\lambda+k)\Gamma(\lambda+\ell)}
 {k!\ell![\Gamma(\lambda)]^2}\cos((k-\ell)\varphi),
\end{equation}
  in conjunction with the angle-sum and half-angle formula for cosine and sine:
 \begin{equation*}%\label{eq_remarkPermutation}
  \begin{split}
  \big[\mathcal{C}_{n}^{(\lambda)}(1)\big]^2&-
 \big[\mathcal{C}_{n}^{(\lambda)}\big(\cos(t)\big)\big]^2= \sum_{j,k=0}^{n}c_{j,k}^\lambda\Big(1-\cos\big((n-2j)t \big)\cos\big((n-2k)t \big)\Big)  \\
 &= \sum_{j,k=0}^{n}c_{j,k}^\lambda\onehalf\Big(1-\cos\big((j-k)2t \big)+1-\cos\big((n-j-k)2t \big)\Big) \\
 &= \sum_{j,k=0}^{n}c_{j,k}^\lambda\Big(\sin\big((j-k)t\big)^2+\sin\big((n-j-k)t\big)^2\Big) \\
 &= 4\sum_{r=1}^{n}\sin\big(rt\big)^2\cdot\sum_{u=0}^{n-r}c_{r+u,u}^\lambda. \\
  \end{split}
 \end{equation*}
 Hence 
  \begin{equation*}
  \begin{split}
  \big[\mathcal{C}_{n}^{(\lambda)}(1)\big]^2&-\left(\big[\mathcal{C}_{n}^{(\lambda)}(1)\big]^2-
 \big[\mathcal{C}_{n}^{(\lambda)}\big(\cos(t)\big)\big]^2\right)\\
 &\hspace{1cm}=2\sideset{}{'}\sum_{r=0}^{n}\sum_{u=0}^{n-r}c_{r+u,u}^\lambda- 4\sum_{r=1}^{n}\sin\big(rt\big)^2\cdot\sum_{u=0}^{n-r}c_{r+u,u}^\lambda \\
  &\hspace{1cm}=\sum_{u=0}^{n}c_{u,u}^\lambda+2\sum_{r=1}^{n}\big(1-2\sin\big(rt\big)^2\big)\cdot\sum_{u=0}^{n-r}c_{r+u,u}^\lambda, \\
   \end{split}
 \end{equation*}
 and we finish using $1-2\sin(rt)^2=\cos(2rt)$.
 \end{proof}
%  With $c_{j,k}=(j+1)(n-j+1)(k+1)(n-k+1)$, 
%  and \textbf{changing $n$ to $n-2$} 

 \begin{proof}[Proof of Lemma \ref{lem_gegenbauerAsymptotics}] With notation of Lemma \ref{lem_GegenbauerCosineSquared} and $c_{j,k}=c_{j,k}^2(n-2)$ 
   \begin{equation*}
 \begin{split}
 & \sum_{u=0}^{n-2-r} c_{r+u,u}
  %&=\sum_{u=0}^{n-2-r}(r+u+1)(n-u-1)(u+1)(n-r-u-1)\\
  =\sum_{u=1}^{n-1-r}(r+u)(n-u)u(n-r-u)\\
 % &=\frac{1}{30}\Big(n^5+5n^2r^3-5n\big(n^2-1\big)r^2-r^5-5n^2r+r-n \Big)\\
  &=\tfrac{4r^2-1}{120}\Big( r\big(5n^2-\tfrac{1}{4}\big)-r^3-5n(n^2-1) \Big)-
\frac{2r}{64}(4n^2-1)+\binom{2n+2}{5}\frac{1}{8},\\
 \end{split}
\end{equation*}
 and by Lemma \ref{lem_GegenbauerCosineSquared}
 \begin{equation*}
\begin{split}
  \int_{0}^1 \big[\mathcal{C}_{n-2}^{(2)}(x)\big]^2\dx&=\sum_{u=0}^{n-2}c_{u,u}+2\sum_{r=1}^{n-2}\int_0^{\frac{\pi}{2}}\cos(2rt)\sin(t)\dt\sum_{u=0}^{n-2-r}c_{r+u,u}\\
  &=\sum_{u=0}^{n-2}c_{u,u}-2\sum_{r=1}^{n-2}\frac{1}{4r^2-1}\sum_{u=0}^{n-2-r}c_{r+u,u}\\
  &=\frac{n^5-n}{30}-\frac{1}{60}\sum_{r=1}^{n-2}\Big( r\big(5n^2-\tfrac{1}{4}\big)-r^3-5n(n^2-1) \Big)\\
  &\hspace{1cm}+2\sum_{r=1}^{n-2}\frac{1}{4r^2-1}
  \bigg(\frac{2r}{64}(4n^2-1)-\binom{2n+2}{5}\frac{1}{8}\bigg)\\
  &=\frac{2n^4-5n^2}{32}+\sum_{r=0}^{n-1}\frac{1}{4r^2-1}\frac{2r-1}{32}(4n^2-1).\\
  % &=\frac{n^4}{16}+\frac{4n^2-1}{64}\psi\big(n-\tfrac{1}{2}\big)+O(n^2).\\
  \end{split}
  \end{equation*}
  An application of \eqref{eq_digamma_sum} finishes the proof.
\end{proof}
% \begin{remark} The above proof works for general $\mathcal{C}_{n}^{(\lambda)}(x)$; using in addition Dette's formulas, one obtains asymptotics for
%  $$ \int_0^{1} (x^2-1)\big[\mathcal{C}_{n}^{(\lambda)}(x)\big]^2\dx.$$  
% %  $$ \int_0^{1} (x^2-1)\big[\mathcal{C}_{n}^{(3)}(x)\big]^2\dx=
% %  -\frac{n^6}{3(16)^2}+\mathcal{O}(n^5).$$  
% Also, using the same ideas as above
% $$\int_0^{1}\big[\mathcal{C}_{n}^{(\lambda)}(x)\big]^2\dx=O(n^{4\lambda-4}). $$
% \end{remark}
\section{Sampling on $\SO$}
So far we obtained theoretical bounds for the Green energy on $\SO$ via a lemma due to N. Elkies and properties of points sampled by a dpp. The upper bound cannot be best possible, as it is an expected value -- and hence there must be fluctuations above and in particular below that value.

In this section we will introduce an algorithm to sample points in $\SO$, that is simple to implement and numerically outperforms points sampled by a dpp. 
We are not giving any proofs regarding this algorithm, but rather show that it exists and how our bounds could be used as a comparison tool.
% Furthermore we will give a small list of known and somewhat easy to implement algorithms and perform some numerical experiments to compare the Green energies for sake of completeness and nice pictures.
%Random matrix theory is certainly a hot topic nowadays, and a special focus on 3-dimensional rotations went a long way. It is hence no surprise that algorithms, probabilistic and deterministic exist to generate uniformly distributed rotation matrices. Here we give a small outlook.

 In 1987 a probabilistic algorithm was introduced by P. Diaconis and M. Shahshahani for compact groups in \cite{Diakonis} and seemingly a special case of that was re-discovered by J. Arvo for $\SO$ in \cite{Arvo}. 
	 We will use a variant of this, replacing random points by a Halton sequence in the unit cube, which we baptize HArDiSh algorithm, and it does very well according to numerics. See Figure \ref{fig:GreenEnergy}.
% 	\item A deterministic sequence, with proven nice properties among other results, was also given in \cite{Yershova} by A. Yershova et al., where the authors discretized $\SO$ via grids.

Following closely to \cite{Arvo}, we sample $N$ points as follows: For $x_1,x_2,x_3$ to be determined later, let $M=-HR$ where $H=\fatone-2vv^t$, 
\begin{equation}\label{eq_hardish}
	v=\frac{1}{\sqrt{N}}\begin{pmatrix}
		\cos(2\pi x_2)\sqrt{x_3}\\
		\sin(2\pi x_2)\sqrt{x_3}\\
		\sqrt{N-x_3}
	\end{pmatrix},
	\mbox{ and }
		R=\begin{pmatrix}
	\cos(2\pi x_1) &\sin(2\pi x_1) & 0\\
	-\sin(2\pi x_1)&\cos(2\pi x_1) &0\\
	0&0&1
	\end{pmatrix}.
\end{equation}
In \cite{Arvo}, the $x_j$ were chosen uniformly at random, and as Arvo already mentions, generating $x_j$ by stratified or jittered sampling should yields less clumping for the matrices $M$. Our humble modification is to sample $x_j$ via Halton sequences, i.e. let vdC$(p,j)$ denote the $j$-th element of the van der Corput seqence in base $p$, set
\[
	\mathcal{H}=\begin{pmatrix}
1/3&2/3 &1/9&4/9&7/9&\ldots&\mbox{vdC}(3,N)\\
1/2 & 1/4 & 3/4&1/8 &5/8&\ldots&\mbox{vdC}(2,N)\\
1&2&3&4&5&\ldots&N
\end{pmatrix};
\]
then we obtain matrices $\{M_k\}_{k=1}^{N}$ via \eqref{eq_hardish} by setting $x_j(k)=\mathcal{H}(j,k)$. We do not know if the algorithm will continue to perform well for high numbers $N$.
\begin{figure}[h]
 \centering
 \includegraphics[width=11cm,keepaspectratio=true]{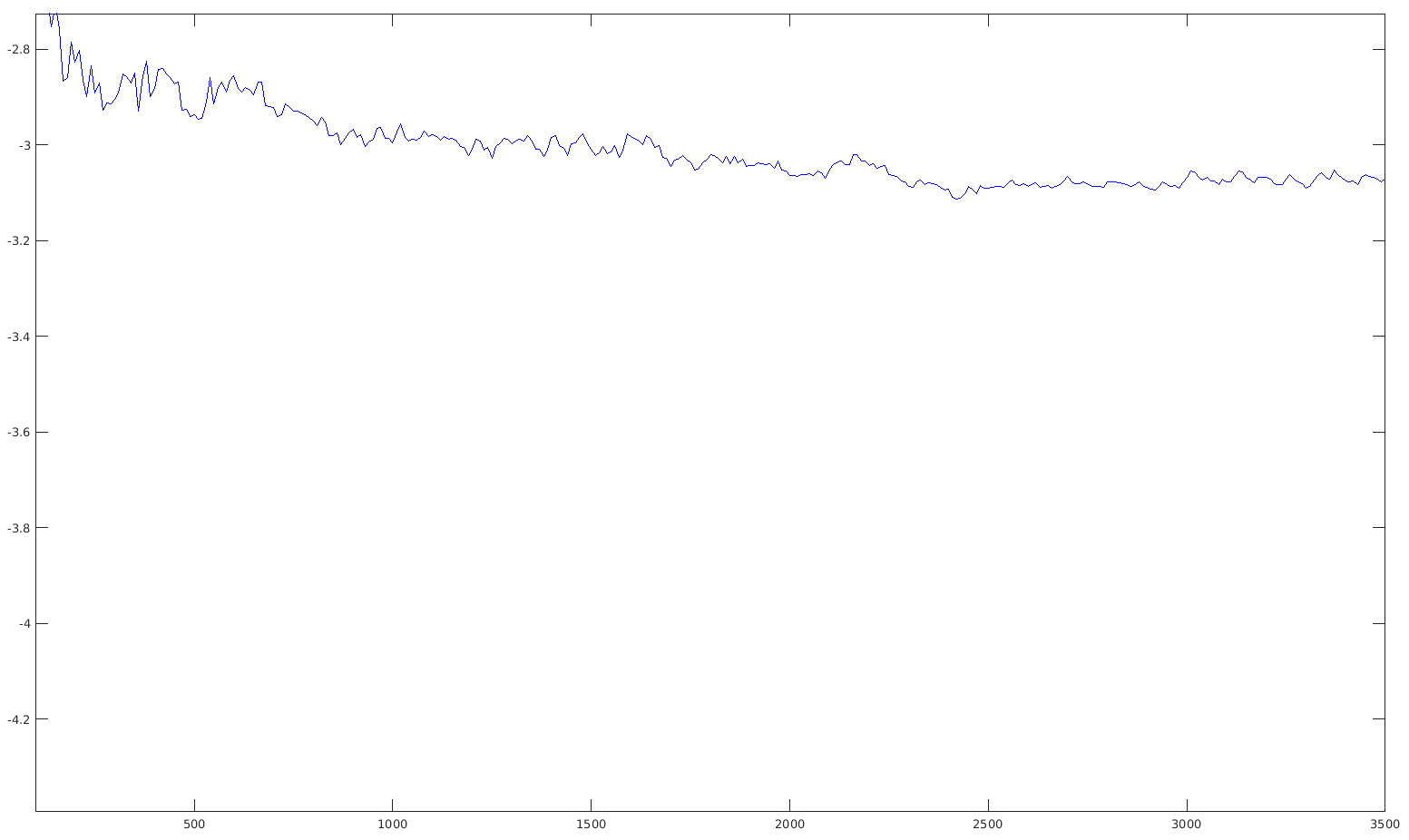}
 % HaltonGreenEnergy.eps: 0x0 px, 300dpi, 0.00x0.00 cm, bb=
  \caption{The graphic shows the evolution of the Green energy divided by $N^{3/4}$ for HArDiSh -- generated points, here $N=k*10$ for $k\in\{10,\ldots,350\}$. The boundaries for the y-axis  are chosen to be our theoretical bounds.}
  \label{fig:GreenEnergy}
\end{figure}


\begin{thebibliography}{10} 


\bibitem{Stegun} M. Abramowitz, I. A. Stegun: \emph{Handbook of Mathematical Functions With Formulas, Graphs, and Mathematical Tables}.
Department of Commerce (USA), National Bureau of Standards, Applied Mathematics Series 55 (1972).

\bibitem{ASZ14} K.~Alishashi, M.~S. Zamani: \emph{The spherical ensemble and uniform 
	distribution of points on the sphere}. Electron. J. Probab. 20, no. 23, 
27 pp (2015).

\bibitem{Arvo} J. Arvo: \emph{Fast Random Rotation Matrices}. Graphics Gems III (Edited by David Kirk), AP Professional (2012).

\bibitem{Aubin} T. Aubin: \emph{Some Nonlinear Problems in Riemannian Geometry}. Springer
Monographs in Mathematics, Springer Berlin Heidelberg (1998).

\bibitem{Juan} C. Beltrán, J. G. Criado del Rey and N. Corral: \emph{Discrete and Continuous Green Energy on Compact Manifolds}. J. Approx. Theory 237, pp 160–185 (2019).


\bibitem{Etayo} C. Beltrán and U. Etayo: \emph{The Projective Ensamble and Distribution of Points in 
	Odd-Dimensional Spheres}. Constr. Approx.  48, no. 1, pp 163–182 (2018).


\bibitem{Beltran} C. Beltrán, J. Marzo and J. Ortega-Cerdà: \emph{Energy and discrepancy of rotationally
invariant determinantal point processes in high dimensional spheres}. J. Complexity 37, pp 76-109 (2016).

\bibitem{GAF} J. Ben Hough, M. Krishnapur, Y. Peres, V. Virág: \emph{Zeros of Gaussian Analytic
	Functions and Determinantal Point Processes}. American Mathematical Society, Providence, RI (2009). 

\bibitem{Criado} J. G. Criado del Rey: \emph{On the Separation Distance of Minimal Green Energy Points on Compact Riemannian Manifolds}. arXiv:1901.00779v1 (2019).


\bibitem{Diakonis} P. Diaconis and M. Shahshahani: \emph{The subgroup algorithm for generating uniform random variables}. Prob. Eng. Inf. Sc. 1, pp 15–32  (1987).

\bibitem{DoCarmo} M. P. do Carmo: \emph{Riemannian Geometry}. Birkh\"{a}user Boston (1992).


\bibitem{Dette} H. Dette: \emph{New identities for orthogonal polynomials on a compact interval}.
J. Math. Anal. Appl. 179, pp 547-573  (1993). 


\bibitem{Engo} K. Engø: \emph{On the BCH-Formula in so(3)}. BIT Numerical Mathematics 41, pp 629-632, https://doi.org/10.1023/A:1021979515229 (2001).

	\bibitem{Evans} L. C. Evans: \emph{Partial Differential Equations}. American Mathematical Society, 2nd Edition (2010).
	
\bibitem{Gradshteyn} I. S. Gradshteyn, I. M. Ryzhik, A. Jeffrey, D. Zwillinger: \emph{Table of Integrals, Series, and Products}.
Academic Press; 6th edition (2000).

\bibitem{Schmid} T. Hangelbroek, D. Schmid: \emph{Surface Spline Approximation on SO(3)}. 
Appl. Comput. Harmon. Anal. Volume 31, Issue 2, pp 169-184 (2011).

\bibitem{Huynh} Du Q. Huynh: \emph{Metrics for 3D Rotations: Comparison and Analysis}.
J Math Imaging Vis 35, pp 155-164  (2009). 

\bibitem{Joshi} A. W. Joshi: \emph{Elements of Group Theory for Physicists}. Wiley Eastern Private Limited, New Delhi, (1973).

\bibitem{Jost} J. Jost: \emph{Riemannian geometry and geometric analysis} {(Sixth edition)}.
Universitext, Springer, Heidelberg (2011).


\bibitem{Lang} S. Lang: \emph{Introduction to Arakelov Theory}. Springer-Verlag New York (1988).

\bibitem{Novelia} A. Novelia and O. M. O'Reilley: \emph{On Geodesics of the Rotation Group SO(3)}.
Regular and Chaotic Dynamics, Vol. 20, No. 6, pp 729–738. Pleiades Publishing, Ltd., (2015).

\bibitem{Marzo} J. Marzo and J. Ortega-Cerdà: \emph{Expected Riesz energy of some determinantal 
processes on flat tori}. Constructive Approximation 47 (1), pp 75-88 (2018).

%\bibitem{Miles} R. E. Miles: \emph{On random rotations in $\R^3$}. 
%Biometrika, Vol. 52, No. 3/4, pp. 636-639  (1965).


\bibitem{Rider} B. Rider and B. Virág: \emph{Complex determinantal processes and $H^1$ noise}. 
Electronic Journal of Probability 12, pp 1238-57 (2007).



% \bibitem{EOT54} 
% A. Erd\'elyi, W. Magnus, F. Oberhettinger, F. G. Tricomi. \emph{Tables of 
% 	integral transforms}. Vol. II. McGraw-Hill Book Company, Inc., New 
% York-Toronto-London, 1954. 



% \bibitem{Lohofer} G. Loh\"{o}fer: \emph{Inequalities for Legendre Functions and Gegenbauer Functions}; 
% Journal of Approximation Theory 64, 226-234 (1991).


% 
% \bibitem{Van Assche} W. Van Assche, R. J. Y\'{a}\~{n}ez, Gonz\'{a}lez-F\'{e}rez, J. Dehesa: \emph{Functionals of Gegenbauer polynomials and D-dimensional hydrogenic momentum expectation values};
% Journal of Mathematical Physics 41, 6600 (2000). 


\bibitem{Shub} M. Shub and S. Smale: \emph{Complexity of Bezout's theorem II -- Volumes and
probabilities}. Computational algebraic geometry, pp 267–285,
Progr. Math., 109, Birkhäuser Boston, Boston, MA (1993).

\bibitem{Smale} S. Smale: \emph{Mathematical Problems for the Next Century}. Math. Intelligencer 20, No. 2, pp 7-15, (1998).


 \bibitem{Szego} G. Szeg\"{o}: \emph{Orthogonal Polynomials}. Amer. Math. Soc. (1939).


\bibitem{Vollrath} A. Vollrath: \emph{The Nonequispaced Fast SO(3) Fourier Transform,
	Generalisations and Applications} {(PhD-Thesis)}. University of Lübeck (2010).


\bibitem{Wigner} E. P. Wigner: \emph{Group Theory and its Application to the Quantum Mechanics of Atomic Spectra}. Academic Press, New York and London, (1959).

% \bibitem{Yershova} A. Yershova, S. Jain, S. M. Lavalle and J. C. Mitchell: \emph{Generating Uniform Incremental Grids on SO(3) Using the Hopf Fibration}. The International journal of robotics research, 29(7), pp. 801-812 (2010).


% \bibitem{Mason} J.C. Mason, D.C. Handscomb: \emph{Chebyshev Polynomials}, Chapman \& Hall/CRC, (2003).

% \bibitem{Wikipedia} Wikipedia: \url{ https://en.wikipedia.org/wiki/Bernoulli_polynomials }
\end{thebibliography}
\end{document}